\newtheorem{remark}{Remark}
\newtheorem{definition}{Definition}
\newtheorem{theorem}{Theorem}
\newtheorem{proposition}{Proposition}
\newtheorem{lemma}{Lemma}
\newtheorem{corollary}{Corollary}
\theoremstyle{definition}
\newtheorem{example}{Example}
\newenvironment{customlem}[1]
  {\innercustomlem}
  {\endinnercustomlem}
\newenvironment{customthm}[1]
  {\innercustomthm}
  {\endinnercustomthm}
\begin{document}
\title{Proportional Fairness in Obnoxious Facility Location}

\author[1(\Letter )]{Alexander Lam\thanks{alexlam@cityu.edu.hk}}
\author[2]{Haris Aziz}
\author[3]{Bo Li}
\author[2]{Fahimeh Ramezani}
\author[2]{Toby Walsh}
\affil[1]{City University of Hong Kong}
\affil[2]{UNSW Sydney}
\affil[3]{Hong Kong Polytechnic University}
\date{}

\maketitle

\begin{abstract}
We consider the obnoxious facility location problem (in which agents prefer the facility location to be far from them) and propose a hierarchy of distance-based proportional fairness concepts for the problem. These fairness axioms ensure that groups of agents at the same location are guaranteed to be a distance from the facility proportional to their group size. We consider deterministic and randomized mechanisms, and compute tight bounds on the price of proportional fairness. In the deterministic setting, we show that our proportional fairness axioms are incompatible with strategyproofness, and prove asymptotically tight $\epsilon$-price of anarchy and stability bounds for proportionally fair welfare-optimal mechanisms. In the randomized setting, we identify proportionally fair and strategyproof mechanisms that give an expected welfare within a constant factor of the optimal welfare. Finally, we prove existence results for two extensions to our model.
\end{abstract}

\section{Introduction}
In the \emph{obnoxious facility location problem (OFLP)}, some undesirable facility such as a garbage dump or an oil refinery is to be located on a unit interval (i.e. the domain of locations is $[0,1]$), and the agents along the interval wish to be as far from the facility as possible~\citep{FLS+20,CYZ11a,IbNa21a,CHYZ19}. In this problem, agents have single-dipped preferences, contrasting with the single-peaked preferences of agents in the classic facility location problem (in which agents prefer to be located as close as possible to the facility). 

The obnoxious facility location problem models many real-world facility placements which negatively impact nearby agents, such as a prison or a power plant~\citep{ChDr22}. Aside from the geographic placement of an obnoxious facility, the OFLP can also be applied to various collective decision making problems. For instance, when agents are averse to their worst possible social outcomes (represented by their locations), the problem captures issues where a decision needs to be made on a social policy or a budget composition. When a socially sensitive or a politically undesirable policy needs to be implemented, the placements of such a policy in the space of outcomes may need to take equity considerations.

It is known that placing the facility at one of the interval endpoints maximizes the sum of agent distances \citep{CYZ13}, but such a solution may not be `proportionally fair' for the agents. To build intuition, consider the instance depicted in Figure~\ref{fig:example}. The optimal utilitarian solution (which maximizes the sum of agent distances) places the facility at $0$, disproportionately disadvantaging the agents at $0.1$ who are located only $0.1$ distance from the facility. A facility location of $0.45$ results in both groups of agents having the same distance from the facility, and would be considered to be more `fair' in the egalitarian sense. However, it is not proportionally fair: despite having over twice as many agents, the group of agents at $0.8$ have the same distance from the facility as the group of agents at $0.1$. A proportionally fair solution places the facility at $0.3$, and results in the distance between a group of agents and the facility being proportional to the size of the group.

\begin{figure}[h!]
 	  	 \begin{center}  
   		      	             \begin{tikzpicture}[scale=1.0]
   			   	      	                 \centering
   	   	      	                 \draw[-] (0,0) -- (10,0);
								 
   								   \draw[-] (0,0) -- (0,0.25);
   								    \draw[-] (1,0) -- (1,0.25);
   									    \draw[-] (2,0) -- (2,0.25);
   										 \draw[-] (3,0) -- (3,0.25);
   										  \draw[-] (4,0) -- (4,0.25);
    \draw[-] (5,0) -- (5,0.25);
     \draw[-] (6,0) -- (6,0.25);
     \draw[-] (7,0) -- (7,0.25);
     \draw[-] (8,0) -- (8,0.25);
     \draw[-] (9,0) -- (9,0.25);
     \draw[-] (10,0) -- (10,0.25);
  
     \draw (0,-.4) node(c){\small $0$};
       \draw (1,-.4) node(c){\small $0.1$};
   	    \draw (2,-.4) node(c){\small $0.2$};
   		    \draw (3,-.4) node(c){\small $0.3$};
   			 \draw (4,-.4) node(c){\small $0.4$};
   			 	 \draw (5,-.4) node(c){\small $0.5$};
   			 	 \draw (6,-.4) node(c){\small $0.6$};
   			 	 \draw (7,-.4) node(c){\small $0.7$};
   			 	 \draw (8,-.4) node(c){\small $0.8$};
   			 	 \draw (9,-.4) node(c){\small $0.9$};
       \draw (10,-.4) node(c){\small $1$};
	
   	  \draw (1,0.6) node(c){\small $\text{x}$};
   	  \draw (1,0.8) node(c){\small $\text{x}$};

   				   \draw (8,0.6) node(c){\small $\text{x}$};
   				   	   \draw (8,0.8) node(c){\small $\text{x}$};
   				   	   \draw (8,1) node(c){\small $\text{x}$};
   				   	   \draw (8,1.2) node(c){\small $\text{x}$};
   				   	   \draw (8,1.4) node(c){\small $\text{x}$};
   						   \draw (0,1.5) node(c){\small $f^*_{UW}$};
   						   \draw (3,1.5) node(c){\small 2-UFS};
   						    \draw (4.5,1.5) node(c){\small $f^*_{EW}$};
   							\draw (0,0) node(c)[circle,fill,inner sep=1.5pt]{};
   							\draw (4.5,0) node(c)[circle,fill,inner sep=1.5pt]{};
   							\draw (3,0) node(c)[circle,fill,inner sep=1.5pt]{};
   			   	      	  \end{tikzpicture}
   			   	       	\end{center}
   			   	      	 \caption{OFLP with agent location profile $(0.1,0.1,0.8,0.8, 0.8,0.8)$ represented by \textmd{x}. The facility locations (represented by \textbullet) correspond to a utilitarian outcome, $f^*_{UW}=0$; a proportionally fair outcome, $\textmd{2-UFS}=0.3$; and an egalitarian outcome, $f^*_{EW}=0.45$.}
   			   	      	\label{fig:example}
   			   	      	\end{figure}
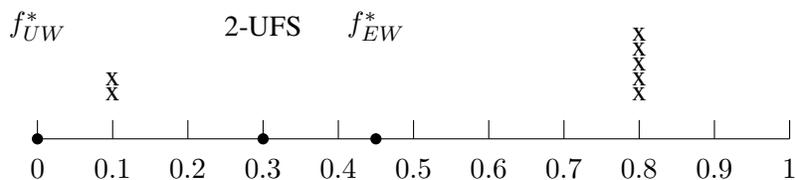

In this work, we pursue notions of \emph{proportional fairness} as a central concern for the problem. Specifically, we formulate a hierarchy of proportional fairness axioms which guarantee that each group of agents at the same location are a distance from the facility proportional to the relative size of the group. While proportional fairness axioms have been formulated and studied in the classic facility location problem \citep{ALLW21}, they have not yet been applied to the OFLP. Our paper provides a comprehensive overview of proportionally fair solutions for the obnoxious facility location problem, examining the interplay between proportional fairness and utilitarian/egalitarian welfare, and investigating concerns of agent strategic behaviour in both the deterministic and randomized settings.

\paragraph{Contributions}
\begin{itemize}
\item We formalize (approximate) proportional fairness concepts such as 2-Individual Fair Share (2-IFS) and 2-Unanimous Fair Share (2-UFS) in the context of the obnoxious facility location problem. Several of the definitions are natural adaptations of axioms from fair division and participatory budgeting. 

\item We find tight bounds on the price of 2-IFS and 2-UFS fairness for the objectives of egalitarian and utilitarian welfare, in both the deterministic and randomized settings.

\item We prove that our proportional fairness axioms are incompatible with strategyproofness in the deterministic setting, and give strategyproof randomized mechanisms that satisfy these proportional fairness axioms in expectation and either have a constant approximation ratio for utilitarian welfare or are optimal for egalitarian welfare.

\item For the deterministic mechanisms that maximize utilitarian welfare under the constraints of 2-IFS and 2-UFS, we prove that a pure $\epsilon$-Nash equilibrium always exists. We then find asymptotically tight linear bounds on the corresponding $\epsilon$-price of anarchy, as well as asymptotically tight constant bounds on the corresponding $\epsilon$-price of stability.

\item Finally, we give two possible extensions of our model: the fairness axiom of 2-Proportional Fairness (2-PF), which is stronger than 2-UFS as it captures proportional fairness concerns for groups of agents near but not necessarily at the same location, and the hybrid model, which also includes `classic' agents who instead want to be near the facility. We prove existence results for both extensions. 
\end{itemize}

Table~\ref{tab:results} summarizes some of our results. Results lacking proofs are proven in the appendix.
  \begin{table}[]
   			   	      	\centering
 			    			
 \caption{Price of fairness and welfare approximation results.}
\begin{tabular}{lllll}
\cline{3-5}
                               &                      & \multicolumn{2}{l}{Price of Fairness} & \multirow{2}{*}{\begin{tabular}[c]{@{}l@{}}Best Known\\ 2-UFS SP Approx.\end{tabular}} \\
                               &                      & 2-IFS            & 2-UFS              &                                                                                                     \\ \hline
\multirow{2}{*}{Deterministic} & Utilitarian Welfare & 2 (Thm.~\ref{thm: 2-IFS UW})               & 2 (Thm.~\ref{thm: 2-UFS UW})                 & \textbf{Incompatible}                                                                                        \\ \cline{3-4}
                               & Egalitarian Welfare & 1 (Prop.~\ref{prop:Po2IFSEW})               & $n-1$ (Thm.~\ref{thm: 2-UFS EW})                & (Prop.~\ref{prop:detnofair})                                                                                                 \\ \hline
\multirow{2}{*}{Randomized}    & Utilitarian Welfare & 12/11 (Cor.~\ref{cor:PoF2IFSUW})           & 1.094$\dots$ (Cor.~\ref{cor:PoF2UFSUW})            & 1.5 (Thm.~\ref{thm: mech2})                                                                                                 
                            \\ \cline{3-5} 
                               & Egalitarian Welfare & 1 (Prop.~\ref{prop:Po2IFSEW})               & 1 (Cor.~\ref{cor:PoF2UFSEW})                  & 1 (Prop.~\ref{prop:randEWSP})                                                                                                                                                                                                   \\
  \hline
\end{tabular}

 \label{tab:results}
\end{table}

  \paragraph{Related Work}

The papers most relevant to our research are those that treat the facility as obnoxious: agents prefer the facility to be as far from them as possible. Similar to the classical facility location problem, early operations research on the OFLP apply an optimization approach to compute solutions; a review of these approaches is given by \citet{ChDr22}.  There are several papers on the obnoxious facility location problem that apply a mechanism design approach, assuming agents' location are private information. This was initiated by \citet{CYZ11a,CYZ13}, who define an agent's utility as its distance from the facility, and design strategyproof mechanisms which approximate the optimal utilitarian welfare on the path and network metrics. Other recent examples of related papers include~\citep{CHYZ19,FLS+20,IbNa21a,XLLD21}. These papers do not pose or study the fairness concepts that we explore in this paper. 

Notions of fairness in various collective decision problems have been widely explored over the last few decades \citep{Moul03,Nash50,Shap53}. Fairness objectives specifically relevant to the facility location problem include maximum cost/egalitarian welfare (see, e.g. \citep{PrTe13,WWLC21}) and maximum total/average group cost \citep{ZLC22}. Rather than optimize/approximate fairness objectives, we focus on solutions enforcing proportional fairness axioms, in which groups of agents with similar or identical preferences/locations) have a minimum utility guarantee relative to the group size. The axioms of proportional fairness that we present stem from several related areas of social choice. Individual Fair Share (IFS) is closely related to the axiom of proportionality proposed by \citet{Stei48a}, and appears in participatory budgeting along with Unanimous Fair Share (UFS) \citep{BMS05,ABM19}. Asll of our proportional fairness axioms have been studied in the classical facility location problem by \citet{ALLW21}.
 
In our paper, we also analyse the loss of efficiency, defined as the price of fairness, of implementing the proportional fairness axioms that we have proposed. The price of fairness has been studied for some variations of the facility location problem, such as when there is a lexicographic minimax objective \citep{BKJ14}, or when facilities have preferences over subsets of agents, and the fairness is observed from the facilities' perspectives \citep{WaZh21}. Many recent results on price of fairness have also been found in various social choice contexts, such as fair division and probabilistic social choice \citep{BBS20,CKKK12,BLMS21,BFT11a}.

As strategyproofness is impossible in our deterministic setting, we present results on the existence of pure Nash equilibria, and the prices of anarchy and stability. Similar models where such results are proven include a variation of the Hotelling-Downs model \citep{FFO16}, and two-stage facility location games where both facilities and clients act strategically \citep{KLMS21}. In the classic facility location problem, \citet{ALLW21} characterize the pure Nash equilibria of strictly monotonic facility location mechanisms satisfying UFS and show that the resulting equilibrium facility location is also guaranteed to satisfy UFS. For certain mechanisms in our setting, a pure Nash equilibrium may not exist, so we prove the existence of the approximate \emph{pure $\epsilon$-Nash equilibrium}. Examples of papers applying this notion to other settings include \citep{ChSi11,MSA15}.

The second half of our paper focuses on the randomized setting to overcome the incompatibility with strategyproofness. The use of randomized mechanisms to overcome impossibility results is prevalent in many social choice contexts (see, e.g., \citep{Bran17,Aziz19}). Additionally, \citet{ALSW22} use a randomized approach in the classic facility location problem to achieve stronger notions of proportional fairness, providing a unique characterization of universally anonymous and universally truthful mechanisms satisfying an axiom called Strong Proportionality. The use of randomized mechanisms also results in better approximation ratio/price of fairness bounds. This is common in many variants of the facility location problem, such as when agents have fractional or optional preferences \citep{FLL+18,CFL+20}, or in the hybrid facility location model \citep{FLS+20}.

\section{Model}
Let $N=\{1, \ldots, n\}$ be a set of agents, and let $X:=[0,1]$ be the domain of locations.\footnote{Our results can be naturally extended to any compact interval on $\mathbb{R}$.} Agent $i$'s location is denoted by $x_i\in X$; the profile of agent locations is denoted by $x=(x_1, \ldots, x_n)\in X^n$. We also assume the agent locations are ordered such that $x_1\leq \dots \leq x_n$.
A \emph{deterministic mechanism} is a mapping $f\ : \ X^n\rightarrow X$ from a location profile $\hat{x}\in X^n$ to a facility location $y\in X$. Given a facility location $y\in X$, agent $i$'s utility\footnote{This definition is consistent with \citep{CYZ13}.} is equal to its distance from the facility $u(y,x_i):=|y-x_i|$. We are interested in maximizing the objectives of \emph{Utilitarian Welfare} (UW), defined for a facility location $y$ and location profile $x$ as the sum of agent utilities $\sum_i u(y,x_i)$, and \emph{Egalitarian Welfare} (EW), defined as the minimum agent utility $\min_i u(y,x_i)$.

Note that the preferences in OFLP can be viewed as \emph{single-dipped}, contrasting with the \emph{single-peaked preferences} of the classical facility location problem (FLP). The underlying model of both FLP and OFLP is the same except that the agents' preferences have a different structure. Unless specified otherwise, we will state results for the obnoxious facility location problem (OFLP).

\section{Proportional Fairness Axioms}
In this section, we introduce proportional fairness axioms for the obnoxious facility location problem.
				\subsection{Individual Fair Share}
			We first present an adaptation of Individual Fair Share (IFS), the weakest of our proportional fairness axioms (as studied by \citet{ALLW21} in the context of the classic facility location problem). IFS provides a minimum distance guarantee between each agent and the facility, requiring that each agent has at least $\frac{1}{n}$ utility. By placing two agents at $\frac{1}{4}$ and $\frac{3}{4}$, it is easy to see that an IFS solution may not exist. As a result, we turn to approximations of IFS.
				\begin{definition}[$\alpha$-Individual Fair Share (IFS)]\label{def: IFS}
				Given a profile of locations $x$, a facility location $y$ satisfies \emph{$\alpha$-Individual Fair Share ($\alpha$-IFS)} if $$u(y, x_i)\ge \frac{1}{\alpha n} \quad \forall i\in N.$$
				\end{definition}
				
		We find that the lowest value of $\alpha$ such that an $\alpha-$IFS solution always exists is $\alpha=2$. Intuitively, with $\alpha=2$, each agent has an open interval of radius $\frac{1}{2n}$ around its location. The sum of interval lengths is $1$, meaning there will always be a 2-IFS solution. For any $\alpha<2$, an $\alpha-$IFS solution may not always exist as the sum of interval lengths will exceed 1.
				\begin{proposition}\label{prop2IFS}
					The lowest value of $\alpha$ for which an $\alpha$-IFS solution always exists is $\alpha=2$.
					\end{proposition}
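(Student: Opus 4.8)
The plan is to prove the proposition in two parts: first that $\alpha = 2$ always suffices (existence), and then that no smaller $\alpha$ works (tightness). The informal argument in the excerpt already points the way, and the task is to make the ``ball-packing'' intuition precise.

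\textbf{Existence for $\alpha = 2$.} For each agent $i$, consider the open ball $B_i = \left(x_i - \frac{1}{2n}, x_i + \frac{1}{2n}\right)$, i.e.\ the set of facility locations that \emph{violate} the $2$-IFS guarantee for agent $i$. Each such ball has length exactly $\frac{1}{n}$, so the total length of the union $\bigcup_{i \in N} B_i$ is at most $\sum_{i \in N} |B_i| = n \cdot \frac{1}{n} = 1$. Since the domain $X = [0,1]$ has length $1$ and the union of the ``bad'' balls has measure at most $1$, I would argue that the set of points in $X$ \emph{not} covered by any ball is nonempty. The cleanest way to make this rigorous is a measure argument: if the bad set had full measure in $[0,1]$, the total ball length would have to be at least $1$, and strict overlap or the openness of the balls leaves room at the boundary. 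Concretely, because the balls are open and $[0,1]$ is closed, I would either (i) check that the endpoints $0$ and $1$ cannot both be covered unless agents sit arbitrarily close to them, or (ii) simply invoke that a finite union of open intervals of total length $\le 1$ cannot cover the closed interval $[0,1]$ of length $1$ (a covering would force total length strictly greater than $1$ once we account for the open endpoints). Either way there exists a facility location $y \in X$ lying outside every $B_i$, and such a $y$ satisfies $u(y, x_i) = |y - x_i| \ge \frac{1}{2n}$ for all $i$, which is precisely $2$-IFS.

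\textbf{Tightness: no $\alpha < 2$ suffices.} Here I would exhibit a single profile on which every facility location fails $\alpha$-IFS for any $\alpha < 2$. The natural candidate is the evenly spaced profile $x_i = \frac{2i-1}{2n}$ for $i = 1, \dots, n$, so that the agents sit at the centers of the $n$ subintervals partitioning $[0,1]$, and consecutive agents are exactly $\frac{1}{n}$ apart. For any proposed $y \in [0,1]$, $y$ falls in some subinterval $\left[\frac{i-1}{n}, \frac{i}{n}\right]$ whose midpoint is the location $x_i$, so the distance from $y$ to the nearest agent is at most $\frac{1}{2n}$. Thus $\min_i u(y, x_i) \le \frac{1}{2n}$, and to satisfy $\alpha$-IFS we would need $\frac{1}{2n} \ge \frac{1}{\alpha n}$, i.e.\ $\alpha \ge 2$. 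Hence for any $\alpha < 2$ this profile admits no $\alpha$-IFS solution, establishing that $\alpha = 2$ is the lowest value that works universally.

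\textbf{Anticipated obstacle.} The tightness half is routine once the right profile is chosen. The subtle point is the boundary handling in the existence half: the measure bound gives total bad length $\le 1$, which only barely fails to cover an interval of length $1$, so I must be careful not to claim a strict inequality that does not hold. The rigorous resolution is that the violating sets are \emph{open}, so their union is an open subset of $[0,1]$ of measure at most $1$; an open set cannot equal the closed interval $[0,1]$ (it omits at least the endpoints unless it spills outside $X$), and in fact a finite union of open intervals covering the compact set $[0,1]$ would admit a finite subcover whose lengths sum to strictly more than $1$. Pinning down this argument cleanly — rather than hand-waving ``sum of lengths is $1$ so something is left over'' — is the main thing to get right.
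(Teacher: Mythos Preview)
Your proof is correct and follows essentially the same ball-packing approach as the paper: both construct open balls of radius $\tfrac{1}{2n}$ around each agent and argue that their union cannot cover $[0,1]$. The paper handles the existence step by a two-case split (all balls disjoint vs.\ some overlap) where you invoke a cleaner measure/compactness argument, and for tightness the paper uses the specific $n=2$ profile $(\tfrac{1}{4},\tfrac{3}{4})$ rather than your general evenly spaced profile, but these are cosmetic differences.
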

				\begin{proof}
				Consider $n$ agents at ordered locations $x_1,\dots,x_n$. For each agent $i$, we construct an open interval $B_i$ with center $x_i$ and radius $\frac{1}{2n}$: $B_i=\{z|d(z,x_i)<\frac{|1|}{2n}\}$. Note that the sum of interval lengths is $1$.

There are two cases:
\begin{itemize}
\item $B_i\cap B_j=\emptyset$ for all $i\neq j$. As the sum of interval lengths is $1$, the boundaries of two consecutive intervals intersect, and thus the facility can be placed at the boundary of any interval.
\item $B_i\cap B_j\neq \emptyset$ for some $i\neq j$. In this case, the length of $B_1\cap \dots \cap B_n$ is less than $1$, hence there must be a region on $[0,1]$ that is not covered by any open interval. The facility can be placed within this region to achieve a $2-$IFS solution.
\end{itemize}
To see that an $\alpha-$IFS solution may not exist for $\alpha<2$, consider for $n=2$ the location profile $(\frac{1}{4},\frac{3}{4})$, in which the intersection of the intervals encompasses the entire unit interval.										
\end{proof}

A polynomial time 2-IFS mechanism (which we denote as $f^*_{2IFS}$) that maximizes the utilitarian welfare simply iterates through the endpoints of the intervals which satisfy the constraint and outputs the optimal facility location, breaking ties in favour of the leftmost optimal location.
		\subsection{Unanimous Fair Share}
		We now present Unanimous Fair Share (UFS), a strengthening and generalization of IFS to groups of agents at the same location. Informally, if there are $k$ agents at the same location, then UFS requires that the facility is placed at least $\frac{k}{n}$ distance from these agents. Under UFS, agents are not considered to be in the same group if they are very close but not exactly co-located. However, the co-location of agents often naturally arises in practice, such as when multiple citizens live in the same apartment building, or when considering populations of towns. Towards the end of the paper, we propose a stronger proportional fairness axiom which considers agents at near but not necessarily the same location to be part of the same group.
		
		Again, we focus on approximations of UFS as a UFS solution may not exist.
		
		\begin{definition}[$\alpha$-Unanimous Fair Share (UFS)]\label{def: UFS}
			Given a profile of locations $\boldsymbol{x}$,  a facility location $y$ satisfies \emph{$\alpha$-Unanimous Fair Share ($\alpha$-UFS)} if for any set of agents $S$ with identical location, $$u(y, x_i)\ge \frac{|S|}{\alpha n} \quad \forall i\in S.$$
		\end{definition}
		
		Note that $\alpha-$UFS implies $\alpha-$IFS. As with $\alpha-$IFS, we find that the optimal value of $\alpha$ for which an $\alpha$-UFS solution always exists is $\alpha=2$. The proof intuition is similar to that of Proposition~\ref{prop2IFS}, but the intervals vary in size depending on the number of agents in the group.

			\begin{proposition}\label{optUFS}
The lowest value of $\alpha$ for which an $\alpha$-UFS solution always exists is $\alpha=2$.
\end{proposition}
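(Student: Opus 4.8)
The plan is to establish the threshold in two directions: that a $2$-UFS solution is guaranteed to exist for every instance, and that for every $\alpha < 2$ there is an instance admitting no $\alpha$-UFS solution.

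For the impossibility direction (so that $\alpha$ cannot be lowered below $2$), I would exhibit a single worst-case instance: place all $n$ agents at the midpoint $\tfrac12$, so that $S = N$ is the unique maximal group with $|S| = n$. The UFS constraint then demands $|y - \tfrac12| \ge \tfrac{n}{\alpha n} = \tfrac1\alpha$ for the chosen facility $y$. Since the farthest any $y \in [0,1]$ can be from $\tfrac12$ is exactly $\tfrac12$ (attained at the two endpoints), a feasible $y$ exists if and only if $\tfrac1\alpha \le \tfrac12$, i.e. $\alpha \ge 2$. This one instance simultaneously rules out every $\alpha < 2$ and certifies that the value $2$ is tight.

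For existence when $\alpha = 2$, let the distinct occupied locations be $\ell_1 < \dots < \ell_m$ with group sizes $k_1, \dots, k_m$, where $\sum_j k_j = n$; it suffices to satisfy the constraint for these maximal groups, since any subgroup at the same location imposes a weaker requirement. A facility $y$ violates $2$-UFS for group $j$ exactly when $y$ lies in the open interval $I_j := (\ell_j - \tfrac{k_j}{2n},\, \ell_j + \tfrac{k_j}{2n})$, whose length is $\tfrac{k_j}{n}$. Mirroring the ball argument behind Proposition~\ref{prop2IFS}, the total forbidden length is $\sum_j |I_j| = \tfrac1n \sum_j k_j = 1$, and it remains to show that these open intervals cannot cover all of $[0,1]$, so that some feasible $y$ survives.

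This last step is the main obstacle, because the total forbidden length equals the length of $[0,1]$ exactly, so a crude subadditivity bound only yields $\lambda(\bigcup_j I_j) \le 1$ (where $\lambda$ denotes length/Lebesgue measure) and does not by itself expose an uncovered point. I would resolve it by combining openness with compactness. Suppose for contradiction that the $I_j$ cover $[0,1]$. Then the point $0$ lies in some $I_j$, forcing $\ell_j - \tfrac{k_j}{2n} < 0$ and hence a sub-interval of $I_j$ of positive length lying strictly to the left of $0$; symmetrically, covering $1$ forces positive length strictly to the right of $1$. These two wasted pieces lie outside $[0,1]$ and are mutually disjoint, so $\sum_j |I_j| \ge \lambda(\bigcup_j I_j) > \lambda([0,1]) = 1$, contradicting $\sum_j |I_j| = 1$. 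Therefore $[0,1] \setminus \bigcup_j I_j \neq \emptyset$, and any point in this set is a $2$-UFS facility location, which completes the argument.
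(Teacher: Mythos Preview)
Your proof is correct and follows the same approach as the paper: the midpoint counterexample for the lower bound, and the open-ball covering argument (intervals $I_j$ of total length $1$) for existence. The only minor difference is that where the paper splits into cases on whether the balls are pairwise disjoint, you argue directly by contradiction that covering both endpoints $0$ and $1$ would force the union to have Lebesgue measure strictly greater than $1$; this is arguably cleaner but equivalent in spirit.
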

\begin{proof}
Consider $n$ agents at $m$ unique ordered locations $x_1,\dots,x_m$, and for $i\in [m]$, let $S_i$ denote the group of agents at location $x_i$. For each $S_i$, we construct an open interval $B_i$ with center $x_i$ and radius $\frac{|S_i|}{2n}$: $B_i=\{z|d(z,x_i)<\frac{|S_i|}{2n}\}$. Note that the sum of interval lengths is $\sum^m_{i=1}\frac{|S_i|}{n}=1$.

There are two cases:
\begin{itemize}
\item $B_i\cap B_j=\emptyset$ for all $i\neq j$. As the sum of interval lengths is $1$, the boundaries of two consecutive intervals intersect, and thus the facility can be placed at the boundary of any interval.
\item $B_i\cap B_j\neq \emptyset$ for some $i\neq j$. In this case, the length of $B_1\cap \dots \cap B_m$ is less than $1$, hence there must be a region on $[0,1]$ that is not covered by any interval. The facility can be placed within this region to achieve a $2-$UFS solution.
\end{itemize}
To see that an $\alpha-$UFS solution may not exist for $\alpha<2$, place all $n$ agents at location $\frac{1}{2}$.
\end{proof}

			Similar to $f^*_{2IFS}$, a polynomial time mechanism (which we denote as $f^*_{2UFS}$) that computes the optimal 2-UFS facility location for utilitarian welfare iterates through the endpoints of the intervals satisfying 2-UFS and outputs the optimal facility location, breaking ties in favour of the leftmost optimal location.
			
\section{Deterministic Setting}
We begin with the deterministic setting, analyzing the price of proportional fairness and agent strategic behaviour. All results stated in this section are for the deterministic setting.		
\subsection{Price of Fairness}
			In this section, we analyze the price of fairness for our (approximate) fairness axioms.\footnote{The price of fairness can also be interpreted as the approximation ratio for the respective optimal mechanism satisfying the fairness constraint.} Informally, the price of fairness measures the loss of efficiency from imposing a certain fairness constraint. We focus on the objectives of utilitarian and egalitarian welfare, defined as the sum of utilities and the minimum agent utility, respectively. 
			
			A \emph{fairness property} $P$ is a mapping from an agent location profile $x\in X^n$ to a (possibly empty) set of facility locations $P(x)\in X$. Every facility location $P(x)$ satisfies the fairness property $P$. The price of fairness for property $P$ is the worst-case ratio between the optimal welfare and the optimal welfare from a facility location satisfying $P$.
			\begin{definition}	 [Price of Fairness for Utilitarian/Egalitarian Welfare]
		
Let \{$f^*_{UW}$,$f^*_{EW}\}$ be the mechanism that returns the solution maximizing utilitarian/egalitarian welfare. For Utilitarian/Egalitarian Welfare and fairness property $P$, we define the price of fairness as the worst-case ratio (over all location profiles) between the optimal Utilitarian/Egalitarian Welfare and the optimal Utilitarian/Egalitarian Welfare achieved by a facility location satisfying $P$:
$$\max_{x\in [0,1]^n}\frac{W(f^*(x),x)}{\max_{y\in P(x)}W(y,x)}.$$
For Utilitarian Welfare, $f^*(x):=f^*_{UW}(x)$ and $W(y,x):=\sum_i u(y,x_i)$.

\noindent For Egalitarian Welfare, $f^*(x):=f^*_{EW}(x)$ and $W(y,x):=\min_i u(y,x_i)$.
\end{definition}
			 \subsubsection{Utilitarian Welfare}
		The utilitarian welfare (UW) of an instance is a standard measure of efficiency. Finding the price of our proportional fairness axioms for utilitarian welfare quantifies the impact on efficiency when the OFLP system is constrained to be proportionally fair.
		
We now move to compute the prices of 2-IFS and 2-UFS fairness for utilitarian welfare. Recall that the solution maximizing utilitarian welfare must be either $0$ or $1$ \citep{CYZ13}. To prove the price of fairness lower bounds, we place the agents such that the only feasible 2-IFS/UFS solution lies in the optimal median interval (see, e.g. Figure~\ref{fig:UW}).

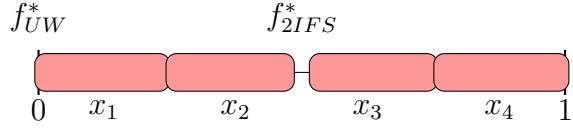
\begin{figure}
\centering
\begin{tikzpicture}
			\draw[] (-3.5,0) -- (3.5,0); 
			\draw [thick] (-3.5,0.3)  -- (-3.5,-0.3);  
			\draw [thick] (3.5,0.3)  -- (3.5,-0.3); 
		
			\node at (-3.5, -0.5) {$0$}; 
			\node at (3.5, -0.5) {$1$}; 
			\node at (-2.63, -0.5) {$x_1$}; 
			\node at (-0.88, -0.5) {$x_2$}; 
			\node at (0.88, -0.5) {$x_3$}; 
			\node at (2.63, -0.5) {$x_4$}; 
		
			\filldraw[fill=red!40, rounded corners](-3.55,-0.25) rectangle (-1.755,0.25); 
			\node at (-3.5, 0.73) {$f^*_{UW}$}; 
			\node at (0,0.73) {$f^*_{2IFS}$};
			\filldraw[fill=red!40, rounded corners](-1.805,-0.25) rectangle (-0.1,0.25); 
			\filldraw[fill=red!40, rounded corners](0.1,-0.25) rectangle (1.805,0.25);
			\filldraw[fill=red!40, rounded corners](1.755,-0.25) rectangle (3.55,0.25);
		\end{tikzpicture}
		\caption{The lower bound instance in the proof of Theorem~\ref{thm: 2-IFS UW} for $n=4$. $f^*_{UW}$ represents the utilitarian welfare maximizing facility placement, whilst $f^*_{2IFS}$ maximizes utilitarian welfare under the constraints of 2-IFS. The red intervals denote locations that are infeasible under 2-IFS.}
				\label{fig:UW}
\end{figure}

\begin{theorem}\label{thm: 2-IFS UW}
The price of 2-IFS for utilitarian welfare is 2, and this bound is tight.
\end{theorem}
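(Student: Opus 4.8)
The lower bound of $2$ is already established in Lemma~\ref{lem: 2-IFS UW}, so the plan is to prove the matching upper bound: for every profile $x$ there is a $2$-IFS location whose utilitarian welfare is at least half of the optimum. Write $g(y):=\sum_i u(y,x_i)=\sum_i|y-x_i|$, and recall that $g$ is convex, that $g(0)+g(1)=n$, and that the optimum $\mathrm{OPT}=\max(g(0),g(1))\ge n/2$ is attained at an endpoint \citep{CYZ13}. Using the reflection $x\mapsto(1-x_1,\dots,1-x_n)$ I may assume $\mathrm{OPT}=g(0)=\sum_i x_i\ge n/2$. If $0$ is itself $2$-IFS feasible we are done, so suppose it is not, and let $c$ be the smallest feasible location (which exists, since the feasible set is nonempty by Proposition~\ref{prop2IFS} and closed). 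Then $[0,c)$ is entirely infeasible, i.e.\ covered by the radius-$\frac{1}{2n}$ balls of the agents lying in it; since $c$ is feasible, no agent lies in $(c-\tfrac{1}{2n},c+\tfrac{1}{2n})$, so the covering agents are exactly the $k:=|\{i:x_i<c\}|$ agents to the left of $c$.

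The plan is then to compare $g(c)$ with $g(0)$ directly. Splitting the sum at $c$ yields the identity
$$g(0)-g(c)=2\sum_{x_i<c}x_i+c(n-2k),$$
so it suffices to bound $\sum_{x_i<c}x_i$ from above, equivalently to bound $\sum_{x_i<c}(c-x_i)$ from below. Here is the crux: because $k$ balls of radius $\frac{1}{2n}$ must cover the length-$c$ interval $[0,c)$, their centres cannot be pushed too far to the right, and I will argue that the extremal (sum-minimizing) configuration is the evenly spread covering with centres at $\frac{1}{2n},\frac{3}{2n},\dots$, giving the covering bound $\sum_{x_i<c}(c-x_i)\ge \frac{nc^2}{2}$.

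Granting this, substituting $\sum_{x_i<c}x_i=kc-\sum_{x_i<c}(c-x_i)\le kc-\frac{nc^2}{2}$ into the identity telescopes cleanly to
$$g(0)-g(c)\le 2\Bigl(kc-\tfrac{nc^2}{2}\Bigr)+c(n-2k)=nc(1-c)\le \tfrac{n}{4}\le \tfrac{g(0)}{2},$$
using $c(1-c)\le\frac14$ and $g(0)\ge n/2$. Hence $g(c)\ge\tfrac12 g(0)=\tfrac12\,\mathrm{OPT}$, and since $c$ is $2$-IFS this shows the price of $2$-IFS for utilitarian welfare is at most $2$; combined with Lemma~\ref{lem: 2-IFS UW} it equals $2$.

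I expect the main obstacle to be making the covering bound $\sum_{x_i<c}(c-x_i)\ge\frac{nc^2}{2}$ fully rigorous: one must show that among all placements of balls covering $[0,c)$ the sum of the centres' distances to $c$ is minimized by the greedy rightmost (evenly spaced) covering, handling bunched or redundant agents, and check the boundary/rounding effects that appear when $cn$ is not an integer. Everything else—the convexity reduction to the extreme feasible point and the telescoping computation above—is routine; as a consistency check, the tight instance of Lemma~\ref{lem: 2-IFS UW} saturates every inequality in the chain (there $k=n/2$, $c=\tfrac12$, and $\sum_{x_i<c}(c-x_i)=\tfrac{n}{8}=\tfrac{nc^2}{2}$), which confirms that the constant $\tfrac{nc^2}{2}$ is the correct extremal value.
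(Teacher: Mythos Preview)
Your plan is sound and the covering bound $\sum_{x_i<c}(c-x_i)\ge\frac{nc^2}{2}$ does hold; here is one way to close the gap you flagged. Because $c$ is feasible and $[0,c)$ is not, every left agent satisfies $x_{(j)}\le c-\tfrac{1}{2n}$, and the union of their open radius-$\tfrac{1}{2n}$ balls is exactly the open interval $(x_{(1)}-\tfrac{1}{2n},\,c)$: every ball centre lies in this interval, so no ball can belong to a different component. Connectedness forces $x_{(1)}<\tfrac{1}{2n}$ and $x_{(j)}-x_{(j-1)}<\tfrac{1}{n}$, hence $x_{(j)}<\tfrac{2j-1}{2n}$ for all $j$; and since the leftmost ball overhangs $0$, the measure inequality gives $k>nc$ strictly. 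Writing $m=\lfloor nc\rfloor$, summing the bound $c-x_{(j)}>c-\tfrac{2j-1}{2n}$ for $j\le m$ and $c-x_{(j)}\ge\tfrac{1}{2n}$ for $j>m$ yields
\[
\sum_{j=1}^k(c-x_{(j)})\;>\;mc-\frac{m^2}{2n}+\frac{k-m}{2n}\;=\;\frac{nc^2}{2}+\frac{(k-m)-(nc-m)^2}{2n},
\]
and the last fraction is positive because $k-m\ge 1>(nc-m)^2$. The rest of your chain is then watertight.

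Your route is genuinely different from the paper's. The paper works with the \emph{optimal} 2-IFS location $y^*$ and splits on whether $y^*$ sits weakly left of the median agent ($k\le n/2$) or strictly right of it. In the first case it argues that the worst profile must have the left agents packed at $\frac{2i-1}{2n}-i\epsilon$ and computes the resulting ratio explicitly; in the second case it designs a sequence of profile transformations (shifting the agent just left of $y^*$ across to the right) that each increase the welfare ratio and eventually land in Case~1. Your argument is more analytic and economical: you never need to locate $y^*$, only the leftmost feasible point $c$, and a single covering lemma replaces both the extremal-profile characterisation and the reduction step. What the paper's approach buys in return is that it keeps the tight instance of Lemma~\ref{lem: 2-IFS UW} visible throughout and reuses essentially the same machinery for the 2-UFS bound (Theorem~\ref{thm: 2-UFS UW}), whereas your covering lemma is specific to equal-radius balls.
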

\begin{proof}[Lower Bound Proof]
Suppose $n$ is even, and that the agents are located at $\frac{1}{2n}-\epsilon$, $\frac{3}{2n}-2\epsilon$, $\dots$, $\frac{n-1}{2n}-\frac{n}{2}\epsilon$, $\frac{n+1}{2n}+\frac{n}{2}\epsilon$, $\dots$, $\frac{2n-3}{2n}+2\epsilon$, $\frac{2n-1}{2n}+\epsilon$ for some sufficiently small $\epsilon$ (see, e.g. Figure~\ref{fig:UW}). Under this symmetric profile, either a facility location of $0$ or $1$ leads to the maximum utilitarian welfare of $\frac{n}{2}$. The only facility locations satisfying 2-IFS are within the interval $[\frac12-\frac{n}{2}\epsilon, \frac12+\frac{n}{2}\epsilon]$. Any location in this interval gives the same utilitarian welfare as there are an equal number of agents on both sides, so suppose the facility is at $\frac{1}{2}$. This corresponds to a utilitarian welfare of $\frac{n}{4}+\epsilon n(1+\frac{n}{2})$. Taking the limit $\epsilon\rightarrow 0$ gives a ratio of $2$.
\end{proof}
\begin{theorem}\label{thm: 2-UFS UW}
The price of 2-UFS for utilitarian welfare is 2, and this bound is tight.
\end{theorem}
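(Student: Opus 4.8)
The plan is to prove the two directions separately, reusing the deterministic machinery already developed. For the lower bound I would invoke the instance from Lemma~\ref{lem: 2-IFS UW}: since every agent in that construction occupies a distinct location, each group $S$ appearing in Definition~\ref{def: UFS} is a singleton, so there the 2-UFS constraint coincides exactly with the 2-IFS constraint. Hence the same family of profiles confines any 2-UFS location to the central interval near $\tfrac12$ and drives the welfare ratio to $2$, certifying that the price of 2-UFS for utilitarian welfare is at least $2$. (Abstractly this is unsurprising: $\alpha$-UFS implies $\alpha$-IFS, so the 2-UFS-feasible set is contained in the 2-IFS-feasible set and the price can only increase; the point of the explicit instance is that the two sets there are literally equal.)

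For the matching upper bound I would show that some 2-UFS location always recovers at least half of the optimal welfare. Write $\mathrm{UW}(y)=\sum_i u(y,x_i)$ and let $F\subseteq[0,1]$ be the set of 2-UFS-feasible locations, which is closed and nonempty by Proposition~\ref{optUFS}. Reflecting the instance $x_i\mapsto 1-x_i$ if necessary, assume the optimum $D:=\max_y\mathrm{UW}(y)$ is attained at the right endpoint, so $D=\mathrm{UW}(1)=\sum_i(1-x_i)$; recall from \citep{CYZ13} that the optimum is always an endpoint, and observe that $D\ge n/2$. Let $r:=\max F$ be the rightmost feasible point. The heart of the argument is to control the welfare lost in moving the facility from $1$ to $r$, and I claim $D-\mathrm{UW}(r)\le n\,r(1-r)$.

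To prove the claim, split the agents at $r$, letting $n_L$ and $n_R$ count those with $x_i\le r$ and $x_i>r$ respectively. Expanding distances gives $D-\mathrm{UW}(r)=n_L(1-r)+n_R(1+r)-2\sigma_R$ with $\sigma_R=\sum_{x_i>r}x_i$. The structural input is a covering lower bound on $\sigma_R$. Because $r\in F$, no group's forbidden ball $\bigl(z-\tfrac{|S|}{2n},\,z+\tfrac{|S|}{2n}\bigr)$ contains $r$; combined with $r=\max F$, this forces every group located strictly to the right of $r$ to have its ball contained in $[r,\infty)$, while the union of these balls must cover all of $(r,1]$. Shifting coordinates by $r$ and integrating $t\mapsto t$ over the shifted balls $J_j\subseteq[0,\infty)$ yields $\sum_j\int_{J_j}t\,dt\ge\int_0^{1-r}t\,dt=\tfrac12(1-r)^2$ (the integrand is nonnegative, overlaps only inflate the left side, and the union contains $(0,1-r]$); since $\int_{J_j}t\,dt=\tfrac{k_j}{n}(z_j-r)$, this is exactly $\sum_{x_i>r}(x_i-r)\ge\tfrac{n}{2}(1-r)^2$, i.e. $\sigma_R\ge r\,n_R+\tfrac{n}{2}(1-r)^2$. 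Substituting and cancelling collapses the expression: $D-\mathrm{UW}(r)\le (n_L+n_R)(1-r)-n(1-r)^2=n(1-r)-n(1-r)^2=n\,r(1-r)$. As $r(1-r)\le\tfrac14$ and $D\ge n/2$, we conclude $D-\mathrm{UW}(r)\le n/4\le D/2$, hence $\mathrm{UW}(r)\ge D/2$ and the price is at most $2$.

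I expect the covering bound $\sum_{z_j>r}k_j(z_j-r)\ge\tfrac{n}{2}(1-r)^2$ to be the main obstacle. The subtlety is that one must use feasibility of $r$ in full strength: the condition that no ball straddles $r$ is precisely what prevents the blocking groups from concentrating their mass just to the right of $r$ while still reaching up to $1$, which is the only way the ratio could exceed $2$. Phrasing this as the integral inequality over balls confined to $[r,\infty)$ is the cleanest route, and the delicate part is the bookkeeping with half-lengths $\tfrac{k_j}{2n}$ and the exact cancellation down to $n\,r(1-r)$; the remaining steps are routine.
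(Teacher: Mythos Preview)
Your argument is correct, and it takes a genuinely different route from the paper's. For the lower bound both you and the paper simply recycle the distinct-location instance of Lemma~\ref{lem: 2-IFS UW}, so there is nothing to add there. For the upper bound, however, the paper argues by cases on where the optimal 2-UFS point $y^*$ sits relative to the median agent: when $y^*$ lies in $[x_k,x_{k+1}]$ with $k\le n/2$ it optimises the worst profile directly (collapsing the left $k$ agents to a single point at $\tfrac{k}{2n}-\epsilon$) to obtain the ratio $\tfrac{n/2}{k^2/n-k+n/2}\le 2$, and when $k>n/2$ it applies a sequence of agent-shifting transformations that weakly increase the ratio until the instance falls into the first case. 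Your proof bypasses this case split entirely: you fix the single feasible point $r=\max F$, and the covering inequality $\sum_{x_i>r}(x_i-r)\ge \tfrac{n}{2}(1-r)^2$---derived by integrating $t$ over the forbidden balls that must tile $(r,1]$ without straddling $r$---collapses the welfare loss to $n\,r(1-r)\le n/4\le D/2$ in one stroke. The integral trick is the real gain: it replaces the paper's somewhat delicate transformation bookkeeping with a clean measure-theoretic inequality, and it makes transparent \emph{why} the bound is $2$ (the quadratic $r(1-r)$ peaks at $\tfrac14$ exactly when $D$ bottoms out at $n/2$). The paper's approach, on the other hand, identifies the extremal profile explicitly, which your argument does not directly do.
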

As the price of fairness for utilitarian welfare is the same for both proportional fairness axioms, it may be desirable to implement 2-UFS in favour of 2-IFS when loss of utilitarian welfare is the primary concern.

			\subsubsection{Egalitarian Welfare}
			The egalitarian welfare (EW) is an alternate measure of fairness frequently observed in the literature, focusing on the worst off agent. Our price of fairness analysis gives an insight into the tradeoff between egalitarian welfare/maximin fairness and proportional fairness in the OFLP.
			
Our first result is that the price of 2-IFS is $1$, meaning that a mechanism that maximizes egalitarian welfare is guaranteed to satisfy 2-IFS. This follows from Proposition~\ref{prop2IFS}, which states that a 2-IFS solution (in which every agent obtains at least $\frac{1}{2n}$ utility) always exists.
			
 			\begin{proposition}\label{prop:Po2IFSEW}
The price of 2-IFS for egalitarian welfare is $1$.
\end{proposition}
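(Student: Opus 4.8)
The plan is to show that the egalitarian-welfare-maximizing facility location always satisfies 2-IFS, which immediately collapses the price-of-fairness ratio to $1$. Since the 2-IFS-feasible set is contained in the set of all facility locations, the denominator maximizes over a subset of the feasible region used in the numerator, so the ratio is at least $1$ whenever the feasible set is nonempty; the substance of the proposition is therefore the reverse inequality.

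First I would invoke Proposition~\ref{prop2IFS}, which guarantees that a 2-IFS solution always exists: there is some $y_0 \in X$ with $u(y_0, x_i) \ge \frac{1}{2n}$ for every $i \in N$. In particular $\min_i u(y_0, x_i) \ge \frac{1}{2n}$, so the optimal egalitarian welfare admits the lower bound
\[
\min_i u(f^*_{EW}(x), x_i) \;=\; \max_{y \in X} \min_i u(y, x_i) \;\ge\; \min_i u(y_0, x_i) \;\ge\; \frac{1}{2n}.
\]

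Next I would observe that this lower bound on the minimum utility is exactly the 2-IFS requirement. Because every agent's utility at $f^*_{EW}(x)$ is at least the minimum utility, which is itself at least $\frac{1}{2n}$, we obtain $u(f^*_{EW}(x), x_i) \ge \frac{1}{2n}$ for all $i$; that is, $f^*_{EW}(x)$ satisfies 2-IFS and hence lies in the 2-IFS-feasible set. Consequently the constrained maximum in the denominator is at least $\min_i u(f^*_{EW}(x), x_i)$, forcing the ratio to be at most $1$. Combining this with the trivial lower bound yields a price of fairness of exactly $1$.

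The argument presents no genuine obstacle; the only point meriting care is that the entire conclusion rests on the existence guarantee of Proposition~\ref{prop2IFS}. It is precisely because a 2-IFS solution is always available that the egalitarian optimum cannot drop below the $\frac{1}{2n}$ threshold, and thus cannot violate 2-IFS — so the egalitarian objective and the 2-IFS constraint never conflict.
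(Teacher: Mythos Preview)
Your proof is correct and follows essentially the same approach as the paper: both argue that because a 2-IFS solution is guaranteed to exist (Proposition~\ref{prop2IFS}), the egalitarian-welfare optimum must give every agent at least $\frac{1}{2n}$ utility and therefore itself satisfies 2-IFS, collapsing the ratio to $1$. Your version is more explicit about the two directions of the equality, but the underlying argument is identical.
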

\begin{proof}
We know a $2-$IFS solution must always exist, meaning that under any agent location profile, there exists a facility location such that every agent is at least $\frac{1}{2n}$ distance from the facility. It follows immediately that a solution maximizes egalitarian welfare satisfies $2-$IFS. 
\end{proof}

On the other hand, we find that the price of 2-UFS is noticeably worse, taking a linear factor of $n-1$. The intuition behind this is that a coalition of $n-1$ agents at one point can ensure that the facility is distant from their location (and closer to the remaining agent's location) by a `factor' of $n-1$ (see, e.g. Figure~\ref{fig:EW}).
\begin{theorem}\label{thm: 2-UFS EW}
The price of 2-UFS for egalitarian welfare is $n-1$.
\end{theorem}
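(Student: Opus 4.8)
The plan is to bound the ratio
$$R(x) = \frac{\min_i u(f^*_{EW}(x), x_i)}{\max_{y \in P(x)} \min_i u(y, x_i)},$$
where $P$ is the 2-UFS property. Write $E^* = \min_i u(f^*_{EW}(x), x_i)$ for the optimal unconstrained egalitarian welfare and $E_{UFS} = \max_{y \in P(x)} \min_i u(y, x_i)$ for the best egalitarian welfare attainable by a 2-UFS facility; the latter is well defined because a 2-UFS location always exists by Proposition~\ref{optUFS}. I would prove $R(x) \le n-1$ for every profile, and then exhibit a family of profiles whose ratio tends to $n-1$.

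For the \textbf{upper bound} I would use two observations. First, \emph{every} 2-UFS location $y$ keeps each agent at distance at least $\frac{1}{2n}$: agent $i$ belongs to a group $S \ni i$ with $|S| \ge 1$, so $u(y,x_i) \ge \frac{|S|}{2n} \ge \frac{1}{2n}$, whence $E_{UFS} \ge \frac{1}{2n}$. Second, I would split on the size of $E^*$. If $E^* \le \frac{n-1}{2n}$ then immediately $R(x) \le E^* / \tfrac{1}{2n} \le n-1$. If instead $E^* > \frac{n-1}{2n}$, I claim the unconstrained optimum is \emph{itself} 2-UFS feasible, giving $R(x)=1$: unless all agents coincide (where a far endpoint is trivially 2-UFS optimal), every group has size at most $n-1 < 2nE^*$, so each group's required radius $\frac{|S|}{2n}$ is strictly below $E^*$; since $f^*_{EW}(x)$ is at distance at least $E^*$ from every agent, it meets all 2-UFS constraints. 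The two cases together give $R(x) \le n-1$.

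For the \textbf{lower bound} I would engineer a profile forcing $E^*$ near $\frac{n-1}{2n}$ while collapsing $E_{UFS}$ to roughly $\frac{1}{2n}$. Concretely, for small $\epsilon>0$ place $n-1$ agents at $\frac{n+1}{2n}+\epsilon$ and a single agent at $\frac{1}{2n}+\frac{\epsilon}{2}$. The cluster's 2-UFS exclusion interval has radius $\frac{n-1}{2n}$, and since its right edge $1+\epsilon$ lies strictly beyond $1$, the only locations avoiding the cluster form the single interval $[0,\frac1n+\epsilon]$; intersecting with the lone agent's exclusion ball leaves every feasible point within $\frac{1}{2n}+\frac{\epsilon}{2}$ of that agent, so $E_{UFS}=\frac{1}{2n}+\frac{\epsilon}{2}$. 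Placing the facility at $1$ yields min-distance $\frac{n-1}{2n}-\epsilon$, so $E^* \ge \frac{n-1}{2n}-\epsilon$, and the ratio tends to $n-1$ as $\epsilon \to 0$.

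The step I expect to be most delicate is this lower-bound construction, specifically guaranteeing that the facility cannot \emph{escape} to a point that is at once far from the cluster and far from the lone agent. A symmetric placement of the cluster (say at $\frac12$) fails, because removing the cluster's ball leaves feasible intervals near \emph{both} endpoints and the facility flees to whichever one the lone agent does not block. The essential device is to push the cluster just past center so its exclusion ball spills over the right boundary of $[0,1]$, collapsing the feasible region to one short interval that a single agent can control. This same boundary effect explains why the value $n-1$ is approached but not attained, so the stated price of fairness should be read as the supremum $n-1$.
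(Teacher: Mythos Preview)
Your proposal is correct and follows essentially the same approach as the paper. Both the upper bound argument (splitting on whether $E^*$ exceeds $\frac{n-1}{2n}$, using that any 2-UFS point guarantees $\frac{1}{2n}$ to every agent) and the lower bound construction (a cluster of $n-1$ agents near $\frac{n+1}{2n}$ whose exclusion ball spills past the right endpoint, together with one lone agent pinning the remaining feasible interval) match the paper's proof almost line for line; the only cosmetic difference is that the paper places the lone agent at $\frac{1}{2n}-\epsilon$ rather than $\frac{1}{2n}+\frac{\epsilon}{2}$, which yields the same limiting ratio.
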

\begin{proof}
We first prove that the lower bound is $n-1$. It suffices to consider $n\geq 3$. Consider the location profile with $1$ agent at $\frac{1}{2n}-\epsilon$ and $n-1$ agents at $\frac{n+1}{2n}+\epsilon$ for sufficiently small $\epsilon>0$, (see, e.g. Figure~\ref{fig:EW}). The optimal solution places the facility at $1$ resulting in an egalitarian welfare of $\frac{n-1}{2n}-\epsilon$. The only 2-UFS solutions are in the interval $[\frac{1}{n}-\epsilon,\frac{1}{n}+\epsilon]$, and the solution of $\frac{1}{n}+\epsilon$ results in an egalitarian welfare of $\frac{1}{2n}+2\epsilon$. As $\epsilon\rightarrow 0$, the ratio approaches $n-1$.

We now prove that the upper bound is $n-1$. Firstly, it clearly suffices to consider location profiles where groups contain at most $n-1$ agents. Suppose there exists such an $x$ where $\min_i u(f^*_{EW}(x),x_i)\geq \frac{n-1}{2n}$, i.e. there is a solution where every agent has at least $\frac{n-1}{2n}$ utility. Then this also satisfies 2-UFS and results in an egalitarian ratio of $1$. Therefore the maximum ratio must have $\min_i u(f^*_{EW}(x),x_i)< \frac{n-1}{2n}$. Due to 2-UFS, we also have $\max_{y\in 2UFS(x)}\min_i u(y,x_i)\geq \frac{1}{2n}$. The theorem statement follows from dividing these two terms.
\end{proof}	
\begin{figure}
		\centering
		\begin{tikzpicture}
			\draw[] (-3.5,0) -- (3.5,0); 
			\draw [thick] (-3.5,0.3)  -- (-3.5,-0.3);  
			\draw [thick] (3.5,0.3)  -- (3.5,-0.3); 
		
			\node at (-3.5, -0.5) {$0$}; 
			\node at (3.5, -0.5) {$1$}; 
			\node at (0.6, -0.5) {$x_2\dots x_n$}; 
			\node at (-3, -0.5) {$x_1$}; 
		
			\filldraw[fill=red!40, rounded corners](-3.55,-0.25) rectangle (-2.45,0.25); 
			\node at (3.5, 0.73) {$f^*_{EW}$}; 
			\node at (-2.4,0.73) {$2UFS(x)$};
			\filldraw[fill=red!40, rounded corners](-2.35,-0.25) rectangle (3.55,0.25);
		\end{tikzpicture}
		\caption{The instance in the proof of Theorem~\ref{thm: 2-UFS EW}. $f^*_{EW}$ represents the egalitarian welfare maximizing facility placement, whilst $2UFS(x)$ represents the interval of facility placements satisfying 2-UFS. The red intervals denote locations that are infeasible under 2-UFS.}
		\label{fig:EW}
\end{figure}
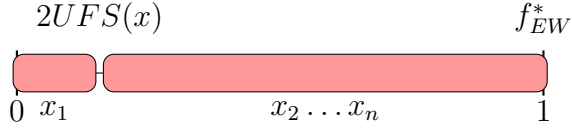

\subsection{Incompatibility with Strategyproofness}
In mechanism design, the normative property of \emph{strategyproofness} is often sought as it disincentivizes agents from misreporting their true location.

\begin{definition}[Strategyproofness]
A (deterministic) mechanism $f$ is strategyproof if for every agent $i\in N$, we have for every $x_i$, $x_i'$ and $\hat{x}_{-i}$, $$u(f(x_i,\hat{x}_{-i}),x_i)\geq u(f(x_i',\hat{x}_{-i}),x_i).$$
\end{definition}

We say that a randomized mechanism is \emph{strategyproof in expectation} if no agent can improve its expected utility by misreporting its own location.	
	
	We note that no strategyproof and deterministic mechanism can achieve any approximation of IFS (and therefore also UFS).

		\begin{proposition}\label{prop:detnofair}
			There exists no deterministic and strategyproof mechanism that achieves any approximation of IFS. 
			\end{proposition}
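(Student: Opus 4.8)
The plan is to derive a contradiction by assuming a strategyproof mechanism $f$ that guarantees some $\alpha$-IFS for a fixed $\alpha$, and then invoking the \citet{FeSu15a} characterization to show that any such mechanism is a midpoint mechanism with at most two outputs, which cannot simultaneously keep every agent a positive distance away. First I would recall that IFS (even approximately) forces $|R_n^f|\geq 2$ in a nontrivial way, but more usefully, the characterization tells us that a strategyproof $f$ with range of size at most $2$ must be a midpoint mechanism, and a strategyproof $f$ whose range has size at least $3$ is ruled out entirely by the \citet{FeSu15a} result (their theorem establishes that every deterministic strategyproof mechanism for the OFLP has range of size at most $2$ and is a midpoint mechanism). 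So the mechanism effectively always outputs one of two points $\alpha_n,\beta_n$.

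The core of the argument is then to construct a profile on which both candidate outputs $\alpha_n$ and $\beta_n$ place the facility too close to some agent, violating $\alpha$-IFS. Concretely, I would place a single agent (or cluster) directly at each of the two range points $\alpha_n$ and $\beta_n$: put one agent at $\alpha_n$ and one agent at $\beta_n$, and fill the remaining agents wherever is convenient. Whichever of the two points the mechanism selects, there is an agent sitting exactly at that point, so that agent receives utility $0 < \frac{1}{\alpha n}$, contradicting $\alpha$-IFS for any finite $\alpha$. The subtlety is that the range points $\alpha_n,\beta_n$ depend only on $n$ (by the characterization they are fixed for each population size), so this placement is legitimate and does not change the range; I would make sure the profile has the right number of agents $n$ and that placing agents at the range endpoints is consistent with $f({\bf x})\in\{\alpha_n,\beta_n\}$.

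I would then handle the degenerate case $\alpha_n=\beta_n$ (range of size $1$, a constant mechanism) separately: here $f$ always outputs a single fixed point $c$, and placing an agent at $c$ again yields utility $0$, so no approximation of IFS holds. This case is trivial but should be stated for completeness so that the two-point and one-point cases together exhaust all strategyproof mechanisms.

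The main obstacle is making precise the claim that the two range points are determined solely by $n$ and that an agent can be positioned exactly at a range point without altering the mechanism's range — this is where the \citet{FeSu15a} characterization does the heavy lifting, and I would be careful to cite it as establishing both that the range has size at most two and that the outputs $\{\alpha_n,\beta_n\}$ are the same across all profiles of a given size. Once that structural fact is in hand, the contradiction is immediate since any fixed output point can be occupied by an agent, driving its guaranteed utility to $0$ rather than the required $\frac{1}{\alpha n}$.
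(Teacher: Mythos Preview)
Your proposal is correct and follows essentially the same approach as the paper: invoke the \citet{FeSu15a} characterization to conclude that any strategyproof mechanism has range $R_n^f$ of size at most two, then place an agent at each range point so that whichever output is selected, some agent receives utility $0$. The paper's proof is terser and omits the degenerate single-point case you spell out, but the argument is identical.
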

			\begin{proof}
				From the characterization by \citet{FeSu15a}, it can be seen that for any deterministic and strategyproof mechanism, there exists a location profile where the facility is placed at an agent's location. Such a mechanism does not satisfy any approximation of IFS.	
				\end{proof}	

Since strategyproofness is incompatible with our fairness axioms, we are interested in the performance of proportionally fair mechanisms in our model when accounting for agent strategic behaviour. Such performance can be quantified by the price of anarchy, and the price of stability.
\subsection{$\epsilon$-Price of Anarchy and $\epsilon$-Price of Stability}
In this section, we compute the loss of efficiency by agents misreporting their location (in a pure Nash equilibrium of reports) under the mechanisms $f^*_{2IFS}$ and $f^*_{2UFS}$. Recall these are the mechanisms which maximize utilitarian welfare under the constraints of 2-IFS and 2-UFS, respectively. This efficiency loss can be quantified in the `worst-case' sense, by the \emph{price of anarchy} \citep{KoPa99,NRTV07}, or in the `best case' sense, by the \emph{price of stability} \citep{ADK+08}.

However, for $f^*_{2IFS}$ and $f^*_{2UFS}$, we show that a pure Nash equilibrium may not necessarily exist, and hence the price of anarchy is not well-defined.

\begin{proposition}\label{prop: noNE}
A pure Nash equilibrium may not exist for $f^*_{2IFS}$ or $f^*_{2UFS}$.
\end{proposition}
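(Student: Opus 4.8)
The plan is to exhibit a single fixed instance and show that \emph{every} report profile admits a profitable unilateral deviation, so that no pure Nash equilibrium can exist. I would take $n=2$ with true locations $x_1=\tfrac{1}{10}$ and $x_2=\tfrac{9}{10}$; the feature that matters is that each true location lies strictly inside its own $2$-IFS exclusion ball, i.e. $x_1<\tfrac{1}{2n}\cdot n=\tfrac14$ and symmetrically $x_2>\tfrac34$. (The symmetric instance $x_1=\tfrac14,\,x_2=\tfrac34$ is instructive precisely because it \emph{does} admit an equilibrium, which pins down why the strict inequality is needed.) Since $f^*_{2IFS}$ and $f^*_{2UFS}$ coincide on profiles of distinct reported locations, the same instance is intended to serve both mechanisms, with the UFS case requiring one extra family of deviations to be ruled out.

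The key structural step is to describe the output of $f^*_{2IFS}$ on this instance as a function of the reports $(r_1,r_2)$. Because the reported utilitarian objective $|y-r_1|+|y-r_2|$ is convex and piecewise linear, it is maximized over the (always nonempty, by Proposition~\ref{prop2IFS}) feasible set at an \emph{extreme} feasible point; combined with the leftmost tie-breaking of $f^*_{2IFS}$, I would prove that the facility is always placed at one of exactly three positions: $0$, $1$, or---when both endpoints are excluded, forcing one report below $\tfrac14$ and one above $\tfrac34$---the leftmost feasible point $r+\tfrac{1}{2n}$, where $r$ is the under-reporting (left-blocking) agent. I would then run the case analysis on the facility position of an arbitrary candidate profile. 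If the facility is at $0$, the left agent (true $\tfrac{1}{10}$) under-reports to exclude $0$ and drag the facility rightward toward $\tfrac12$, strictly increasing its distance past $\tfrac{1}{10}$; the case of the facility at $1$ is symmetric for the right agent. If the facility is interior at $r+\tfrac{1}{2n}$, then when the blocker is the left agent it strictly gains by nudging its report up toward $\tfrac14$ (dragging the leftmost feasible point rightward), and when the blocker is the right agent it strictly gains by \emph{un}-blocking $0$ so that the facility falls all the way to $0$, at distance $\tfrac{9}{10}$. Each case thus yields a profitable deviation, so no profile is an equilibrium.

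For $f^*_{2UFS}$ I would additionally handle deviations in which an agent reports at its opponent's location, forming a coincident group whose enlarged ball of radius $\tfrac{2}{2n}=\tfrac12$ excludes a larger region; the point to verify is that such a group only pushes the facility to the endpoint \emph{opposite} the group's location, which never strictly benefits the deviating agent given the opponent's report, so these options do not stabilize any profile either. I expect the main obstacle to be that $f^*_{2IFS}$ and $f^*_{2UFS}$ are \emph{discontinuous}: the interaction of the constraint boundaries with leftmost tie-breaking produces jumps (e.g.\ as the left blocker's report crosses $\tfrac14$ the facility jumps from near $\tfrac12$ down to $0$), so some best responses are suprema that are not attained. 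The argument must therefore be phrased entirely in terms of the existence of a strict profitable deviation at every profile rather than in terms of best-response fixed points, and it must separately dispatch the measure-zero degenerate profiles where the feasible set collapses to isolated points, since the tie-breaking behaves differently there.
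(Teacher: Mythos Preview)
Your approach is correct and essentially the same as the paper's: both exhibit an $n=2$ instance with true locations strictly inside $(0,\tfrac14)$ and $(\tfrac34,1)$---the paper uses $(\tfrac14-\epsilon,\tfrac34+\epsilon)$, you use $(\tfrac1{10},\tfrac9{10})$---and verify by exhaustive case analysis that every report profile admits a strictly profitable unilateral deviation, with the paper likewise remarking that the $f^*_{2UFS}$ case follows verbatim. The only structural difference is that the paper partitions profiles by agent~1's reported location (three ranges of $x_1'$ with nested subcases on $x_2'$) whereas you partition by the resulting facility position $\{0,1,\text{leftmost interior point}\}$, which is arguably the cleaner decomposition; one small slip to fix is that $\tfrac{1}{2n}\cdot n=\tfrac12$, not $\tfrac14$---you mean $x_1<\tfrac{1}{2n}=\tfrac14$ for $n=2$.
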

\begin{proof}
For simplicity, we prove this statement for $f^*_{2IFS}$. The same arguments hold verbatim for $f^*_{2UFS}$.

We define a sufficiently small constant $\epsilon>0$, and consider the location profile $x=(\frac{1}{4}-\epsilon,\frac{3}{4}+\epsilon)$. We denote the reported location profile as $x'=(x'_1,x'_2)$, and prove this statement by considering cases on agent 1's reported location $x_1'$. Note that under a pure Nash equilibrium, $f^*_{2IFS}$ cannot place the facility in the interval $[0,\frac{1}{2}-\epsilon)$, as agent 1 can change its report to $x_1'=\frac{1}{4}-\epsilon$ to strictly increase its utility. Similarly, under a pure Nash equilibrium, $f^*_{2IFS}$ cannot place the facility in the interval $(\frac{1}{2}+\epsilon,1]$, as agent 2 can change its report to $x_2'=\frac{3}{4}+\epsilon$ to strictly increase its utility.

\textbf{Case 1 ($x_1'<\frac{1}{4}-\epsilon$)}: If $x_2'\leq \frac{3}{4}$, then $f^*_{2IFS}$ places the facility at $1$, and thus this is not a pure Nash equilibrium. If $x_2'>\frac{3}{4}$, then $f^*_{2IFS}$ places the facility at $x_1'+\frac{1}{4}<\frac{1}{2}-\epsilon$, thus this is also not a pure Nash equilibrium.

\textbf{Case 2 ($x_1'\geq \frac{1}{4}$)}: If $f^*_{2IFS}$ places the facility at a location strictly right of $0$, then this is not a pure Nash equilibrium as agent 2 can report $x_2'=1$ to move the facility to $0$, improving its utility. If $f^*_{2IFS}$ places the facility at $0$, then this is also not a pure Nash equilibrium as agent 1 can change its report to $x_1'=\frac{1}{4}-\epsilon$ to strictly increase its utility.

\textbf{Case 3 ($x_1'\in [\frac{1}{4}-\epsilon,\frac{1}{4})$)}: Recall that under a pure Nash equilibrium, $f^*_{2IFS}$ cannot place the facility in the interval $(\frac{1}{2}+\epsilon,1]$. Due to $x_1'$, $f^*_{2IFS}$ also cannot place the facility in the interval $[0,x_1'+\frac{1}{4})$. If $x_2'\leq \frac{3}{4}$, then $f^*_{2IFS}$ places the facility at $1$, and thus this is not a pure Nash equilibrium. Suppose that $x_2'>\frac{3}{4}$, meaning the facility must be placed in the interval $[x_1'+\frac{1}{4},x_2'-\frac{1}{4}]$. As $f^*_{2IFS}$ places the facility at the leftmost point of the optimal interval, it places the facility at $x_1'+\frac{1}{4}$. Thus, for any $x_1'\in [\frac{1}{4}-\epsilon,\frac{1}{4})$, there exists some sufficiently small $\delta>0$ such that agent 1 can instead report $x_1'+\delta$ to improve its utility, so by definition there does not exist a pure Nash equilibrium. In other words, agent 1 can continually shift its reported location asymptotically closer to $\frac{1}{4}$ to improve its utility, but from Case 2, we know that $x_1'$ cannot reach $\frac{1}{4}$ as otherwise there is no pure Nash equilibrium. 
\end{proof}

As a result, we turn to proving existence of the approximate notion of pure $\epsilon$-Nash equilibria, and computing the corresponding notions of $\epsilon$-price of anarchy and $\epsilon$-price of stability. 

\begin{definition}[\citet{TaVa07}]
A \emph{pure} $\epsilon$-\emph{Nash equilibrium} is a profile of reported agent locations $x'=(x_1',\dots,x_n')$ such that no single agent can improve its own utility (with respect to its true location) by strictly more than $\epsilon$ by changing its reported location. A pure Nash equilibrium is a pure $\epsilon$-Nash equilibrium where $\epsilon=0$.
\end{definition}

To prove the following theorems, we divide the space of agent location profiles into several subcases, and for each subcase, we describe a pure $\epsilon$-Nash equilibrium.
\begin{theorem}\label{thm: nasheq2IFS}
For any $\epsilon>0$, a pure $\epsilon$-Nash equilibrium always exists for $f^*_{2IFS}$.
\end{theorem}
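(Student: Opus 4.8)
The plan is to reduce the mechanism to a simple selection rule and then, for each prescribed $\epsilon>0$, exhibit a report profile all of whose unilateral-deviation gains are at most $\epsilon$. I would begin with a structural lemma about $f^*_{2IFS}$. Given a report profile $x'$, the feasible set is $F(x')=[0,1]\setminus\bigcup_i\left(x_i'-\tfrac{1}{2n},\,x_i'+\tfrac{1}{2n}\right)$, a finite union of closed intervals, and the mechanism maximizes the convex function $W(y)=\sum_i|y-x_i'|$ over $F(x')$. Since $W$ is convex, its maximum over $F(x')$ is attained at either the leftmost or the rightmost point of $F(x')$ (every interior feasible point is dominated by one of these two extremes), with the leftmost tie-break deciding equal-welfare cases. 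This reduces each agent's strategic problem to two levers: by placing its own ball it can (a) extend a blocked region so as to push whichever extreme point is selected farther from its true location $x_i$, and (b) tip the comparison $W(L)$ versus $W(R)$ so that the farther of the two extreme points is chosen.

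Next I would analyze best responses and isolate the source of non-existence of exact equilibria (Proposition~\ref{prop: noNE}). The key point is that an agent's optimal deviation is typically a supremum that is not attained: to block an endpoint the agent must report strictly inside $(0,\tfrac{1}{2n})$ or $(1-\tfrac{1}{2n},1)$ (at the threshold the open ball leaves the endpoint feasible and the facility jumps back), and to swing a welfare tie it must cross a threshold at which the leftmost tie-break flips the selected extreme. In each case the agent can approach, but not achieve, a limiting utility, and the shortfall can be made smaller than any prescribed $\epsilon$. This is precisely the gap the $\epsilon$-relaxation absorbs.

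I would then construct the candidate profile from the blocking picture. Agents on one side of a balance point $y^*$ report so that their balls tile and slightly over-cover the segment on their own side, squeezing the feasible region toward $y^*$; the balance point is pinned by a crossing condition of the form $y^*=\frac{1}{n}\,|\{i:x_i\le y^*\}|$ (so that the number of balls needed to block $[0,y^*]$ and $[y^*,1]$ matches the balls available on each side), which exists by monotonicity of the empirical count against the diagonal. Depending on the profile this $y^*$ is either interior or collapses to an endpoint; when it is an endpoint, the winning group both pulls the welfare comparison toward that endpoint and blocks the opposite endpoint so the losing agents cannot push the facility back. Introducing a perturbation parameter $\delta$ into the reports keeps the construction off every threshold identified above, and I would then verify the $\epsilon$-condition agent by agent: an agent already at its supremal value cannot gain, while an agent whose only improvement is to tighten a block or flip a tie gains at most $O(\delta)$, so choosing $\delta<\epsilon$ finishes the argument.

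The hard part will be the verification that no deviation helps by more than $\epsilon$, because a single report simultaneously perturbs the feasible region and the convex welfare that selects between its two extreme points; one must show these coupled effects cannot combine to move the facility more than $\epsilon$ farther from $x_i$ than in the candidate profile. I also expect the split between interior-balance equilibria (genuinely required when agents sit near the endpoints, e.g.\ a profile like $(0.1,0.9)$, where no endpoint placement is stable) and endpoint equilibria (available when agents sit nearer the centre) to need the most care, together with checking that the crossing condition defining $y^*$ is consistent with every agent's incentive to block only on its own side.
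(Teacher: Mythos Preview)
Your plan and the paper's proof share the same skeleton: both exhibit an explicit report profile in which agents use their $\tfrac{1}{2n}$-balls to tile a segment of $[0,1]$, forcing the feasible set to shrink toward a target point, and both attribute the failure of exact equilibria to the open-ball boundary (so that the best deviation is a supremum absorbed by the $\epsilon$-slack). Your convexity observation---that $y\mapsto\sum_i|y-x_i'|$ is convex, hence $f^*_{2IFS}$ always outputs the leftmost or rightmost point of $F(x')$---is correct and is a genuine simplification the paper never states; it would let you replace the paper's ``iterate over all interval endpoints'' description with a two-point comparison and would shorten every deviation check.

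Where your sketch is thinner than the paper is in pinning down \emph{which} equilibrium to build. Your crossing condition $y^*=\tfrac{1}{n}\,|\{i:x_i\le y^*\}|$ generally has several solutions, and not all of them work: for $x=(0.4,0.6)$ the solutions are $0$, $\tfrac12$, $1$, and the interior one fails (if both agents tile toward $\tfrac12$, agent~$1$ can deviate to report $0.4$, which leaves $0$ feasible and sends the facility there, gaining about $0.3$). Moreover, the paper's equilibrium profiles are not always ``each side tiles its own side'': in several subcases only one side tiles while every agent on the other side reports a common endpoint (the leftmost tie-break then pins the facility), and in others both sides tile but a designated swing agent reports $0$ or $1$ to fix the $W(L)$-versus-$W(R)$ comparison. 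The paper resolves both issues by a brute case split on the true profile (parity of $n$; whether some $x_i$ crosses the threshold $\tfrac{2i-1}{2n}$; the position of the median agent relative to $\tfrac12$), writing the equilibrium down explicitly in each branch and checking all unilateral deviations. Your convexity lemma could compress each of those checks, but the case split selecting the correct target point still appears to be needed; your proposal acknowledges this (``the split\ldots will need the most care'') but does not yet supply it.
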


\begin{theorem}\label{thm: nasheq2UFS}
For any $\epsilon>0$, a pure $\epsilon$-Nash equilibrium always exists for $f^*_{2UFS}$.
\end{theorem}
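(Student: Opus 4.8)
The plan is to prove existence constructively, mirroring the blueprint of Theorem~\ref{thm: nasheq2IFS} but adapting it to the variable-width forbidden regions that $2$-UFS creates. Fix a true profile $x$ and an $\epsilon>0$. By the reflection symmetry $y\mapsto 1-y$, $x_i\mapsto 1-x_i$ I may assume the utilitarian optimum of the true profile lies at $1$. The goal is to exhibit a report profile $x'$ in which every agent is simultaneously within $\epsilon$ of its best response.

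First I would record the structural facts about $f^*_{2UFS}$ that the rest of the argument rests on. For any reports $x'$, the function $y\mapsto\sum_i u(y,x_i')$ is convex and piecewise linear, and the feasible set is $[0,1]$ minus a finite union of open forbidden intervals: around each occupied report $p$ with group $S_p$, the open interval of radius $\tfrac{|S_p|}{2n}$. Consequently $f^*_{2UFS}(x')$ is always attained at a feasible-interval boundary or at $\{0,1\}$, with leftmost tie-breaking pinning it down. Crucially, this description tells me exactly how a single agent's report reshapes feasibility: a report sitting alone contributes a hole of width $\tfrac1n$, whereas reporting onto an existing group of size $k$ enlarges that group's hole to width $\tfrac{k+1}{n}$. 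I would also note that an agent's deviation payoff is piecewise continuous in its report, with the extremal values sought at jump thresholds where the selected boundary switches; these thresholds are the source of non-attained suprema, which is precisely why an exact equilibrium can fail (Proposition~\ref{prop: noNE}) while every positive $\epsilon$ can be met.

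Next I would construct the candidate profile. The natural choice is to assign agents to reported points, in increasing order of their true locations, so that the radius-$\tfrac{|S_p|}{2n}$ holes tile $[0,1]$ and force the facility to a single interior point $z=\tfrac{|A|}{n}$ (or to $1$), with each agent placed on the side of $z$ that pushes the facility away from itself. The split is chosen so that $z$ is the location no agent can profitably escape, and the reports are perturbed off the exact tiling by an amount controlled by $\epsilon$ to sit each agent at the boundary of its best-response region. The content of the theorem is then to show this profile is a mutual $\epsilon$-best-response: the routine deviations (an agent shifting its solitary report) move the facility by at most the width its hole contributes and are bounded exactly as in the $2$-IFS case.

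The main obstacle, and the genuinely new feature relative to $2$-IFS, is the group-formation deviation: by reporting onto another agent's point a deviator enlarges a hole and can make the facility jump a non-infinitesimal distance. The heart of the proof is to show the construction admits no such profitable jump, i.e.\ that for every agent the supremum of its deviation payoffs exceeds its current payoff by at most $\epsilon$. I would establish this by a finite case analysis — deviator originally alone versus in a group, joining a target group on its near versus far side — reducing each case to comparing the convex reported-welfare function at the finitely many candidate feasible boundaries produced by the enlarged hole, and bounding how far toward the deviator the selected boundary can shift. Verifying that these bounds are all within $\epsilon$ under the chosen split and perturbation is where essentially all the work lies; once it is done, the profile is a pure $\epsilon$-Nash equilibrium and the theorem follows.
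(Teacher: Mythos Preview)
Your approach is essentially the same as the paper's, but you are missing a shortcut that makes the argument much lighter. The paper does not construct a new equilibrium adapted to $2$-UFS; it simply reuses, verbatim, the equilibrium profiles already built for $f^*_{2IFS}$ in Theorem~\ref{thm: nasheq2IFS}. The reason this works is that in every one of those profiles the agents report pairwise distinct locations (except possibly multiple agents at an endpoint $0$ or $1$), so the $2$-UFS forbidden balls coincide with the $2$-IFS ones and the facility lands at the same point under both mechanisms. All the ``routine'' deviation checks then carry over literally from the $2$-IFS proof.

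You correctly identify the one genuinely new deviation---an agent reporting onto another agent's point to enlarge a hole---and you are right that this must be ruled out. But the paper handles it with a short observation rather than the extensive case analysis you anticipate: when agent~$i$ abandons its report $x_i'$ to join another, either the facility relocates into the now-feasible region near $x_i'$ (which is on the wrong side of $x_i$ by construction), or the enlarged hole pushes the facility toward $x_i$. Either way the deviator does not gain. So while your plan would go through, framing it as ``where essentially all the work lies'' overstates the difficulty; recognising that the $2$-IFS equilibria transfer directly is the main point.
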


In real-world settings, the value of $\epsilon$ could represent a discretization of the domain, or the smallest distance of which an agent can change their reported location.

For a mechanism $f$, the $\epsilon$-price of anarchy (resp. stability) is defined as the worst-case ratio (over all location profiles $x$) between the utilitarian welfare corresponding to all agents reporting truthfully and the \emph{minimum} (resp. \emph{maximum}) utilitarian welfare corresponding to agents reporting in a pure $\epsilon$-Nash equilibrium.

\begin{definition}
Given $f$ and $x$, define the set of pure $\epsilon$-Nash equilibria location profiles as $\epsilon$-$Equil(f,x)$. The $\epsilon$-price of anarchy for utilitarian welfare is defined as:

$$\epsilon\text{-}PoA(f):=\max_{x\in X^n}\frac{\sum_i u(f(x),x_i)}{\min_{x'\in \epsilon \text{-}Equil(f,x)}\sum_i u(f(x'),x_i)}.$$
The $\epsilon$-price of stability for utilitarian welfare is defined as:
$$\epsilon\text{-}PoS(f):=\max_{x\in X^n}\frac{\sum_i u(f(x),x_i)}{\max_{x'\in \epsilon\text{-}Equil(f,x)}\sum_i u(f(x'),x_i)}.$$
\end{definition}

We now proceed to find $\epsilon$-price of anarchy bounds for utilitarian welfare. The same proof arguments can be applied to find identical bounds for both $f^*_{2IFS}$ and $f^*_{2UFS}$.

\begin{theorem}\label{thm: PoA2IFS}
For any $\epsilon\in (0,\frac{1}{n})$, the $\epsilon$-price of anarchy for $f^*_{2IFS}$ and $f^*_{2UFS}$ of utilitarian welfare is at least $\frac{2n-1+n\epsilon}{1-n\epsilon}$. The price of anarchy is unbounded for $\epsilon\geq \frac{1}{n}$.
\end{theorem}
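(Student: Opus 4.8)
The plan is to prove the lower bound constructively: for each $\epsilon \in (0,\frac1n)$ I will exhibit one truthful profile together with a reported profile that is a pure $\epsilon$-Nash equilibrium, and show their welfare ratio equals $\frac{2n-1+n\epsilon}{1-n\epsilon}$; the same instance serves both $f^*_{2IFS}$ and $f^*_{2UFS}$. Set every agent's true location to $t := \frac{1}{2n}-\frac{\epsilon}{2} = \frac{1-n\epsilon}{2n}$, which lies in $(0,\frac{1}{2n})$ precisely because $\epsilon \in (0,\frac1n)$. Since all agents coincide, under either axiom the feasible region excludes an interval around $t$ whose left endpoint ($t-\frac{1}{2n}$ for 2-IFS, $t-\frac12$ for 2-UFS) is negative, so $0$ is infeasible and the utilitarian-welfare-maximizing feasible point is the right endpoint $y=1$. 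The truthful welfare is therefore $n(1-t) = \frac{2n-1+n\epsilon}{2}$, which will be the numerator.

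For the denominator, let every agent report location $1$. Then the only forbidden region (for either axiom) is a left-neighbourhood of $1$, the reported welfare $n(1-y)$ is strictly decreasing in $y$, and the leftmost tie-breaking rule outputs $y=0$. Evaluated at the true locations this gives welfare $nt = \frac{1-n\epsilon}{2}$, and dividing the two quantities yields exactly $\frac{2n-1+n\epsilon}{1-n\epsilon}$.

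The crux is verifying that the all-report-$1$ profile is a pure $\epsilon$-Nash equilibrium, i.e.\ that no single agent gains strictly more than $\epsilon$ by deviating. I would fix the other $n-1$ reports at $1$ and analyse the deviating agent's report $r$. The key structural observation is that the $n-1$ reports at $1$ make the reported-welfare objective strictly decreasing in $y$, so regardless of $r$ the mechanism always selects the leftmost feasible point; moreover the single forbidden ball the deviator controls has width $\frac1n$, so the furthest right it can push the facility is to just below $\frac1n$, achieved by taking $r \uparrow \frac{1}{2n}$ (this forbids a neighbourhood of $0$ and places the facility at $r+\frac{1}{2n}$). Enumerating the two cases $r \ge \frac{1}{2n}$ (facility stays at $0$, utility $t$) and $r < \frac{1}{2n}$ (facility at $r+\frac{1}{2n}$, utility $r+\frac{1}{2n}-t$) shows the supremal deviation utility is $\frac1n - t = \frac{1}{2n}+\frac{\epsilon}{2}$, whereas the equilibrium utility is $t=\frac{1}{2n}-\frac{\epsilon}{2}$; hence the supremal gain is exactly $\epsilon$ and is never strictly exceeded. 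I expect this deviation analysis to be the main obstacle, since it is also where one must check that the 2-UFS group constraint on the $n-1$ agents at $1$ restricts only the right side and therefore leaves the leftmost feasible point unchanged, which is what makes the two axioms give identical bounds.

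Finally, for the unbounded claim when $\epsilon \ge \frac1n$, I would repeat the construction with $t=0$: all agents truly at $0$, all reporting $1$, so the facility is at $0$ and the equilibrium welfare is $0$, while the truthful welfare is $n$. The same deviation analysis shows the best deviation now yields utility approaching $\frac1n \le \epsilon$, so the profile remains a pure $\epsilon$-Nash equilibrium, and the ratio $\frac{n}{0}$ is unbounded.
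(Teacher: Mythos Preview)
Your proposal is correct and follows essentially the same approach as the paper: the identical true profile $x=(\frac{1}{2n}-\frac{\epsilon}{2},\dots,\frac{1}{2n}-\frac{\epsilon}{2})$ with reported profile $x'=(1,\dots,1)$, the same facility placements, and the same deviation analysis bounding the gain by $\epsilon$; your treatment is slightly more explicit in distinguishing the $r\ge\frac{1}{2n}$ and $r<\frac{1}{2n}$ cases and in noting why the 2-UFS constraint on the $n-1$ agents at $1$ does not alter the leftmost feasible point.
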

\begin{proof}
Suppose for any $\epsilon\in (0,\frac{1}{n})$ that we have the (true) agent location profile $x=(\frac{1}{2n}-\frac{\epsilon}{2},\frac{1}{2n}-\frac{\epsilon}{2},\dots,\frac{1}{2n}-\frac{\epsilon}{2})$. We show that the location profile $x'=(1,\dots,1)$ is a pure $\epsilon$-Nash equilibrium for $x$. Under $x'$, both $f^*_{2IFS}$ and $f^*_{2UFS}$ place the facility at $0$, causing each agent to have $\frac{1}{2n}-\frac{\epsilon}{2}$ utility. An agent can only change the facility position by deviating to a reported location in $[0,\frac{1}{2n})$, causing the facility to instead be placed somewhere in $(0,\frac{1}{n})$. This results in the agent receiving a utility of $u(x_i)<\frac{1}{2n}+\frac{\epsilon}{2}$, which is an increase of at most $\epsilon$. Since no agent can improve its utility by greater than $\epsilon$ by misreporting, $x'$ is a pure $\epsilon$-Nash equilibrium. Now the utilitarian welfare under $x$ is $n-\frac{1}{2}+\frac{n\epsilon}{2}$, whilst under $x'$ it is $\frac{1}{2}-\frac{n\epsilon}{2}$ (w.r.t. $x$). Hence the $\epsilon$-price of anarchy is at least $\frac{2n-1+n\epsilon}{1-n\epsilon}$ for $\epsilon\in (0,\frac{1}{n})$.

For $\epsilon\geq\frac{1}{n}$, the (true) agent location profile $x=(0,\dots,0)$ has a corresponding pure $\epsilon$-Nash equilibrium of $x'=(1,\dots,1)$ which results in each agent having 0 utility. This can be seen as no agent can improve its utility by $\frac{1}{n}$ or greater, as an agent can only change the facility to a location in $(0,\frac{1}{n})$ by changing its report to a location in $(0,\frac{1}{2n})$. As each agent has 0 utility under the pure $\epsilon$-Nash equilibrium, the $\epsilon$-price of anarchy is unbounded for $\epsilon\geq \frac{1}{n}$.
\end{proof}
\begin{theorem}\label{thm: PoA2UFS}
For any $\epsilon\in (0,\frac{1}{2n})$, the $\epsilon$-price of anarchy for $f^*_{2IFS}$ and $f^*_{2UFS}$ of utilitarian welfare is at most $\frac{2n}{1-2n\epsilon}$.
\end{theorem}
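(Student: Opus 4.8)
The plan is to bound the numerator and the denominator of the $\epsilon$-price of anarchy ratio separately, and the whole argument will apply verbatim to both $f^*_{2IFS}$ and $f^*_{2UFS}$. First I would dispose of the numerator: the utilitarian welfare at the truthful profile is $\sum_i u(f(x),x_i)=\sum_i |f(x)-x_i|$, and since every term is a distance between two points of $[0,1]$ it is at most $1$, so the numerator is at most $n$. This crude bound will be enough.

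The heart of the argument is a uniform lower bound on the welfare of \emph{every} $\epsilon$-Nash equilibrium. Fix any pure $\epsilon$-Nash equilibrium $x'$ and set $y=f(x')$. The key observation is that each agent can guarantee itself utility at least $\tfrac{1}{2n}$ simply by reporting truthfully: since $f^*_{2IFS}$ (respectively $f^*_{2UFS}$) always outputs a facility satisfying the 2-IFS (respectively 2-UFS) constraint with respect to the reported profile, the deviation in which agent $i$ reports $x_i''=x_i$ yields a facility at distance at least $\tfrac{1}{2n}$ from $x_i$. For $f^*_{2UFS}$ one just notes that the group of agents reporting location $x_i$ contains $i$ and hence has size at least $1$, so its UFS constraint still forces distance at least $\tfrac{1}{2n}$. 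By the definition of a pure $\epsilon$-Nash equilibrium, agent $i$'s equilibrium utility can be smaller than the utility of this truthful deviation by at most $\epsilon$, so $u(y,x_i)\ge \tfrac{1}{2n}-\epsilon$ for every $i$. Summing over all $n$ agents gives $\sum_i u(y,x_i)\ge n\bigl(\tfrac{1}{2n}-\epsilon\bigr)=\tfrac{1-2n\epsilon}{2}$, which is strictly positive precisely because $\epsilon<\tfrac{1}{2n}$; this is exactly the regime in which the denominator, and hence the ratio, is well-defined.

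Combining the two estimates, for every location profile $x$ the ratio is at most $\dfrac{n}{(1-2n\epsilon)/2}=\dfrac{2n}{1-2n\epsilon}$, and taking the maximum over $x$ yields the claimed upper bound. I expect the only delicate points to be (i) justifying the truthful-report guarantee uniformly for both mechanisms, i.e. checking that the 2-UFS constraint applied to the possibly enlarged group at $x_i$ still forces a distance of at least $\tfrac{1}{2n}$, and (ii) using the direction of the $\epsilon$-Nash inequality correctly, namely that the truthful deviation's utility exceeds the equilibrium utility by at most the additive slack $\epsilon$ rather than a multiplicative factor. Both are routine once stated carefully, so the proof should be short.
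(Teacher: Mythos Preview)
Your proposal is correct and follows essentially the same approach as the paper: bound the numerator trivially by $n$, and bound the denominator by showing that in any pure $\epsilon$-Nash equilibrium each agent has utility at least $\tfrac{1}{2n}-\epsilon$ because the truthful deviation guarantees $\tfrac{1}{2n}$ under the 2-IFS (or 2-UFS, via the size-$\ge 1$ group) constraint. The paper's proof is slightly terser but the logic, the key deviation argument, and the resulting bound are identical.
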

\begin{proof}
Under a pure $\epsilon$-Nash equilibrium, each agent must have at least $\frac{1}{2n}-\epsilon$ utility.  This is because an agent can achieve at least $\frac{1}{2n}$ utility by reporting its true location. Therefore the utilitarian welfare under a pure $\epsilon$-Nash equilibrium must be at least $\frac{1}{2}-n\epsilon$. Now the utilitarian welfare under any instance is at most $n$, from all agents being located at $0$ and the facility being placed at $1$. The theorem statement follows from dividing these terms.
\end{proof}

As $\epsilon\rightarrow 0$, we see that the $\epsilon$-price of anarchy of $f^*_{2IFS}$ and $f^*_{2UFS}$ is linear and thus the mechanisms perform quite poorly in the worst-case equilibria. In contrast, we prove asymptotically tight constant bounds on the $\epsilon$-price of stability. To prove the lower bound, we give a location profile where the optimal $f^*_{2IFS}$ and $f^*_{2UFS}$ facility placements are near an interval endpoint, but the only $\epsilon$-Nash equilibria facility placement is in the middle of the agents.

\begin{theorem}\label{lem: PoSlower}
For $f^*_{2IFS}$ and $f^*_{2UFS}$, if $\epsilon\in (0,\frac{1}{2n})$, the $\epsilon$-price of stability is at least
$$\frac{4n^2-4n+4+8n\epsilon}{2n^2+n+2+(2n^3+4n^2-8n)\epsilon}.$$
This expression approaches $2$ as $\epsilon\rightarrow0$ and $n\rightarrow\infty$.
\end{theorem}
\begin{proof}
Suppose we have even $n$, and that the (true) agent location profile is $x=(0,\frac{3}{2n}-2\delta,\frac{5}{2n}-3\delta,\dots,\frac{n-1}{2n}-\frac{n}{2}\delta,\frac{n+1}{2n}+\frac{n}{2}\delta,\frac{n+3}{2n}+(\frac{n}{2}-1)\delta,\dots,\frac{2n-1}{2n}+\delta)$, where $\delta>\epsilon$ and $\delta-\epsilon$ is sufficiently small. Here, $f^*_{2IFS}$ and $f^*_{2UFS}$ place the facility at $\frac{1}{2n}$, which results in a utilitarian welfare of
\begin{align*}
\frac{1}{2n}+(\frac{1}{n}+\frac{2}{n}+\dots+\frac{n-1}{n})+\delta=\frac{n^2-n+1}{2n}+\delta.
\end{align*}
Now note that under a pure $\epsilon$-Nash equilibrium, the facility can only be placed in the interval $[\frac{1}{2}-\frac{n}{2}\delta,\frac{1}{2}+\frac{n}{2}\delta]$. If the facility is placed less than $\frac{1}{2n}-\epsilon$ distance of an agent's true location, that agent can report their true location to gain strictly more than $\epsilon$ utility. The facility also cannot be placed in the interval $[0,\frac{1}{n}-2\delta]$ as the agent at $x_1=0$ can change its report to $x_1'=\frac{1}{2n}-\delta$ to gain more than $\epsilon$ utility.

The utilitarian welfare corresponding to the equilibrium facility placement of $\frac{1}{2}-\frac{n}{2}\delta$ is
\begin{align*}
&\frac{1}{2}+(\frac{1}{2n}+\frac{3}{2n}+\dots+\frac{n-3}{2n})+(\frac{1}{2n}+\frac{3}{2n}+\dots+\frac{n-1}{2n})+(\frac{n}{2}-1)(\frac{n}{2}+2)\delta+\delta\\
&=\frac{2n^2+n+2}{8n}+(\frac{n^2}{4}+\frac{n}{2}-1)\delta.
\end{align*}
Dividing our welfares and taking the limit $\delta\rightarrow\epsilon$ gives us the $\epsilon$-price of stability lower bound of $$\frac{4n^2-4n+4+8n\epsilon}{2n^2+n+2+(2n^3+4n^2-8n)\epsilon}.$$ Setting the limit $\epsilon\rightarrow 0$, the expression becomes equal to $2-\frac{6n}{2n^2+n+2}$, and thus we see it is monotonic increasing for $n\geq 1$, approaching a value of $2$. 
\end{proof}

We next prove that the $\epsilon$-price of stability for $f^*_{2IFS}$ and $f^*_{2UFS}$ has an upper bound of 2. The proof iterates through the $\epsilon$-equilibria in each subcase of the proof of Theorem~\ref{thm: nasheq2IFS}, and constructs the location profile that maximizes the ratio between the utilitarian welfare when agents report truthfully, and the utilitarian welfare corresponding to the given $\epsilon$-equilibrium facility placement.
\begin{theorem}\label{thm:PoSupper}
For $f^*_{2IFS}$ and $f^*_{2UFS}$, taking the limit $\epsilon \rightarrow 0$, the $\epsilon$-price of stability is at most 2.
\end{theorem}

As we have shown, when maximizing the utilitarian welfare under 2-IFS or 2-UFS, the degradation of efficiency under a $\epsilon$-Nash equilibrium can range from a constant factor to a linear factor. To avoid a pessimistic outcome, we may wish to employ a randomized mechanism, achieving strategyproofness along with 2-IFS or 2-UFS in expectation. We give examples of such mechanisms in the upcoming section.
\section{Randomized Mechanisms}
By using randomized mechanisms, we can achieve a better price of fairness for 2-IFS and 2-UFS, and overcome the incompatibility with strategyproofness. We define a \emph{randomized mechanism} as a probability distribution over deterministic mechanisms, and an agent's utility as its expected distance from the facility.

In the randomized setting, the optimal approximation of IFS and UFS for which a solution always exists is $\alpha=2$, as seen by setting 1 agent at $\frac{1}{2}$. Our fairness axioms are adapted as follows:

\begin{definition}[$\alpha$-Individual Fair Share (IFS) in expectation]
				A mechanism $f$ satisfies \emph{$\alpha$-Individual Fair Share in expectation ($\alpha$-IFS in expectation)} if for any location profile $x$, $$\mathbb{E}[u(f(x), x_i)]\ge \frac{1}{\alpha n} \quad \forall i\in N.$$
				\end{definition}
\begin{definition}[$\alpha$-Unanimous Fair Share (UFS) in expectation]
			A mechanism $f$ satisfies \emph{$\alpha$-Unanimous Fair Share in expectation ($\alpha$-UFS in expectation)} if for any location profile $x$ and any set of agents $S$ at the same location, $$\mathbb{E}[u(f(x), x_i)]\ge \frac{|S|}{\alpha n} \quad \forall i\in S.$$
		\end{definition}
We first show that when maximizing welfare in the randomized setting, it suffices to consider mechanisms which can only place the facility at $0$ or $1$.

\begin{lemma}\label{lem: 01opt}
Consider an agent location profile $x$. For every 2-IFS/UFS randomized mechanism that gives strictly positive probability to a facility placement between $0$ and $1$, there exists a 2-IFS/UFS randomized mechanism that only gives positive support to a facility placement at $0$ or $1$ that leads to weakly higher expected utility for each agent.
\end{lemma}		
		
\subsection{Strategyproofness}
From Proposition~\ref{prop:detnofair}, we know that in the deterministic setting, strategyproofness is incompatible with our proportional fairness axioms. In the randomized setting, the space of mechanisms is much larger and hence we are able to overcome this impossibility.

We first consider \textbf{Mechanism 2} from \citep{CYZ13}. Denoting the numbers of agents located in $[0,1/2]$ and $(1/2,1]$ by $n_1$ and $n_2$ respectively, \textbf{Mechanism 2} places the facility at $0$ with probability $\alpha$ and at $1$ with probability $(1-\alpha)$, where $\alpha=\frac{2n_1n_2+n^2_2}{n_1^2+n_2^2+4n_1n_2}$. This mechanism is known to be group strategyproof (in expectation) and $\frac{3}{2}-$approximates the utilitarian welfare. We show that this mechanism satisfies 2-UFS (and therefore also 2-IFS).

\begin{theorem}\label{thm: mech2}
\textbf{Mechanism 2} satisfies 2-UFS in expectation.
\end{theorem}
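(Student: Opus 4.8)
The plan is to compute, for an agent at an arbitrary location $\ell$, its expected utility under Mechanism 2 and then exploit the fact that this quantity is affine in $\ell$. Since the facility lands at $0$ with probability $\alpha$ and at $1$ with probability $1-\alpha$, an agent at $\ell$ has expected utility $\mathbb{E}[u] = \alpha\ell + (1-\alpha)(1-\ell) = (1-\alpha) + (2\alpha-1)\ell$. First I would rewrite the coefficients in terms of $n_1,n_2$ and $D := n_1^2 + n_2^2 + 4n_1 n_2$, obtaining $1-\alpha = \frac{n_1(n_1+2n_2)}{D}$, $\alpha = \frac{n_2(2n_1+n_2)}{D}$, and the slope $2\alpha-1 = \frac{(n_2-n_1)n}{D}$ (using $n=n_1+n_2$). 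I would also record the convenient fact that at the midpoint $\mathbb{E}[u] = \tfrac12$ for every value of $\alpha$, since $\tfrac12$ is equidistant from $0$ and $1$.

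Because $\mathbb{E}[u]$ is affine in $\ell$, its minimum over any interval is attained at an endpoint; I would therefore split the analysis according to whether a group $S$ (of size $k=|S|$ at a common location $\ell$) lies in $[0,\tfrac12]$ (so $k \le n_1$) or in $(\tfrac12,1]$ (so $k \le n_2$), and according to the sign of the slope $2\alpha-1$, i.e.\ whether $n_2 \ge n_1$. For a group in $[0,\tfrac12]$ with $n_2 \ge n_1$ the expected utility is nondecreasing, so the worst location is $\ell = 0$, where $\mathbb{E}[u] = 1-\alpha$; the required bound $1-\alpha \ge \frac{k}{2n}$ follows from $k \le n_1$ once I reduce it to the polynomial inequality $2n(n_1+2n_2) \ge D$, which expands to $n_1^2 + 2n_1n_2 + 3n_2^2 \ge 0$. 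When instead the slope is negative or zero, the minimum over $[0,\tfrac12]$ sits at the midpoint, where $\mathbb{E}[u] = \tfrac12 \ge \frac{k}{2n}$ holds trivially since $k \le n$.

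The case of a group in $(\tfrac12,1]$ is the mirror image: I would either repeat the endpoint argument at $\ell = 1$ (where $\mathbb{E}[u] = \alpha \ge \frac{n_2}{2n}$, again reducing to the manifestly nonnegative polynomial $3n_1^2 + 2n_1n_2 + n_2^2 \ge 0$), or simply invoke the reflection symmetry $\ell \mapsto 1-\ell$, under which $n_1 \leftrightarrow n_2$ and $\alpha \mapsto 1-\alpha$, so that both Mechanism 2 and the fair-share threshold $\frac{k}{2n}$ are preserved and this case collapses onto the previous one.

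The only real subtlety, and the step I would be most careful about, is the bookkeeping that a group lying in $[0,\tfrac12]$ can have size at most $n_1$ (and one in $(\tfrac12,1]$ at most $n_2$): this is precisely what converts the endpoint evaluations $1-\alpha \ge \frac{k}{2n}$ and $\alpha \ge \frac{k}{2n}$ into the two nonnegative polynomial inequalities above, rather than the weaker statements obtained from $k \le n$. Everything else is either the affine/endpoint observation or routine algebra, so I do not anticipate a genuine obstacle beyond handling the boundary point $\ell = \tfrac12$ and the open/closed interval conventions consistently.
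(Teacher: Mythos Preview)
Your proposal is correct and follows essentially the same approach as the paper: both arguments use the affinity of $\mathbb{E}[u]$ in the location, reduce by symmetry to $\ell\in[0,\tfrac12]$, and check the two endpoints $\ell=0$ (where the bound becomes $1-\alpha\ge n_1/(2n)$, established via the same polynomial inequality $2n_1^2+6n_1n_2+4n_2^2>n_1^2+4n_1n_2+n_2^2$) and $\ell=\tfrac12$ (where $\mathbb{E}[u]=\tfrac12$). The only cosmetic difference is that you split on the sign of the slope $2\alpha-1$ to pin down which endpoint is the minimizer, whereas the paper simply verifies both endpoints and lets linearity do the rest.
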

		
		\subsection{Egalitarian Welfare}
We now provide some results on egalitarian welfare. Specifically, we give a randomized, strategyproof mechanism which maximizes egalitarian welfare subject to the constraints of 2-IFS and 2-UFS in expectation.

The \textbf{Randomized Egalitarian Welfare mechanism} places the facility at $1$ if all agents are in $[0,\frac{1}{2}]$, at $0$ if all agents are in $(\frac{1}{2},1]$, and at $0$ or $1$ with 0.5 probability otherwise.

By considering cases, it is easy to see that this mechanism is optimal and satisfies ideal normative properties.

\begin{proposition}\label{prop:randEWSP}
\textbf{Randomized Egalitarian Welfare mechanism} is strategyproof in expectation, optimal for egalitarian-welfare, and satisfies 2-UFS.
\end{proposition}
\begin{proof}
We first prove strategyproofness. If all agents are in $[0,\frac{1}{2}]$ or all agents are in $(\frac{1}{2},1]$, then each agent has at least $\frac{1}{2}$ expected utility. Any misreport either causes their expected utility to either stay the same or be reduced to $\frac{1}{2}$ from the facility being placed at $0$ or $1$ with probability $\frac{1}{2}$ each. If there is at least one agent in each interval, then an agent can only affect the outcome if it is the only agent in its interval and it misreports to be in the other interval, but this weakly reduces the agent's expected utility.

We now prove egalitarian welfare optimality and satisfaction of 2-UFS. The cases where all agents are in $[0,\frac{1}{2}]$ and all agents are in $(\frac{1}{2},1]$ are trivial, so it remains to examine the case where both intervals have at least one agent. An agent at $x_i$ has $\frac{1}{2}x_i+\frac{1}{2}(1-x_i)=1/2$ expected distance from the facility, hence this mechanism satisfies 2-UFS in expectation. By Lemma~\ref{lem: 01opt}, it suffices to only consider mechanisms which can only place the facility at 0 or 1. Suppose that instead of having $\frac{1}{2}$ probability of placing the facility at either endpoint, we place the facility at $1$ with $\frac{1}{2}+p$ probability and at $0$ with $\frac{1}{2}-p$ probability, where $p\in (0,\frac{1}{2}]$. The expected utility of the rightmost agent is $x_n(\frac{1}{2}-p)+(1-x_n)(\frac{1}{2}+p)=\frac{1}{2}+p(1-2x_n)<\frac{1}{2}$. By a symmetric argument, if the facility was placed at $1$ with $\frac{1}{2}-p$ probability and at $0$ with $\frac{1}{2}+p$ probability, the expected utility of the leftmost agent would be strictly less than $\frac{1}{2}$. Hence, our mechanism is optimal in this case.
\end{proof}
\begin{remark}
As each agent has at least $1/2$ expected distance from the facility under the \textbf{Randomized Egalitarian Welfare mechanism}, this mechanism even satisfies 1-IFS for $n\geq 2$.
\end{remark}
In other words, the approximation ratio of this mechanism for egalitarian welfare is $1$. Recall that the price of fairness can be interpreted as the approximation ratio of the respective optimal mechanism that satisfies the fairness constraint. This leads us to the following corollary.

\begin{corollary}\label{cor:PoF2UFSEW}
In the randomized setting, the price of fairness of 2-UFS for egalitarian welfare is $1$.
\end{corollary}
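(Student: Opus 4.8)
The plan is to derive this directly from the preceding proposition, which establishes that the \textbf{Randomized Egalitarian Welfare mechanism} is simultaneously optimal for egalitarian welfare and satisfies 2-UFS in expectation. The existence of a single mechanism possessing both properties is precisely what is needed to pin the price of fairness at exactly $1$, so the corollary should follow with essentially no additional work.

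First I would recall that, by definition, the price of fairness of 2-UFS for egalitarian welfare is the worst-case ratio (over all location profiles $x$) between the optimal egalitarian welfare and the best egalitarian welfare attainable by a mechanism satisfying 2-UFS in expectation. Since every 2-UFS mechanism is in particular a feasible mechanism, its egalitarian welfare can never exceed the unconstrained optimum; hence this ratio is at least $1$ for every profile, and so the price of fairness is bounded below by $1$. This lower bound holds generically and requires no special instance.

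For the matching upper bound, I would invoke the preceding proposition: for every profile $x$, the \textbf{Randomized Egalitarian Welfare mechanism} attains the optimal egalitarian welfare while satisfying 2-UFS. Consequently, for each $x$ the denominator of the price-of-fairness ratio (the best egalitarian welfare achievable under 2-UFS) already equals the numerator (the unconstrained optimum), so the ratio is exactly $1$ for all profiles. Taking the worst case over $x$ therefore yields a price of fairness of $1$, and combining with the lower bound gives the claim.

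There is essentially no obstacle remaining at this stage: all the substantive effort has already been expended in the preceding proposition, where both optimality and the 2-UFS guarantee are verified for the \emph{same} mechanism. The only point worth stating with care is conceptual rather than computational, namely that the price of fairness is a supremum of per-profile ratios each of which is already known to equal $1$; hence, unlike the utilitarian or deterministic egalitarian cases, no separate worst-case construction is needed to certify the bound.
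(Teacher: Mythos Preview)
Your proposal is correct and takes essentially the same approach as the paper: the corollary is presented immediately after the proposition establishing that the Randomized Egalitarian Welfare mechanism is both egalitarian-optimal and 2-UFS, with the paper simply remarking that the price of fairness equals the approximation ratio of this optimal 2-UFS mechanism, which is $1$. Your write-up is slightly more explicit about the trivial lower bound of $1$, but the substance is identical.
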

This is in stark contrast to the deterministic setting where the respective price of fairness is $n-1$.

		\subsection{2-IFS}
We now analyze utilitarian welfare, beginning with the axiom of 2-IFS. Consider the randomized mechanism below which maximizes the utilitarian welfare subject to 2-IFS:

\textbf{2-IFS Randomized mechanism}
\begin{itemize}
\item If $\sum_{i=1}^nx_i=\frac{n}{2}$, place the facility at $0$ with probability $\frac{1}{2}$ and at $1$ with probability $\frac{1}{2}$.
\item If $\sum_{i=1}^nx_i>\frac{n}{2}$,
\begin{itemize}
\item If $x_1\geq \frac{1}{2n}$, place the facility at $0$.
\item If $x_1<\frac{1}{2n}$, place the facility at $0$ with probability $1-\alpha$, and at $1$ with probability $\alpha$, where $\alpha=\frac{1-2nx_1}{2n(1-2x_1)}$.
\end{itemize}
\item If $\sum_{i=1}^nx_i<\frac{n}{2}$,
\begin{itemize}
\item If $x_n\leq 1-\frac{1}{2n}$, place the facility at $1$.
\item If $x_n>1-\frac{1}{2n}$, place the facility at $0$ with probability $1-\beta$, and at $1$ with probability $\beta$, where $\beta=\frac{1-2nx_n}{2n(1-2x_n)}$.
\end{itemize}
\end{itemize}
The intuition behind this mechanism is as follows. When $\sum_{i=1}^nx_i=\frac{n}{2}$, both facility locations of $0$ and $1$ are tied in terms of maximizing utilitarian welfare, and by placing the facility at either location with probability $\frac{1}{2}$, we achieve 2-IFS in expectation. When $\sum_{i=1}^nx_i>\frac{n}{2}$, the optimal facility location is $0$, so the mechanism places the facility there if it does not violate 2-IFS for any agent, else it places the facility at $1$ with the minimum probability that ensures 2-IFS is ensured for all agents. The case where  $\sum_{i=1}^nx_i<\frac{n}{2}$ is symmetric.

Our proof of the mechanism's welfare-optimality is based on its intuition.

\begin{lemma}\label{lem:2IFSRand}
\textbf{2-IFS Randomized mechanism} is optimal for utilitarian welfare amongst all randomized mechanisms satisfying 2-IFS in expectation.
\end{lemma}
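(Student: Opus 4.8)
The plan is to fix an arbitrary profile $x$ (both the objective and the $2$-IFS constraint decompose profile-by-profile) and reduce the problem to optimizing a single scalar. First I would invoke Lemma~\ref{lem: 01opt}: since every $2$-IFS-in-expectation randomized mechanism is weakly dominated, agent-by-agent in expected utility (hence in utilitarian welfare), by a mechanism supported on $\{0,1\}$, it suffices to consider mechanisms that place the facility at $0$ with probability $q$ and at $1$ with probability $1-q$. For such a mechanism agent $i$ gets expected utility $qx_i+(1-q)(1-x_i)$, so writing $S=\sum_i x_i$ the expected utilitarian welfare is
$$W(q)=qS+(1-q)(n-S)=(n-S)+q(2S-n),$$
a linear function of $q$ whose slope $2S-n$ has the same sign as $S-\tfrac n2$. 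This already explains the three top-level cases: the optimal $q$ is pushed to the largest feasible value when $S>\tfrac n2$, to the smallest feasible value when $S<\tfrac n2$, and is immaterial when $S=\tfrac n2$.

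Next I would characterize the feasible region. The $2$-IFS constraints $qx_i+(1-q)(1-x_i)\ge\frac{1}{2n}$ are each linear in $q$: an upper bound on $q$ when $x_i<\tfrac12$, a lower bound when $x_i>\tfrac12$, and vacuous when $x_i=\tfrac12$. I would show the feasible set is a single interval $[q_{\min},q_{\max}]$ containing $\tfrac12$ (since $q=\tfrac12$ gives every agent expected utility $\tfrac12\ge\frac{1}{2n}$). The key observation is that for $q\ge\tfrac12$ the quantity $qx_i+(1-q)(1-x_i)$ is nondecreasing in $x_i$, so the tightest upper bound is imposed by the leftmost agent $x_1$; symmetrically, for $q\le\tfrac12$ it is nonincreasing in $x_i$, so the tightest lower bound comes from $x_n$. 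Hence $q_{\max}$ depends only on $x_1$ (it equals $1$ exactly when $x_1\ge\frac{1}{2n}$) and $q_{\min}$ only on $x_n$ (it equals $0$ exactly when $x_n\le 1-\frac{1}{2n}$).

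Finally, since $W$ is linear its maximum over $[q_{\min},q_{\max}]$ is attained at the endpoint dictated by the sign of $2S-n$. When $S>\tfrac n2$ the optimum is $q=q_{\max}$, which is $1$ (facility at $0$ with probability $1$) if $x_1\ge\frac{1}{2n}$, and otherwise is the value making agent $1$'s constraint tight; solving $qx_1+(1-q)(1-x_1)=\frac{1}{2n}$ yields exactly $q=1-\alpha$ with $\alpha=\frac{1-2nx_1}{2n(1-2x_1)}$, matching the mechanism. The case $S<\tfrac n2$ is symmetric and gives $q=q_{\min}=1-\beta$, while $S=\tfrac n2$ gives constant welfare $\tfrac n2$, for which the mechanism's choice $q=\tfrac12$ is optimal. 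The main obstacle I anticipate is the feasibility bookkeeping: I must confirm the selected endpoint actually lies in $[q_{\min},q_{\max}]$ — for instance, that when $S>\tfrac n2$ with both $x_1<\frac{1}{2n}$ and $x_n>1-\frac{1}{2n}$, setting $q=q_{\max}$ still satisfies agent $n$'s constraint. This follows from $q_{\min}\le\tfrac12\le q_{\max}$ (a short computation shows $\alpha\le\tfrac12$ iff $n\ge1$) together with the monotonicity observations above, but it is the step requiring the most care, alongside verifying that the algebra reproduces the stated $\alpha$ and $\beta$.
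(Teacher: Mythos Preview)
Your proposal is correct and follows essentially the same approach as the paper: both reduce via Lemma~\ref{lem: 01opt} to mechanisms supported on $\{0,1\}$ and then optimize a one-dimensional linear objective over the feasible interval of probabilities, identifying $x_1$ (resp.\ $x_n$) as the binding constraint. The paper verifies the mechanism's $\alpha$ directly (computing agent $1$'s expected distance equals $\tfrac{1}{2n}$ and noting $\alpha\le\tfrac12$ handles the rest), while you derive $\alpha$ as the LP optimum, but the underlying argument is the same; your anticipated ``feasibility bookkeeping'' obstacle dissolves immediately from your own observation that $q=\tfrac12$ is always feasible, so $q_{\min}\le\tfrac12\le q_{\max}$.
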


We now prove, using an algebraic approach, a tight, constant approximation ratio for this mechanism.

\begin{theorem}\label{thm:2IFSRand}
\textbf{2-IFS Randomized mechanism} has an approximation ratio for utilitarian welfare of $\frac{12}{11}\approx 1.091$.
\end{theorem}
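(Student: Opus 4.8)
The plan is to reduce the whole problem to a one-parameter optimization over the leftmost location $x_1$ for each fixed $n$, and then bound the resulting expression by $\tfrac{12}{11}$ via a factored polynomial inequality. First I would use the reflection symmetry $x \mapsto 1-x$, under which both the mechanism and utilitarian welfare are invariant, to assume without loss of generality that $\sum_i x_i \geq n/2$, so the optimal facility location is $0$ and $\mathrm{OPT} = s := \sum_i x_i$. Two cases give ratio exactly $1$ and are dismissed immediately: if $\sum_i x_i = n/2$ the mechanism randomizes uniformly over $\{0,1\}$ and achieves welfare $n/2 = \mathrm{OPT}$; and if $x_1 \geq \tfrac{1}{2n}$ it places the facility at $0$ deterministically. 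The only nontrivial case is $s > n/2$ together with $x_1 < \tfrac{1}{2n}$, where the facility goes to $1$ with probability $\alpha = \tfrac{1-2nx_1}{2n(1-2x_1)}$, so $\mathrm{ALG} = (1-\alpha)s + \alpha(n-s) = s - \alpha(2s-n)$ and the ratio is $R = \tfrac{s}{s - \alpha(2s-n)}$.

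Next I would fix $x_1$ (hence $\alpha$) and observe that $R = \tfrac{s}{(1-2\alpha)s + \alpha n}$ is strictly increasing in $s$, since its $s$-derivative has numerator $\alpha n > 0$. Thus the worst case places all remaining $n-1$ agents at $1$, giving $s = n-1+x_1$; those agents still satisfy 2-IFS as their expected utility is $1-\alpha \geq 1 - \tfrac{1}{2n}$. Writing $t = x_1$, the ratio becomes a function of $t$ and $n$ alone, and it suffices to prove $g(t,n) \leq \tfrac{1}{12}$, where $R = \tfrac{1}{1 - g(t,n)}$ and $g(t,n) = \tfrac{(1-2nt)(n-2+2t)}{2n(1-2t)(n-1+t)}$.

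The crux is establishing $g(t,n) \leq \tfrac{1}{12}$ for all integers $n \geq 2$ and $t \in [0,\tfrac{1}{2n})$. I would clear denominators and expand the difference of the two sides as a quadratic in $t$, which factors as
\[
n(1-2t)(n-1+t) - 6(1-2nt)(n-2+2t) = (n-3)(n-4) + (10n^2 - 21n - 12)\,t + 22n\,t^2 .
\]
For $n \geq 3$ every coefficient is nonnegative (the constant is $(n-3)(n-4) \geq 0$, the linear term is positive since $10n^2 - 21n - 12 > 0$ at $n=3$, and $22n > 0$), so the expression is nonnegative for $t \geq 0$; for $n = 2$ the quadratic $2 - 14t + 44t^2$ has negative discriminant and positive leading coefficient, hence is positive. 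This yields $R \leq \tfrac{12}{11}$. For tightness I would exhibit the profile with $n = 3$, one agent at $0$ and two at $1$ (equivalently $n=4$ with one at $0$ and three at $1$): here $\alpha = \tfrac16$, $\mathrm{ALG} = \tfrac56\cdot 2 + \tfrac16\cdot 1 = \tfrac{11}{6}$, $\mathrm{OPT} = 2$, and $R = \tfrac{12}{11}$ exactly, consistent with the welfare-optimality of Lemma~\ref{lem:2IFSRand}.

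The main obstacle is the genuinely two-parameter ($t$ and $n$) nature of the optimization: neither the small-$n$ nor the large-$n$ limit is extremal, and the worst case sits at $n \in \{3,4\}$ with $t=0$. The monotonicity-in-$s$ argument is what collapses the configuration to a single degree of freedom, and the decisive simplification is the appearance of the factor $(n-3)(n-4)$ as the constant term of the quadratic, which simultaneously pins down the extremal instances and renders the sign analysis routine; the only extra care needed is the separate treatment of $n = 2$, where the constant term is positive but the linear coefficient is negative, handled through its discriminant.
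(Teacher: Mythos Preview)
Your proof is correct and follows the same overall structure as the paper's: both reduce by symmetry to the case $\sum_i x_i > n/2$, $x_1 < \tfrac{1}{2n}$, observe that the ratio is increasing in $\sum_i x_i$, and so push $x_2,\dots,x_n$ to $1$, leaving a one-parameter problem in $x_1$ for each $n$. The difference lies in how that final optimization is dispatched. The paper asserts (via ``optimization programming'') that for $n\ge 3$ the ratio is maximized at $x_1=0$, then uses calculus on the resulting function of $n$ to locate the maximum at $n\in\{3,4\}$, and handles $n=2$ by a separate one-variable maximization. Your route is more self-contained: clearing denominators in $R\le \tfrac{12}{11}$ yields the quadratic $(n-3)(n-4) + (10n^2-21n-12)t + 22nt^2 \ge 0$, whose nonnegativity is immediate from the signs of the coefficients for $n\ge 3$ and from a discriminant check for $n=2$. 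This simultaneously proves the bound and explains why $n\in\{3,4\}$, $t=0$ are extremal, without any appeal to numerical optimization. What the paper's route buys is the explicit value of the $n=2$ maximum, $\tfrac{1}{9}(22-4\sqrt{10})\approx 1.039$, which your discriminant argument does not produce; what your route buys is a fully rigorous and elementary treatment of the two-parameter optimization that the paper leaves to software.
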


This implies the following price of fairness result for 2-IFS.

\begin{corollary}\label{cor:PoF2IFSUW}
In the randomized setting, the price of fairness of 2-IFS for UW is $\frac{12}{11}\approx 1.091$.
\end{corollary}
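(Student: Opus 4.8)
The plan is to read this off directly from Lemma~\ref{lem:2IFSRand} and Theorem~\ref{thm:2IFSRand}, since (as the footnote accompanying the price-of-fairness definition observes) the price of fairness is by construction the approximation ratio of the welfare-optimal mechanism that satisfies the fairness constraint. First I would unpack the price-of-fairness definition in the randomized setting. The numerator is the unconstrained optimal utilitarian welfare; since the optimal facility location is always an endpoint $0$ or $1$ \citep{CYZ13}, randomization cannot improve on this deterministic optimum, so the numerator is exactly $\max\{\sum_i x_i,\ \sum_i(1-x_i)\}$ and is unaffected by passing from deterministic to randomized mechanisms. The denominator is the best \emph{expected} utilitarian welfare achievable by any randomized mechanism satisfying 2-IFS in expectation.

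The key step is to argue that the denominator is attained, profile by profile, by the \textbf{2-IFS Randomized mechanism}. This is precisely the content of Lemma~\ref{lem:2IFSRand}, which states that this mechanism is optimal for utilitarian welfare among all randomized mechanisms satisfying 2-IFS in expectation. Consequently, the inner maximization $\max_{y\in P(x)}$ in the price-of-fairness expression can be replaced by the expected utilitarian welfare of the \textbf{2-IFS Randomized mechanism} on $x$, without changing the value of the ratio for any profile.

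Having made this substitution, the remaining quantity
$$\max_{x\in[0,1]^n}\frac{\sum_i u(f^*_{UW}(x),x_i)}{\mathbb{E}\!\left[\sum_i u(f(x),x_i)\right]}$$
is, by definition, the worst-case approximation ratio of the \textbf{2-IFS Randomized mechanism} for utilitarian welfare, which Theorem~\ref{thm:2IFSRand} establishes to equal $\tfrac{12}{11}$. Combining these gives the claimed value. There is no substantive obstacle here: all the analytical work sits in the optimality proof (Lemma~\ref{lem:2IFSRand}) and the tight ratio computation (Theorem~\ref{thm:2IFSRand}), and the corollary is a short definitional bookkeeping step. The only point deserving a line of care is confirming that the numerator is genuinely the same in the randomized and deterministic regimes, i.e. that allowing randomization does not raise the \emph{unconstrained} optimum, so that the ratio being computed really is the price of fairness and not some other quantity.
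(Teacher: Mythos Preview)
Your proposal is correct and matches the paper's approach: the corollary is stated immediately after Theorem~\ref{thm:2IFSRand} with only the remark ``This leads to the following price of fairness result for 2-IFS,'' relying on the footnoted observation that the price of fairness equals the approximation ratio of the welfare-optimal mechanism satisfying the constraint (which is the \textbf{2-IFS Randomized mechanism} by Lemma~\ref{lem:2IFSRand}). Your additional sentence confirming that the unconstrained optimum is unchanged under randomization is a sensible clarification but not strictly needed, since the paper treats this as implicit.
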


\subsection{2-UFS}
We now move to analyze the axiom of 2-UFS in the context of utilitarian welfare. As in the previous subsection, we begin by describing a randomized mechanism which maximizes the utilitarian welfare subject to the 2-UFS constraint:

\textbf{2-UFS Randomized mechanism}
\begin{itemize}
\item Order the $m$ \textbf{unique} agent locations such that $x_1<x_2<\dots<x_m$.
\item Let $S_1,\dots,S_m$ denote the groups of agents at the $m$ unique agent locations.
\item If $\sum_{i=1}^m |S_i|x_i=\frac{n}{2}$, place the facility at $0$ with probability $\frac{1}{2}$ and at $1$ with probability $\frac{1}{2}$.
\item If $\sum_{i=1}^m |S_i|x_i>\frac{n}{2}$,
\begin{itemize}
\item Let $k$ denote the index of the largest unique agent location satisfying $x_k<\frac{1}{2}$.
\item For $i$ in $\{1,\dots,k\}$, set $\alpha_i=\frac{|S_i|-2nx_i}{2n(1-2x_i)}$. 
\item Letting $\alpha=\max\{\alpha_1,\dots,\alpha_k\}$, place the facility at $0$ with probability $1-\alpha$ and at $1$ with probability $\alpha$.
\end{itemize}
\item If $\sum_{i=1}^m |S_i|x_i<\frac{n}{2}$,
\begin{itemize}
\item Let $k$ denote the index of the smallest unique agent location satisfying $x_k>\frac{1}{2}$.
\item For $i$ in $\{k,\dots,m\}$, set $\alpha_i=\frac{|S_i|-2nx_i}{2n(1-2x_i)}$.
\item Letting $\alpha=\min\{\alpha_k,\dots,\alpha_m\}$, place the facility at $0$ with probability $1-\alpha$ and at $1$ with probability $\alpha$.
\end{itemize}
\end{itemize}

This mechanism is similar to the 2-IFS Randomized mechanism, but we must now iterate through the groups of agents to find the optimal value of $\alpha$ that guarantees 2-UFS for all agents. Specifically, if $\sum_{i=1}^m |S_i|x_i>\frac{n}{2}$, then $\alpha_i$ denotes the smallest probability weight on location $1$ such that 2-UFS is achieved for $S_i$. Hence by setting $\alpha$ to be the largest $\alpha_i$, we achieve 2-UFS for all agents.

Again, our proof of this mechanism's optimality is based on the aforementioned intuition.

\begin{lemma}\label{lem:2UFSRand}
\textbf{2-UFS Randomized mechanism} is optimal for utilitarian welfare amongst all randomized mechanisms satisfying 2-UFS in expectation.
\end{lemma}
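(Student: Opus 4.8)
The plan is to show that the \textbf{2-UFS Randomized mechanism} is utilitarian-welfare-optimal among all randomized 2-UFS mechanisms by reducing to the two-point support case and then arguing that the mechanism chooses the welfare-maximizing probability weight among all feasible (2-UFS-satisfying) weights. First I would invoke Lemma~\ref{lem: 01opt}, which guarantees that for any 2-UFS randomized mechanism there is another 2-UFS randomized mechanism supported only on $\{0,1\}$ that weakly increases every agent's expected utility (and hence weakly increases utilitarian welfare). This means it suffices to compare against mechanisms that place the facility at $1$ with some probability $q$ and at $0$ with probability $1-q$. Under such a mechanism the expected utilitarian welfare is linear in $q$: writing it as $\sum_i \bigl((1-q)x_i + q(1-x_i)\bigr) = \sum_i x_i + q\bigl(n - 2\sum_i x_i\bigr)$, so welfare is increasing in $q$ exactly when $\sum_i x_i < n/2$ and decreasing when $\sum_i x_i > n/2$ (and constant when equal).

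Next I would characterize the feasible set of $q$ imposed by 2-UFS. For a group $S_i$ at unique location $x_i$, the expected utility is $(1-q)x_i + q(1-x_i) = x_i + q(1-2x_i)$, and the 2-UFS constraint $x_i + q(1-2x_i) \ge \frac{|S_i|}{2n}$ must hold for every group. For groups with $x_i < \tfrac12$ the coefficient $1-2x_i$ is positive, so the constraint gives a lower bound $q \ge \alpha_i = \frac{|S_i| - 2nx_i}{2n(1-2x_i)}$; for groups with $x_i > \tfrac12$ the coefficient is negative, giving an upper bound on $q$; and a group exactly at $\tfrac12$ (if present) imposes a constraint independent of $q$ that is automatically satisfied since $|S_i| \le n$ forces $\tfrac{|S_i|}{2n} \le \tfrac12$. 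I would then verify that the feasible interval for $q$ is nonempty (which follows from Proposition~\ref{optUFS} guaranteeing a 2-UFS solution exists).

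The core of the argument is then a case split mirroring the mechanism's definition. When $\sum_i |S_i| x_i > n/2$, welfare is decreasing in $q$, so the optimal feasible choice is the smallest $q$ satisfying all 2-UFS constraints; since only the groups with $x_i < \tfrac12$ impose lower bounds, this smallest feasible $q$ is exactly $\alpha = \max\{\alpha_1,\dots,\alpha_k\}$, matching the mechanism. The case $\sum_i |S_i| x_i < n/2$ is symmetric, with welfare increasing in $q$ so that the largest feasible $q$ is optimal, yielding $\alpha = \min\{\alpha_k,\dots,\alpha_m\}$ from the groups with $x_i > \tfrac12$. The boundary case $\sum_i |S_i| x_i = n/2$ makes welfare constant in $q$, so the mechanism's choice of $q = \tfrac12$ is (weakly) optimal and, as already checked, satisfies 2-UFS.

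I expect the main obstacle to be rigorously confirming that in each case the extremal feasible $q$ is well-defined and that the relevant one-sided constraints are the only binding ones — in particular, checking that when welfare decreases in $q$ the upper-bound constraints from groups with $x_i > \tfrac12$ do not cut below $\max_i \alpha_i$ (equivalently, that the feasible interval genuinely contains $\max_i \alpha_i$). This amounts to confirming nonemptiness of the feasible interval at the chosen endpoint, which I would handle by appealing to the existence of a feasible 2-UFS solution (Proposition~\ref{optUFS}) together with the monotonicity of welfare in $q$, rather than by a direct interval computation.
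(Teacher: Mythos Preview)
Your proposal is correct and follows essentially the same route as the paper's proof: reduce to $\{0,1\}$-supported mechanisms via Lemma~\ref{lem: 01opt}, then case-split on the sign of $\sum_i |S_i|x_i - \tfrac{n}{2}$ and use linearity of expected welfare in the probability weight $q$ to identify the extremal feasible $q$ as the mechanism's choice of $\alpha$. The only minor difference is that the paper verifies feasibility of the chosen $\alpha$ for the groups at $x_j \ge \tfrac12$ directly by checking $\alpha_i \le \tfrac12$ for every $i$, whereas you appeal to Proposition~\ref{optUFS} (together with Lemma~\ref{lem: 01opt}) to conclude the feasible interval for $q$ is nonempty; both arguments work.
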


Surprisingly, imposing the stronger fairness axiom of 2-UFS as opposed to 2-IFS has a minimal effect on the welfare-optimal mechanism's approximation ratio. Again, the approximation ratio is computed algebraically.
\begin{theorem}\label{thm:2UFSRand}
\textbf{2-UFS Randomized mechanism} has an approximation ratio for utilitarian welfare of $\frac{2}{7}(1+2\sqrt{2})\approx 1.09384$.
\end{theorem}
From Theorem~\ref{thm:2UFSRand}, we have the following corollary.
\begin{corollary}\label{cor:PoF2UFSUW}
In the randomized setting, the price of fairness of 2-UFS for UW is $\frac{2}{7}(1+2\sqrt{2})\approx 1.09384$.
\end{corollary}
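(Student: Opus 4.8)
The plan is to compute the worst-case ratio $\mathrm{OPT}/\mathrm{ALG}$ directly, where $\mathrm{OPT}$ is the unconstrained optimal utilitarian welfare and $\mathrm{ALG}$ is the expected welfare of the \textbf{2-UFS Randomized mechanism}. Write $W:=\sum_{i=1}^m|S_i|x_i$. By the left--right symmetry $x\mapsto 1-x$ it suffices to treat $W\ge n/2$; the case $W=n/2$ gives ratio $1$, and $W<n/2$ is symmetric. Since the unconstrained optimum lies at an endpoint \citep{CYZ13}, for $W\ge n/2$ we have $\mathrm{OPT}=W$ (facility at $0$). The mechanism places the facility at $1$ with probability $\alpha$ and at $0$ with probability $1-\alpha$, so its expected welfare is $\mathrm{ALG}=(1-\alpha)W+\alpha(n-W)=W-\alpha(2W-n)$, and the quantity to bound is $W/\big(W-\alpha(2W-n)\big)$.

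Next I would pin down $\alpha$ and bound $W$. The value $\alpha=\max_{i:\,x_i<1/2}\alpha_i$ is attained by some binding group $S_j$ at location $t:=x_j<\tfrac12$, where $\alpha=\alpha_j=\frac{|S_j|-2nx_j}{2n(1-2x_j)}$ is exactly the smallest probability on $1$ that raises group $S_j$'s expected utility to its $2$-UFS threshold $|S_j|/(2n)$. Two facts drive the reduction: first, for fixed $\alpha$ the ratio is increasing in $W$ (its derivative in $W$ equals $\alpha n$ over the squared denominator, hence is positive when $W>n/2$); second, every non-binding agent contributes at most $1$ to $W$, so $W\le |S_j|\,t+(n-|S_j|)$. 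Writing $s:=|S_j|/n$ and $w:=1-s(1-t)$, monotonicity lets me replace $W/n$ by its upper bound $w$, reducing the ratio for \emph{every} instance to a function of only the two parameters $(s,t)$, namely $\frac{w}{w-\alpha(2w-1)}$ with $\alpha=\frac{s-2t}{2(1-2t)}$ and $w=1-s(1-t)$.

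The remaining step is the two-variable optimization over $t\in[0,\tfrac12)$ and $s\in(2t,1)$ (the range making $\alpha>0$ and $w>\tfrac12$). I expect the maximum to occur on the boundary $t=0$: I would confirm this by checking that $\partial/\partial t$ of the objective is negative there (intuitively, a group at the origin has the most demanding fair-share requirement per unit of weight it removes from $W$). Setting $t=0$ gives $w=1-s$, $\alpha=s/2$, and objective $\frac{1-s}{1-\frac32 s+s^2}$; its first-order condition is $s^2-2s+\tfrac12=0$, i.e. $s=1-\tfrac{1}{\sqrt2}$, and back-substitution yields $\frac{2\sqrt2}{\,4-\sqrt2\,}=\frac{2}{7}(1+2\sqrt2)$, the claimed value.

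Finally, for tightness I would exhibit the matching instance: place a $\big(1-\tfrac{1}{\sqrt2}\big)$-fraction of the agents at $0$ and the remaining $\tfrac{1}{\sqrt2}$-fraction at $1$. Then $W=\tfrac{1}{\sqrt2}\,n>\tfrac n2$, the group at $0$ makes the constraint exactly binding with $\alpha=s/2=|S_1|/(2n)$, and the ratio equals $\frac{2}{7}(1+2\sqrt2)$. The main obstacle is the two-variable optimization --- in particular, rigorously ruling out interior maximizers in $t$ so that the worst case provably has the binding group at the origin, rather than merely verifying the boundary candidate; a secondary point requiring care is checking that the monotonicity/upper-bound substitution on $W$ is simultaneously valid for arbitrary instances and that the induced parameters $(s,t)$ indeed range over the stated domain.
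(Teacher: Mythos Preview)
Your approach is essentially identical to the paper's. The corollary in the paper is stated as an immediate consequence of Theorem~\ref{thm:2UFSRand} (together with Lemma~\ref{lem:2UFSRand}, which certifies the mechanism is optimal among 2-UFS randomized mechanisms), and the paper's proof of that theorem proceeds exactly as you do: it writes the ratio as $\frac{\sum_i x_i}{(1-\alpha)\sum_i x_i+\alpha\sum_i(1-x_i)}$, identifies the binding group $S_j$, uses monotonicity in $\sum_i x_i$ to push all non-$S_j$ agents to $1$, substitutes $r=|S_j|/n$, and arrives at the same two-variable expression in $(x_j,r)$. Notably, the obstacle you flag --- rigorously proving the maximum occurs at $x_j=0$ rather than at an interior point --- is not analytically resolved in the paper either; it simply states that ``some optimization programming shows'' the maximum is at $r=1-\tfrac{1}{\sqrt{2}}$, $x_j=0$. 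Your one-variable analysis at $t=0$ and the explicit tightness instance are in fact more detailed than what the paper provides.
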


	\section{Extension 1: Proportional Fairness}\label{section:PF}
In our analyses of price of fairness and randomized mechanisms, we have considered 2-IFS and 2-UFS, which give minimum distance guarantees for individual agents and groups of agents at the same location, respectively. One downside of the 2-UFS definition is that agents located near each other but not at the same location are considered to be in separate groups. An axiom which accounts for groups of agents located relatively close to each other is Proportional Fairness (PF), from \citep{ALLW21}. As with IFS and UFS, a PF solution may not exist so we define approximate $\alpha-$PF as follows:		
			\begin{definition} [$\alpha$-PF]\label{def: alphaPF}
				Given a profile of locations $\boldsymbol{x}$,  a facility location $y$ satisfies \emph{$\alpha$-PF } if for any set of agents $S$ within range $r:=\max_{i\in S}\{x_i\}-\min_{i\in S}\{x_i\}$, $$u(y, x_i)\ge \frac{1}{\alpha}(|S|/(n))-r \quad \forall i\in S.$$
			\end{definition}

			Note that $\alpha-$PF implies $\alpha-$UFS, and therefore also implies $\alpha-$IFS.
				
				However, $\alpha-$UFS does not imply $\alpha-$PF, hence $\alpha-$PF is a stronger notion than $\alpha-$UFS.
				\begin{lemma}\label{lemma:UFPF-2}
					For $\alpha=2$, there exists an $\alpha-$UFS facility location $y$ that does not satisfy $\alpha-$PF.
				\end{lemma}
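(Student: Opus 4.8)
The plan is to exhibit a single explicit location profile together with a facility location $y$ that satisfies $2$-UFS but violates $2$-PF, since Lemma~\ref{lemma:UFPF-2} is an existence (counterexample) statement. The natural strategy is to separate the two notions exactly where they differ: $2$-UFS only constrains groups of agents sharing an \emph{identical} location, whereas $2$-PF constrains \emph{every} subset $S$, including clusters of agents that are near each other but not coincident. So I want a profile where no single location hosts a large group (keeping the UFS constraints weak and easy to satisfy), but where a spread-out cluster of many agents collectively forces a strong PF constraint that $y$ fails.

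First I would choose $n$ agents placed at $n$ distinct but tightly clustered points near one end of the interval, say all within a small window of width $r$ around some point close to $0$. Because all locations are distinct, each UFS group $S$ has $|S|=1$, so the only UFS requirement is the IFS bound $u(y,x_i)\ge \frac{1}{2n}$ for every $i$; I would verify that placing $y$ somewhere in the interior (away from the cluster) meets this for every agent. Second, I would take $S$ to be the \emph{entire} set of $n$ clustered agents: then $|S|/n = 1$ and the range $r$ is small, so the $2$-PF requirement reads $u(y,x_i)\ge \frac{1}{2}-r$ for each $i\in S$. Third, I would check the arithmetic: by making $r$ small and positioning the cluster so that the nearest agent in $S$ is at distance strictly less than $\frac12 - r$ from $y$, the chosen $y$ satisfies every UFS constraint but fails the single PF constraint coming from $S$, completing the counterexample.

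The main obstacle — really the only delicate point — is balancing the two inequalities simultaneously: I need the cluster close enough to $y$ that the PF bound $\frac12 - r$ is violated for the nearest agent, yet every agent must still be at distance at least $\frac{1}{2n}$ from $y$ to preserve $2$-UFS (IFS). Concretely, I would place the cluster around a point at distance a little under $\frac12$ from $y$ (for instance put $y$ near the far endpoint and the cluster near the near endpoint), so the nearest agent's distance lies strictly between $\frac{1}{2n}$ and $\frac12 - r$; choosing $r$ and $n$ appropriately makes this window nonempty. A clean concrete instance suffices: a small number of agents, e.g. $n=2$ or $n=3$, placed at slightly separated points, with $y$ chosen by hand, and then one simply plugs in the two defining inequalities to confirm UFS holds and PF fails.
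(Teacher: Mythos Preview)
Your proposal is correct and takes a genuinely different route from the paper's proof. The paper constructs an instance with $n=10$ agents in two \emph{groups}: seven agents at $c_1=0.35$ and three at $c_2=0.55$, then places $y=0.71$; this point lies just outside both UFS balls (radii $0.35$ and $0.15$) and hence satisfies $2$-UFS, but the PF constraint coming from the combined set $S$ of all ten agents (range $r=0.2$) demands $d(y,c_2)\ge 0.3$, which fails since $d(y,c_2)=0.16$. By contrast, you exploit the distinction between UFS and PF at the opposite extreme: you make every agent a singleton group so that $2$-UFS collapses to the much weaker $2$-IFS bound $\frac{1}{2n}$, while the full-cluster PF bound remains close to $\frac12$. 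Your construction is arguably cleaner conceptually (the UFS side becomes trivial), and works already for $n=2$ (e.g.\ agents at $0.1,0.2$ and $y=0.5$), whereas the paper's example requires juggling two nontrivial group radii. Both approaches isolate the same phenomenon---PF aggregates nearby agents into a single constraint while UFS does not---but from different directions.
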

						\begin{proof}
					Assume that $n=10$ and that we have 2 groups of agents, with the first group of $7$ agents located at the point $c_1=0.35$, and the second group of $3$ agents located at the point $c_2=0.55$. We consider two intervals $B_1$ and $B_2$ respectively with centers $c_1$ and $c_2$ and radius $\frac{|S_1|}{2n}=\frac{7}{20}=0.35$ and $\frac{|S_2|}{2n}=\frac{3}{20}=0.15$. We set the facility location $y$ at the point $0.71$. Since point $y$ is outside of the two intervals, it satisfies 2-UFS. However, it does not satisfy the 2-PF inequality:
					$d(y,c_2)=0.16 \ngeq \frac{|S_1|}{20}+\frac{|S_2|}{20}-r=0.3$.			
				\end{proof}						
		We now show that a 2-PF solution always exists. The proof uses induction on the number of groups of agents at the same location.

				\begin{theorem}\label{exist2PF}
					A   2-PF solution always exists.
				\end{theorem}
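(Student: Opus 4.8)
The plan is to show a 2-PF solution always exists by exploiting the structure established in Proposition~\ref{optUFS}, namely that the relevant "forbidden" regions around agents have total length at most $1$. For the $\alpha=2$ case, the $\alpha$-PF constraint from Definition~\ref{def: alphaPF} requires, for every group $S$ spanning range $r$, that each agent in $S$ be at distance at least $\frac{|S|}{2n}-r$ from the facility. A natural first step is to translate this into a collection of forbidden open intervals: a facility location $y$ fails the constraint for group $S$ precisely when $y$ lies strictly within $\frac{|S|}{2n}-r$ of some agent $x_i\in S$. Since only groups with $\frac{|S|}{2n}>r$ impose any real restriction, it suffices to consider those. The goal is then to prove that the union of all these forbidden intervals cannot cover the whole of $[0,1]$, so that a feasible $y$ survives.

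\medskip

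\noindent First I would reduce the infinitely many groups $S$ to a manageable finite family. For a fixed set of agents spanning some range, the binding constraint is tightest (largest forbidden radius) when the group is a \emph{contiguous block} of agents in the sorted order $x_1\le\cdots\le x_n$, since adding interior agents increases $|S|$ without increasing $r$, and the forbidden radius $\frac{|S|}{2n}-r$ grows. Thus it suffices to verify feasibility against all contiguous blocks $S=\{x_a,\dots,x_b\}$, of which there are only $O(n^2)$. For each such block the forbidden region is $\bigcup_{i=a}^{b}\bigl(x_i-(\tfrac{b-a+1}{2n}-r),\,x_i+(\tfrac{b-a+1}{2n}-r)\bigr)$ where $r=x_b-x_a$; because the points $x_a,\dots,x_b$ all lie within $r$ of each other while the radius is $\frac{|S|}{2n}-r$, this union is contained in a single interval centered on the block of total length at most $2(\frac{|S|}{2n}-r)+r = \frac{|S|}{n}-r \le \frac{|S|}{n}$.

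\medskip

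\noindent The crux of the argument is then a covering/measure bound. The plan is to show that the total length of all forbidden intervals is strictly less than $1$ (or, more carefully, that their union cannot be all of $[0,1]$), which forces a feasible point to exist. The clean way to do this is to argue inductively or via a sweep: process the agents left to right and show that the forbidden region contributed "locally" around any cluster of agents never consumes more interval length than the cluster is entitled to, mirroring the ball-packing argument sketched for Proposition~\ref{optUFS} where the total ball length equals $1$. Concretely, I would show that the worst case reduces to the UFS-type packing already handled, because the extra $-r$ term in the PF constraint exactly compensates for the spread of a group: a block spanning range $r$ forbids a region of length at most $\frac{|S|}{n}-r$, and summing a disjoint partition of the agents into blocks gives total forbidden length at most $\sum_j \frac{|S_j|}{n} - \sum_j r_j = 1 - \sum_j r_j \le 1$, with the residual interval gaps providing the slack needed to place $y$.

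\medskip

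\noindent \textbf{The main obstacle} I anticipate is handling \emph{overlapping} group constraints rigorously: distinct contiguous blocks can impose forbidden intervals that overlap, so one cannot naively sum lengths over all $O(n^2)$ blocks. The careful step is to show that it is enough to consider a single optimal \emph{partition} of the agents into blocks (or to take, at each point of $[0,1]$, the single most restrictive block covering it) and that the $-r$ correction makes the binding constraints behave additively rather than super-additively. I would resolve this by a greedy/exchange argument showing that any point $y$ ruled out by some block is already ruled out by a block in a fixed canonical partition, thereby bounding the total forbidden measure by $1$ minus the sum of intra-block ranges, and then invoking that $[0,1]$ has length exactly $1$ to guarantee a surviving feasible location (handling the boundary/closure carefully so that the endpoints $0$ and $1$, which are never interior to a forbidden open interval, can serve as fallback feasible points).
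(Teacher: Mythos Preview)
Your approach is a direct covering/measure argument and is genuinely different from the paper's, which proceeds by induction on the number of distinct agent locations. Your reduction to contiguous blocks in the sorted order is a clean simplification the paper never states explicitly, and your length bound $\frac{|S|}{n}-r$ for the forbidden interval of a block is correct and useful. The paper instead places an open ball of radius $\frac{|S_i|}{2n}$ around each group, then splits into cases: if all balls are pairwise disjoint it uses an auxiliary lemma (their Lemma~\ref{lemma:intersection}) to locate a point outside every ball; if two balls overlap it \emph{merges} the two groups into one (at a new center, with radius equal to the sum of radii), applies the induction hypothesis to the resulting $m$-group instance, and then does a somewhat lengthy geometric case analysis to verify that the solution for the merged instance still satisfies the 2-PF inequalities for the two original groups.

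The genuine gap in your proposal is exactly the step you flag as the obstacle: you never actually prove that a single canonical partition dominates all $O(n^2)$ contiguous-block constraints. Asserting that a ``greedy/exchange argument'' will do this is not enough---this is the entire content of the theorem, and the paper spends several pages of case analysis establishing the analogous fact (that the merged-ball constraint dominates the two constituent-group constraints and their combination). Your measure bound $1-\sum_j r_j\le 1$ applies to any fixed partition, but different blocks can be ``tightest'' at different points of $[0,1]$, so without the domination claim you cannot conclude that the union over \emph{all} blocks has measure at most $1$. The paper's iterative merging is precisely the mechanism that produces the dominating partition; if you want to avoid induction you would need to prove directly that the partition obtained by repeatedly merging overlapping singleton balls has forbidden region containing every other block's forbidden region, which is essentially the same work.

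A smaller error: your parenthetical that the endpoints $0$ and $1$ ``are never interior to a forbidden open interval'' is false---an agent located at $0$ puts $0$ strictly inside its forbidden ball. The paper handles the boundary case through its Lemma~\ref{lemma:intersection}, which argues separately depending on whether some ball protrudes past an endpoint.
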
	
				\begin{proof}[Proof Sketch]					
					  We prove the theorem by induction on the number of groups $m$, where each group consists of agents at the same location. When $m=1$, i.e,  all the agents are at the same point, $2$-PF existence follows from $2$-UFS existence.   Assume that for any $k$ groups of agents where $k\leq m$, there exists a $2$-PF solution. Suppose we have $m+1$ groups of agents placed at centers $c_1, c_2,...,c_{m+1}$ which are ordered from left to right. Set an open interval $B_i$ with radius $\frac{|S_i|}{2n}$ around each center $c_i$. We consider several cases based on the intersection of open intervals.  If all the open intervals are disjoint, it can be shown there exists a point $y\in [0,1]$ which lies outside the union of open intervals $B_1\cup \dots \cup B_{m+1}$, satisfying the $2$-PF inequality. If there exists two open intervals, say $B_1$ and $B_2$, intersecting each other, they are merged with the agents placed at a new center $c'_1$. We then set an open interval $B'_1$ centered at $c'_1$ with radius $\frac{|S_1|}{2n}+\frac{|S_2|}{2n}$. Now, we have $m$ groups of agents placed at  $c'_1, c_3,\dots,c_{m+1}$, and from our inductive assumption, we know a $2$-PF solution exists.
				\end{proof}	
						
				From Theorem~\ref{optUFS}, we see that 2-PF is the optimal approximation of PF for the obnoxious facility location problem.

			\section{Extension 2: Hybrid Model}
In the hybrid model, agents either want to be located close to the facility (as in the FLP), or wish to be located far away from the facility (as in our OFLP model). Such a model has several real-world applications such as the placement of schools or religious places of worship; families with children or religious people would want to live near the facility for convenience, whilst others would want to be far from the facility due to the increased noise and traffic. In our model, we say an agent is type $C$ if it is a classic agent and prefers to be closer to the facility, and an agent is type $O$ if it is an obnoxious agent and prefers to be further away from the facility.\footnote{Our model is based on the model presented by \citet{FeSu15a}.} We denote the set of classic agents as $N_C$ and the set of obnoxious agents as $N_O$.

A type $C$ agent has utility $u(y,x_i)=1-d(y,x_i)$ and a type $O$ agent has utility $u(y,x_i)=d(y,x_i)$.\footnote{This choice of utility function is adapted from \citep{FeSu15a,ALLW21}. We refer the reader to those papers for a justification of the utility model.}

When defining IFS and UFS in the hybrid model, we use definitions consistent with \citep{ALLW21} and this paper. Our definition of Hybrid-Individual Fair Share (H-IFS) provides an appropriate distance guarantee for each agent.
\begin{definition}[Hybrid-Individual Fair Share (H-IFS)]
Given a profile of locations $x$, a facility location $y$ satisfies Hybrid-Individual Fair Share (H-IFS) if for all $i\in N_C$,
$$u(y,x_i)\geq \frac{1}{n}\quad \text{or, equivalently,}\quad d(y,x_i)\leq 1-\frac{1}{n},$$
and for all $i\in N_O$,
$$u(y,x_i)\geq \frac{1}{2n}\quad \text{or, equivalently,}\quad d(y,x_i)\geq \frac{1}{2n}.$$
\end{definition}

When defining UFS, we aim to capture proportional fairness guarantees for subsets of agents of the same type at the same location. Consider every subset $S\subseteq N$ of agents at the same location, where $S=S_C\cup S_O$. $S_C$ denotes the agents of $S$ that are of type $C$, and $S_O$ denotes the agents of $S$ that are of type $O$.

\begin{definition}[Hybrid-Unanimous Fair Share (H-UFS)]
Given a profile of locations $x$ such that a subset of $S_j\subseteq N$ agents\footnote{$j\in \{C,O\}$} share the same type and location, a facility location $y$ satisfies Hybrid-Unanimous Fair Share (H-UFS) if for all $i\in S_C$,
$$u(y,x_i)\geq \frac{|S_C|}{n} \quad \text{or, equivalently,}\quad d(y,x_i)\leq 1-\frac{|S_C|}{n},$$
and for all $i\in S_O$,
$$u(y,x_i)\geq \frac{|S_O|}{2n} \quad \text{or, equivalently,}\quad d(y,x_i)\geq \frac{|S_O|}{2n}.$$
\end{definition}
\begin{example}
Suppose there are $n-k$ type $C$ agents and $k$ type $O$ agents, all at the same location. The facility needs to be between $\frac{k}{2n}$ and $\frac{k}{n}$ distance from the group.
\end{example}
Although our definitions have a discrepancy in utility functions between the classic and obnoxious agents, we have specified them to be consistent with related literature and to be the optimal bounds such that a solution is guaranteed to exist. Furthermore, existence of a H-UFS solution under our definition implies existence of a solution under a weaker definition where a set $S_C$ of classic agents at the same location instead have a utility guarantee of $\frac{|S_C|}{2n}$.
\begin{theorem}\label{thm: hybrid}
Under the hybrid model, a H-UFS solution always exists.
\end{theorem}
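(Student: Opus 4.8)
The plan is to decompose the H-UFS requirement by agent type and treat the two families on different footings: the type-$C$ (classic) constraints are \emph{inclusion} constraints (the facility must lie inside a ball around each classic group), while the type-$O$ (obnoxious) constraints are \emph{exclusion} constraints (the facility must avoid an open ball around each obnoxious group). First I would collapse all classic constraints into a single feasibility interval. For a classic group $S_C$ at location $c$, the requirement $d(y,x_i)\le 1-\frac{|S_C|}{n}$ confines $y$ to $[c-R_C,\,c+R_C]$ with $R_C:=1-\frac{|S_C|}{n}$; intersecting over all classic groups and with $[0,1]$ yields an interval $I_C=[\ell,u]$ where $\ell=\max\{0,\max_C(c-R_C)\}$ and $u=\min\{1,\min_C(c+R_C)\}$. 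A facility location satisfies the classic part of H-UFS if and only if it lies in $I_C$.

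The key quantitative step is to show $I_C$ is nonempty, and in fact that $u-\ell\ge \frac{|N_O|}{n}$. I would prove the equivalent bound $\ell+(1-u)\le \frac{|N_C|}{n}$ by a short case analysis on which classic groups attain the two maxima. Writing the ``left push'' of a group as $\max\{0,c-1+\frac{|S_C|}{n}\}$ and its ``right push'' as $\max\{0,\frac{|S_C|}{n}-c\}$, each is at most $\frac{|S_C|}{n}$; when both maxima are attained by the same group the sum reduces to at most $\frac{|S_C|}{n}$, and when attained by distinct groups $C_1,C_2$ the sum equals $(c_{C_1}-c_{C_2})+\frac{|S_{C_1}|+|S_{C_2}|}{n}-1\le \frac{|S_{C_1}|+|S_{C_2}|}{n}\le \frac{|N_C|}{n}$, using $c_{C_1}-c_{C_2}\le 1$. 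Since every agent lies in exactly one group, $\sum_O\frac{|S_O|}{n}+\sum_C\frac{|S_C|}{n}=1$, so this gives $u-\ell=1-\ell-(1-u)\ge 1-\frac{|N_C|}{n}=\frac{|N_O|}{n}\ge 0$, which simultaneously shows $I_C\neq\emptyset$ and bounds its length from below.

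It then remains to place the facility inside $I_C$ while respecting the obnoxious constraints. Each obnoxious group $S_O$ at $c$ forbids the \emph{open} ball $(c-\frac{|S_O|}{2n},\,c+\frac{|S_O|}{2n})$, so the total length of the forbidden open intervals is at most $\sum_O\frac{|S_O|}{n}=\frac{|N_O|}{n}\le u-\ell$. I would finish by invoking the elementary fact that finitely many \emph{open} intervals of total length at most $u-\ell$ cannot cover the \emph{closed} interval $[\ell,u]$ when $u>\ell$: any open cover of a nondegenerate compact interval has total length strictly greater than the interval's length, since covering the endpoints forces the covering intervals to protrude and contribute positive measure outside $[\ell,u]$. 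Hence some point of $I_C$ escapes every obnoxious ball and satisfies all H-UFS constraints. The degenerate case $u=\ell$ forces $\frac{|N_O|}{n}\le 0$, i.e.\ there are no obnoxious groups, so the unique point of $I_C$ is already feasible; the all-obnoxious case ($|N_C|=0$, $I_C=[0,1]$) recovers the 2-UFS existence argument of Proposition~\ref{optUFS}.

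The main obstacle is the length estimate $u-\ell\ge \frac{|N_O|}{n}$: it is precisely the identity that the ``budget'' consumed by the inclusion constraints leaves residual room exactly matching the obnoxious budget which makes the two constraint families compatible, and making the case analysis tight (especially for groups sitting at the endpoints $0$ and $1$) is where the work concentrates. A second, easily overlooked subtlety is that a bare measure comparison only yields $\le$; the strict open-cover argument is what rules out the degenerate ``exact tiling'' equality case and guarantees a genuine feasible point rather than merely a feasible set of measure zero.
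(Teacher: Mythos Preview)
Your proposal is correct and follows the same two-stage skeleton as the paper: reduce the classic constraints to a single feasibility interval $I_C$, show $|I_C|\ge \frac{|N_O|}{n}$, and then argue that the obnoxious open balls (whose total length is $\frac{|N_O|}{n}$) cannot cover $I_C$. Where you diverge is in the proof of the length bound. The paper establishes $|I_C|\ge\frac{|N_O|}{n}$ by an iterative transformation argument---shifting the classic group that realises the extremal infeasible interval to an endpoint and showing this weakly shrinks $I_C$, then reading off the bound from the degenerate configuration---whereas you bound $\ell+(1-u)\le\frac{|N_C|}{n}$ directly by a two-line case split on whether the same or different classic groups attain the two extrema. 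Your argument is shorter and avoids the somewhat delicate monotonicity check the paper's transformation needs; the paper's route, on the other hand, makes the tight configurations explicit. For the covering step, the paper simply gestures at ``a similar argument to the proof of Proposition~\ref{optUFS}''; your open-cover observation (an open set containing a nondegenerate compact interval must have strictly larger measure, so open balls of total length exactly $|I_C|$ cannot cover $I_C$) is a cleaner and more self-contained way to close the gap, and it correctly handles the subtlety that the obnoxious balls need not be centred inside $I_C$.
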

				
\section{Discussion}
In this paper we have formulated proportional fairness axioms for the obnoxious facility location problem, and given welfare-optimal deterministic and randomized mechanisms satisfying these axioms. In both the deterministic and randomized setting, we prove tight price of fairness bounds for 2-IFS and 2-UFS, for the objectives of utilitarian and egalitarian welfare. These correspond to the approximation ratios of the respective welfare-optimal mechanisms. For the deterministic utilitarian welfare-optimal mechanisms, we prove existence of pure $\epsilon$-Nash equilibria, linear $\epsilon$-price of anarchy bounds, and constant $\epsilon$-price of stability bounds. We also give a randomized, strategyproof mechanism satisfying 2-UFS with a constant utilitarian approximation ratio.

Future directions to this work could stem from our proposed extension of 2-PF, as well as the extension of our proportional fairness axioms to the \emph{hybrid facility location model}. Further research could compute the price of fairness for these two extensions, and the prices of anarchy and stability for the welfare-optimal mechanisms in these settings.

Further extensions to the price of fairness results could involve different objective and utility functions. It is also worth analyzing the Nash equilibria of the randomized utilitarian welfare-optimal mechanisms, as they are not strategyproof in expectation. Although our proportional fairness axioms are incompatible with strategyproofness in the deterministic setting, we may consider weaker notions of strategyproofness which may be compatible with our fairness properties.

\section*{Acknowledgements}
Bo Li is funded by NSFC under Grant No. 62102333 and HKSAR RGC under Grant No. PolyU 15224823. We would like to acknowledge the helpful feedback and suggestions from Minming Li.

\bibliographystyle{named}
\bibliography{aamasbib}
\newpage
\appendix
\section{Proof of Theorem~\ref{thm: 2-IFS UW}}\label{app: 2-IFS USW}
\begin{customthm}{\ref{thm: 2-IFS UW}}
The price of $2-$IFS for utilitarian welfare is 2, and this bound is tight.
\end{customthm}
 			\begin{proof}
 			We first prove the lower bound. Suppose $n$ is even, and that the agents are located at $\frac{1}{2n}-\epsilon$, $\frac{3}{2n}-2\epsilon$, $\dots$, $\frac{n-1}{2n}-\frac{n}{2}\epsilon$, $\frac{n+1}{2n}+\frac{n}{2}\epsilon$, $\dots$, $\frac{2n-3}{2n}+2\epsilon$, $\frac{2n-1}{2n}+\epsilon$ for some sufficiently small $\epsilon$ (see, e.g. Figure~\ref{fig:UW}). Under this symmetric profile, either a facility location of $0$ or $1$ leads to the maximum utilitarian welfare of $\frac{n}{2}$. The only facility locations satisfying 2-IFS are within the interval $[\frac12-\frac{n}{2}\epsilon, \frac12+\frac{n}{2}\epsilon]$. Any location in this interval gives the same utilitarian welfare as there are an equal number of agents on both sides, so suppose the facility is at $\frac{1}{2}$. This corresponds to a utilitarian welfare of $\frac{n}{4}+\epsilon n(1+\frac{n}{2})$. Taking the limit $\epsilon\rightarrow 0$ gives a ratio of $2$.
 			
 			We now prove the upper bound. Suppose without loss of generality that the optimal facility location is 0. We define a sufficiently small $\epsilon>0$ which will be used in specifying certain agent locations, but is negligible in the computation of the welfare ratio.

\textbf{Case 1: Suppose that the optimal $2-$IFS facility location $y^*$ satisfies $y^*\in [x_k,x_{k+1}]$, where $k\leq \frac{n}{2}$.} 

Since $y^*$ is the optimal $2-$IFS facility location, any facility location left of $x_k$ must violate $2-$IFS. To see this, suppose there exists some $y'<x_k$ such that $y'$ satisfies $2-$IFS. A facility placed at $y'$ corresponds to a higher utilitarian welfare than $y^*$ as it is more distant from a majority of agents, leading to a contradiction. Furthermore, the welfare ratio 
\begin{align*}
&\frac{\sum_i u(f^*_{UW}(x),x_i)}{\max_{y\in 2IFS(x)}\sum_i u(y,x_i)}=\frac{\sum_{i=1}^nx_i}{\sum_{i=1}^k(y^*-x_i)+\sum_{i=k+1}^n(x_i-y^*)}
\end{align*}
increases with $\sum^k_{i=1}x_i$, so we must maximize $x_1,\dots,x_k$ whilst ensuring any facility location left of $x_k$ violates $2-$IFS. We therefore deduce that $x_i=\frac{2i-1}{2n}-i\epsilon$ for $i\in\{1,\dots,k\}$, and we therefore have
\begin{align*}
\max_{x\in [0,1]^n}\frac{\sum_i u(f^*_{UW}(x),x_i)}{\max_{y\in 2IFS(x)}\sum_i u(y,x_i)}&=\max_{x\in [0,1]^n}\frac{\sum_{i=1}^nx_i}{\sum_{i=1}^k(y^*-x_i)+\sum_{i=k+1}^n(x_i-y^*)}\\
&=\max_{x_{k+1},\dots,x_n\in [0,1]}\frac{\sum_{i=1}^k(\frac{2i-1}{2n}-i\epsilon)+\sum_{i=k+1}^n x_i}{\sum_{i=1}^k(y^*-(\frac{2i-1}{2n}-i\epsilon))+\sum_{i=k+1}^n(x_i-y^*)}.
\end{align*}
Now for the optimal facility location to be $0$, we must have $\sum_ix_i\geq \sum_i(1-x_i)$, or $\sum_ix_i\geq \frac{n}{2}$. We rewrite this as $\sum_{i=k+1}^nx_i\geq \frac{n}{2}-\sum_{i=1}^kx_i$. Now the welfare ratio increases as $\sum_{i=k+1}^nx_i$ decreases, so it is maximized w.r.t $x_{k+1},\dots,x_n$ when we have $\sum_{i=k+1}^nx_i= \frac{n}{2}-\sum_{i=1}^kx_i$ (which results in location $1$ being tied with $0$ as the optimal facility location).

Substituting this into the welfare ratio, we have
\begin{align*}
\max_{x\in [0,1]^n}\frac{\sum_i u(f^*_{UW}(x),x_i)}{\max_{y\in 2IFS(x)}\sum_i u(y,x_i)}&=\max_{x_{k+1},\dots,x_n\in [0,1]}\frac{\sum_{i=1}^k(\frac{2i-1}{2n}-i\epsilon)+\sum_{i=k+1}^n x_i}{\sum_{i=1}^k(y^*-(\frac{2i-1}{2n}-i\epsilon))+\sum_{i=k+1}^n(x_i-y^*)}\\
&=\frac{\sum_{i=1}^k(\frac{2i-1}{2n}-i\epsilon)+\frac{n}{2}-\sum_{i=1}^k(\frac{2i-1}{2n}-i\epsilon)}{(2k-n)y^*-\sum_{i=1}^k(\frac{2i-1}{2n}-i\epsilon)+\frac{n}{2}-\sum_{i=1}^k(\frac{2i-1}{2n}-i\epsilon)}\\
&=\frac{n/2}{(2k-n)y^*-2(\frac{1}{2n}+\dots+\frac{2k-1}{2n})+\frac{n}{2}+2\sum^k_{i=1}i\epsilon}\\
&=\frac{n/2}{(2k-n)y^*+\frac{n}{2}-\frac{k^2}{n}+2\sum^k_{i=1}i\epsilon}.
\end{align*}
Since $k\leq \frac{n}{2}$, shifting a facility within $(x_k,x_{k+1})$ slightly to the left causes the total utility to weakly increase as there are a greater number of agents who gain utility than those who lose utility. Therefore as $y^*$ is the optimal $2-$IFS facility, it must be as close to $x_k$ as possible, at $y^*=\frac{k}{n}-k\epsilon$. Substituting this into the welfare ratio (and ignoring the negligible $\epsilon$), we have
\begin{align*}
\max_{x\in [0,1]^n}\frac{\sum_i u(f^*_{UW}(x),x_i)}{\max_{y\in 2IFS(x)}\sum_i u(y,x_i)}&=\frac{n/2}{(2k-n)y^*+\frac{n}{2}-\frac{k^2}{n}}\\
&=\frac{n/2}{(2k-n)\frac{k}{n}+\frac{n}{2}-\frac{k^2}{n}}\\
&=\frac{n/2}{\frac{k^2}{n}-k+\frac{n}{2}}.
\end{align*}

Simple calculus shows that the denominator is minimized (w.r.t. $k\in (0,\frac{n}{2}]$) when $k=\frac{n}{2}$, resulting in a welfare ratio of $2$. Therefore when the optimal $2-$IFS facility location $y^*$ satisfies $y^*\in [x_k,x_{k+1}]$, where $k\leq \frac{n}{2}$, the price of $2-$IFS for utilitarian welfare is at most $2$.

\textbf{Case 2: Suppose that the \textit{unique} optimal $2-$IFS facility location $y$ satisfies $y^*\in [x_k,x_{k+1}]$, where $k> \frac{n}{2}$.}\footnote{We can disregard $k=n$ as we would have $y^*=1$, and as $0$ is the optimal facility location, we have $\sum_{i=1}^nx_i\ge \frac{n}{2}$, which corresponds to a welfare ratio under $2$.} We can assume uniqueness without loss of generality as a differing facility location $y^\dagger$ with the same utilitarian welfare as $y^*$ must satisfy $y^\dagger\in [x_j,x_{j+1}]$, where $j\leq \frac{n}{2}$. Similar to the previous case, any facility location right of $x_{k+1}$ must violate $2-$IFS as $y^*$ is the optimal facility location, and we have $y^*=x_{k+1}-\frac{1}{2n}$ as it lies to the right of the majority of agents, so shifting it leftwards would decrease the utilitarian welfare. We also remark that $x_k\le x_{k+1}-\frac{1}{n}$ as $y^*$ satisfies $2-$IFS.

We apply a sequence of transformations to the location profile where each transformation increases the welfare ratio. The transformations convert the location profile into an instance of Case 1 (where the optimal $2-$IFS facility location $y^*$ satisfies $y^*\in [x_k,x_{k+1}]$, where $k\leq \frac{n}{2}$).

The transformations are as follows:
\begin{itemize}
\item If $x_{k-1}$ (and $x_k$) are at $y^*-\frac{1}{2n}(=x_{k+1}-\frac{1}{n})$, shift $x_k$ rightwards to $x_k'=y^*+(y^*-x_k)$, causing the optimal $2-$IFS facility location $y^*$ to remain at the same location and/or satisfy Case 1.
\item If $x_{k-1}\in (x_{k+1}-\frac{2}{n},x_{k+1}-\frac{1}{n})$, shift $x_k$ rightwards to $x_k'=x_{k-1}+\frac{1}{n}$, causing the optimal $2-$IFS facility location $y^*$ to move leftwards to $y'=x_k'-\frac{1}{2n}(=x_{k-1}+\frac{1}{2n})$ and/or satisfy Case 1.
\item If $x_{k-1}\le x_{k+1}-\frac{2}{n}$, shift $x_k$ rightwards to $x_k'=x_{k+1}-\frac{1}{n}+\epsilon$, causing the optimal $2-$IFS facility location $y^*$ to move leftwards to $y'=x_k'-\frac{1}{2n}(=x_{k+1}-\frac{3}{2n}+\epsilon)$ and/or satisfy Case 1.
\end{itemize}
To justify the effect on $y^*$ from shifting $x_k$, recall that if $y^*$ still satisfies Case 2 after the shift, then it is still the rightmost location satisfying $2-$IFS as the majority of agents lie left of $y^*$. Now if $y^*$ changes to a location satisfying Case 1, then by our previous analysis the welfare ratio is at most 2, so we can disregard this scenario. Suppose that the first dot point occurs i.e. $y^*$ remains at the same location. Recall that the welfare ratio is 
$$\frac{\sum_i u(f^*_{UW}(x),x_i)}{\max_{y\in 2IFS(x)}\sum_i u(y,x_i)}=\frac{\sum_{i=1}^nx_i}{\sum_{i=1}^n|y^*-x_i|}.$$
The optimal facility location is still $0$ as the sum of agent locations increases, so the numerator strictly increases. The denominator remains the same as $x_k$ has moved to an equidistant location on the other side of $y^*$, and all other agents are at the same location. Thus this transformation increases the welfare ratio.

Consider the second and third dot points, i.e., $y^*$ moves to $y'=x_k'-\frac{1}{2n}$. Clearly, the numerator increases, so we now show that the denominator decreases. The change in utilitarian welfare from the transformation is 
\begin{align*}
\left[\sum_{i=1}^{k-1}\right. & \left.(y'-x_i) +(x_k'-y') + \sum_{i=k+1}^n(x_i-y')\right]-\left[\sum_{i=1}^k(y^*-x_i)+\sum_{i=k+1}^n(x_i-y^*)\right]\\
=&(y'-y^*)(2k-n-1)+(x_k'+x_k)-(y^*+y')<0.
\end{align*}
We know that $(y'-y^*)(2k-n-1)<0$ as $y'<y^*$ and $k\geq \frac{n+1}{2}$. We also know that $x_k\leq y^*-\frac{1}{2n}$ as $y^*$ satisfies $2-$IFS and that $y'=x_k'-\frac{1}{2n}$, so from this we deduce that $x_k'+x_k<y^*+y'$. Therefore the denominator of the welfare ratio decreases, and the transformation increases the welfare ratio.

As the transformations only require that $k>\frac{n}{2}$, we can repeatedly apply them (and update $x_k$ to be the rightmost agent left of $y^*$) until we have an instance of Case 1, which has been shown to have a maximum welfare ratio of $2$. Therefore the price of $2-$IFS for UW is at most $2$, and combining this with the lower bound of $2$ gives us the theorem statement.
\end{proof}
\section{Proof of Theorem~\ref{thm: 2-UFS UW}}
\begin{customthm}{\ref{thm: 2-UFS UW}}
The price of $2-$UFS for utilitarian welfare is 2, and this bound is tight.
\end{customthm}
\begin{proof}
The lower bound example in the proof of Theorem~\ref{thm: 2-IFS UW} implies that the price of $2-$UFS for UW is also at least $2$. It remains to prove the upper bound.

Similar to the proof of Theorem~\ref{thm: 2-IFS UW}, we suppose without loss of generality that the optimal facility location is $0$ and define a sufficiently small $\epsilon>0$. We also divide the proof into two cases:

\textbf{Case 1: Suppose that the optimal $2-$UFS facility location $y^*$ satisfies $y^*\in [x_k,x_{k+1}]$, where $k\leq \frac{n}{2}$.} To avoid contradicting the $2-$UFS optimality of $y^*$, the agent locations $x_1,\dots,x_k$ must be arranged such that any location left of $x_k$ violates $2-$UFS. Furthermore, those agents must be located such that $\sum_{i=1}^kx_i$ is maximized, as the welfare ratio increases with $\sum_{i=1}^kx_i$. We claim that this occurs when all $k$ agents are at the same location of $\frac{k}{2n}-\epsilon$. To see this, suppose that among agents $\{1,\dots,k\}$ that there are $m\leq k$ unique agent locations, and construct an open interval at each unique agent location with radius $\frac{c}{2n}$, where $c$ is the number of agents at the location. Any location within an open interval fails to satisfy $2-$UFS, so to maximize the welfare ratio and avoid a contradiction, the leftmost open interval must include $0$, and the $m$ intervals should overlap by an $\epsilon$ distance (to prevent a $2-$UFS facility being placed at the boundary between two intervals). An example of this with $m=k$ is for $i\in \{1,\dots,k\}$, $x_i=\frac{2i-1}{2n}-i\epsilon$. Therefore we see that for $m\leq k$ unique agent locations, the sum of agent locations is $\sum_{i=1}^kx_i=\frac{1}{2n}+\dots+\frac{2k-1}{2n}-m\epsilon$, which is maximized when all $k$ agents are at $\frac{k}{2n}-\epsilon$.

As in the $2-$IFS proof, we require $\sum_{i=1}^nx_i\geq \frac{n}{2}$ for the optimal facility location to be $0$, and since the welfare ratio decreases with $x_{k+1}+\dots+x_n$, we must have $\sum_{i=1}^nx_i=\frac{n}{2}$ and $x_{k+1}+\dots+x_n=\frac{n}{2}-(x_1+\dots+x_k)$. Furthermore, as $y$ is the optimal $2-$UFS location, it must take the leftmost $2-$UFS point within $[x_k,x_{k+1}]$, which is at $\frac{k}{n}-\epsilon$. Substituting these expressions into the welfare ratio gives
\begin{align*}
\frac{\sum_i u(f^*_{UW}(x),x_i)}{\max_{y\in 2UFS(x)}\sum_i u(y,x_i)}&=\frac{\sum_{i=1}^nx_i}{\sum_{i=1}^n|y^*-x_i|}\\
&=\frac{n/2}{\frac{k^2}{n}-k+\frac{n}{2}},
\end{align*}
which has been shown in the proof of Theorem~\ref{thm: 2-IFS UW} to attain a maximum of $2$. Therefore in this case, the price of $2-$UFS for utilitarian welfare is at most $2$.

\textbf{Case 2: Suppose that the \emph{unique} optimal $2-$UFS facility location $y^*$ satisfies $y^*\in [x_k,x_{k+1}]$, where $k> \frac{n}{2}$.} We can assume uniqueness without loss of generality as a differing facility location $y^\dagger$ with the same utilitarian welfare as $y^*$ must satisfy $y^\dagger\in [x_j,x_{j+1}]$, where $j\leq \frac{n}{2}$. We will apply a sequence of transformations which weakly increase the welfare ratio and result in a location profile satisfying Case 1. The transformation works as follows: shift $x_k$ rightwards to $x_k'=y^*+\frac{1}{2n}$. If there is already an agent at $y+\frac{1}{2n}$, then instead shift $x_k$ rightwards to $x_k'=y^*+\frac{1}{2n}+\epsilon_k$ where $\epsilon_k>0$ is sufficiently small, such that there are no other agents at $x_k'$.\footnote{Such a location always exists unless $x_n=1$, $y^*=1-\frac{1}{2n}$ and $x_1,\dots,x_{n-1}<y^*$, but it can easily be shown using $\sum_{i=1}^n\ge \frac{n}{2}$ that such a location profile corresponds to a welfare ratio less than 2.} This causes $y^*$ to remain at the same location and/or satisfy Case 1, as if $y^*$ still satisfies Case 2, it is the rightmost location satisfying $2-$UFS\footnote{The majority of agents are left of $y^*$, so shifting $y^*$ leftwards decreases the utilitarian welfare.}, and the location $y^*$ still satisfies $2-$UFS. Furthermore, the optimal facility location remains as $0$ as the sum of agent locations strictly increases. If $y^*$ satisfies Case 1 then we know the welfare ratio is at most $2$ so we disregard this scenario. We now show that otherwise the transformation increases the welfare ratio. Recall that the welfare ratio is
$$\frac{\sum_i u(f^*_{UW}(x),x_i)}{\max_{y\in 2UFS(x)}\sum_i u(y,x_i)}=\frac{\sum_{i=1}^nx_i}{\sum_{i=1}^n|y^*-x_i|}.$$
We know that before the shift, $y^*\geq x_k+\frac{1}{2n}$, so the numerator increases by at least $\frac{1}{n}$. The denominator either decreases, remains the same, or increases by at most $\epsilon_k$. Since $\epsilon_k$ is chosen to be sufficiently small, we conclude that this transformation causes the welfare ratio to weakly increase. By repeatedly applying these transformations and updating $x_k$ to be the rightmost agent left of $y^*$, we eventually arrive at a location profile satisfying Case 1, which we know has a welfare ratio of at most $2$. Hence the price of $2-$UFS for utilitarian welfare is at most $2$. As the price of $2-$UFS for utilitarian welfare is at least 2, the theorem statement follows. 
\end{proof}

\section{Proof of Theorem~\ref{thm: nasheq2IFS}}
\begin{customthm}{\ref{thm: nasheq2IFS}}
For any $\epsilon>0$, a pure $\epsilon$-Nash equilibrium always exists for $f^*_{2IFS}$.
\end{customthm}
\begin{proof}
Consider an arbitrary true agent location profile $x=(x_1,\dots,x_n)$ and sufficiently small $\epsilon>0$. Note that a pure $\epsilon$-Nash equilibrium is also a pure $\delta$-Nash equilibrium, where $\delta>\epsilon$.

\textbf{Case 1 ($n$ is even)}:

\textbf{Subcase 1a:} We first show that if $x_i\geq \frac{2i-1}{2n}$ for any $i\in \{1,\dots,\frac{n}{2}\}$, then a pure $\epsilon$-Nash equilibrium exists. Suppose this is the case, and let $j=\arg\min_{i\in [\frac{n}{2}]} \{x_i\geq \frac{2i-1}{2n}\}$. If $j=1$ (i.e. $x_1,\dots,x_n \geq \frac{1}{2n}$), then the reported location profile $(1,\dots,1)$ is a pure $\epsilon$-Nash equilibrium. This is because an agent can only influence the facility position by changing its report to a location in $[0,\frac{1}{2n})$, moving the facility from $0$ to a point in $(0,\frac{1}{n})$ and reducing its utility.

If $j>1$, then we will show that the reported location profile $x'=(x_1',\dots,x_n')=(\frac{1}{2n}-\epsilon,\dots,\frac{2(j-1)-1}{2n}-(j-1)\epsilon,1,\dots,1)$ is a pure $\epsilon$-Nash equilibrium. Under $x'$, the facility cannot be placed in $[0,\frac{j-1}{n}-(j-1)\epsilon)$ due to $x_1',\dots,x_{j-1}'$, and hence it is placed at $\frac{j-1}{n}-(j-1)\epsilon$. 

Suppose that for some agent $i\in \{1,\dots,j-1\}$ changes its report to some $x_i'\neq \frac{2i-1}{2n}-i\epsilon$. Under the resulting location profile, the facility moves to a location in $[0,\frac{1}{n}-2\epsilon]$ if $i=1$ and $[\frac{i-1}{n}-(i-1)\epsilon,\frac{i}{n}-(i+1)\epsilon]$ if $i\in \{2,\dots, j-2\}$, reducing the agent's utility. As agent $j-2$ is located at $\frac{2(j-2)-1}{2n}-(j-2)\epsilon$, agent $j-1$ (who is located at $x'_{j-1}=\frac{2(j-1)-1}{2n}-(j-1)\epsilon$) can improve its utility by reporting a new location $x''_{j-1}=x'_{j-1}+\epsilon_1$, where $\epsilon_1<\epsilon$. This causes the facility to shift to the right. However, agent $j-1$ cannot improve its utility by more than $\epsilon$, as if it reports a location $x''_{j-1}\geq x'_{j-1}+\epsilon$, the facility will be placed at $\frac{j-2}{2n}-(j-2)\epsilon$, reducing its utility. Hence agents $1,\dots,j-1$ cannot improve their utility by more than $\epsilon$ by changing their reported location.

Now consider agents $j,\dots,n$ whose true locations satisfy $x_j,\dots,x_n\geq \frac{2j-1}{2n}$ and have reported locations $x_j',\dots,x_n'=1$. As at least half of the agents must lie to the right of the facility, the facility takes the leftmost location satisfying $2-$IFS, even after any agent changes its report.\footnote{Recall that $f^*_{2IFS}$ selects the leftmost optimal location if there is a tie.} Hence an agent from $\{j,\dots,n\}$ can only influence the facility location by changing its report to a location in $(\frac{2(j-1)-1}{2n}-(j-1)\epsilon,\frac{2(j-1)+1}{2n}-(j-1)\epsilon)$, causing the facility to move to a location in $(\frac{j-1}{n}-(j-1)\epsilon,\frac{j+1}{n}-(j-1)\epsilon)$. It is easy to see that this strictly reduces the agent's utility. We have shown that no deviation by a single agent can cause its utility to increase by more than $\epsilon$, and hence $x'$ is a pure $\epsilon$-Nash equilibrium. Therefore a pure $\epsilon$-Nash equilibrium exists if $x_i\geq \frac{2i-1}{2n}$ for any $i\in \{1,\dots,\frac{n}{2}\}$.

\textbf{Subcase 1b:} By symmetry, we see that a $\epsilon$-Nash equilibrium also exists if $x_i\leq \frac{2i-1}{2n}$ for any $i\in \{\frac{n}{2}+2,\dots,n\}$. However, the exact symmetric argument does not work for $i=\frac{n}{2}+1$ if $x_{\frac{n}{2}+1}>\frac{n+3}{4n}$ as under the reported location profile $(0,\dots,0, \frac{n+3}{2n}+(\frac{n}{2}-1)\epsilon, \dots, 1-\frac{1}{2n}+\epsilon)$, agent $\frac{n}{2}+1$ can change its report from $x'_{\frac{n}{2}+1}=0$ to $\frac{n+1}{2n}+(\frac{n}{2})\epsilon$, causing the facility to move from $\frac{n+2}{2n}+(\frac{n}{2}-1)\epsilon$ to $\frac{1}{2n}$. This is because $f^*_{2IFS}$ breaks ties in favour of the leftmost optimal location, and results in increased utility as $x_{\frac{n}{2}+1}\in (\frac{n+3}{4n}, \frac{n+1}{2n}]$.

We now show that if $x_i>\frac{2i-1}{2n}$ for all $i\in \{\frac{n}{2}+2,\dots,n\}$ and $x_{\frac{n}{2}+1}\in (\frac{n+3}{4n},\frac{n+1}{2n}]$, a pure $\epsilon$-Nash equilibrium exists. It suffices to assume that $x_i<\frac{2i-1}{2n}$ for all $i\in \{1,\dots,\frac{n}{2}\}$, as otherwise we know from Subcase 1a that a pure $\epsilon$-Nash equilibrium exists. We divide this proof to two further subcases depending on where $x_{\frac{n}{2}+1}$ is located.

\textbf{Subcase 1bi:} In this subcase we consider $x_{\frac{n}{2}+1}<\frac{1}{2}+\frac{1}{2n}$. We claim that $x'=(\frac{1}{2n}-\epsilon,\frac{3}{2n}-2\epsilon,\dots,\frac{n-1}{2n}-(\frac{n}{2})\epsilon,0,\frac{n+3}{2n}+(\frac{n}{2}-1)\epsilon,\dots,1-\frac{1}{2n}+\epsilon)$ is a pure $\epsilon$-Nash equilibrium. Under $x'$, the facility is placed at the rightmost point of the only feasible interval, at $y'=\frac{n+2}{2n}+(\frac{n}{2}-1)\epsilon$, as there is a majority of agents left of the point. If any agent $i\in\{1,\dots,\frac{n}{2}-1\}$ changes its report, the facility will move to a location in $[\frac{1}{2n},\frac{1}{n}-2\epsilon]$ if $i=1$, and in $[\frac{i-1}{n}-(i-1)\epsilon,\frac{i}{n}-(i+1)\epsilon]$ if $i\in \{2,\dots, \frac{n}{2}-1\}$. This reduces agent $i$'s utility as $x_i<\frac{2i-1}{2n}$. We now consider agent $\frac{n}{2}$ (reporting $x'_{\frac{n}{2}}=\frac{n-1}{2n}-(\frac{n}{2})\epsilon$). Any location right of $y'$ is not feasible and under $x'$, there are $\frac{n}{2}+2$ agent reports, including $x'_{\frac{n}{2}}$, left of the facility. Hence agent $\frac{n}{2}$ can only influence the facility location by changing its report to a point in $(\frac{n+1}{2n}+(\frac{n}{2}-1)\epsilon,1]$, causing the facility to move to the leftmost point of the feasible interval, at $\frac{n-2}{2n}-(\frac{n}{2}-1)\epsilon$. This reduces its utility as $x_{\frac{n}{2}}<\frac{n-1}{2n}$. Next we consider agent $\frac{n}{2}+1$ (reporting $x'_{\frac{n}{2}+1}=0$), whose report can only change the facility's location within the feasible interval $[\frac{1}{2}-(\frac{n}{2})\epsilon,\frac{n+2}{2n}+(\frac{n}{2}-1)\epsilon]$. As we have $x_{\frac{n}{2}+1}<\frac{1}{2}+\frac{1}{2n}$, it is most optimal to have the facility at $y'=\frac{n+2}{2n}+(\frac{n}{2}-1)\epsilon$, achieved by reporting $x_{\frac{n}{2}+1}'=0$. Finally, we consider agents $\frac{n}{2}+2,\dots,n$. Similar to Subcase 1a, agent $\frac{n}{2}+2$ can improve its utility by strictly less than $\epsilon$ by reporting a location $x''_{\frac{n}{2}+2}=x'_{\frac{n}{2}+2}-\epsilon_1$, where $\epsilon_1<\epsilon$. If $\epsilon_1 \geq \epsilon$, the facility is placed at $\frac{n+4}{2n}+(\frac{n}{2}-2)\epsilon$, reducing the agent's utility. An agent $i\in \{\frac{n}{2}+3,\dots,n\}$ can only cause the facility to be in $[\frac{i-1}{n}+(i-1)\epsilon,\frac{i+1}{n}+(i-2)\epsilon]$. This results in a reduction of utility as $x_i>\frac{2i-1}{2n}$ for all $i\in \{\frac{n}{2}+2,\dots,n\}$. As no single agent can improve its utility by more than $\epsilon$ by changing its report, $x'$ is a pure $\epsilon$-Nash equilibrium and hence one exists for this subcase.

\textbf{Subcase 1bii:} In this subcase we consider $x_{\frac{n}{2}+1}\geq \frac{1}{2}+\frac{1}{2n}$. We claim that $x'=(\frac{1}{2n}-\epsilon,\frac{3}{2n}-2\epsilon,\dots,\frac{n-1}{2n}-(\frac{n}{2})\epsilon,\frac{n+1}{2n}+(\frac{n}{2})\epsilon,\frac{n+3}{2n}+(\frac{n}{2}-1)\epsilon,\dots,1-\frac{1}{2n}+\epsilon)$ is a pure Nash $\epsilon$-equilibrium. Here the facility is placed at the leftmost point of the optimal interval, at $\frac{1}{2}-(\frac{n}{2})\epsilon$. By using identical reasoning as in Subcase 1bi, it is easy to see that agents $1,\dots,\frac{n}{2}$, and agents $\frac{n}{2}+2,\dots,n$ cannot improve their utility by more than $\epsilon$ by misreporting. In this subcase, it is agent $\frac{n}{2}$ rather than agent $\frac{n}{2}+2$ who can improve its utility by less than $\epsilon$ by misreporting slightly to the right. Also, as in Subcase 1bi, agent $\frac{n}{2}+1$ can only change the facility's location to within the feasible interval $[\frac{1}{2}-(\frac{n}{2})\epsilon,\frac{n+2}{2n}+(\frac{n}{2}-1)\epsilon]$, but since we have $x_{\frac{n}{2}+1}\geq \frac{1}{2}+\frac{1}{2n}$, their optimal facility location of $\frac{1}{2}-(\frac{n}{2})\epsilon$ is achieved by their report of $x'_{\frac{n}{2}+1}=\frac{n+1}{2n}+(\frac{n}{2})\epsilon$.

\textbf{Subcase 1c:} It reminds to consider the subcase where $x_i< \frac{2i-1}{2n}$ for all $i\in \{1,\dots,\frac{n}{2}\}$ and $x_i> \frac{2i-1}{2n}$ for all $i\in \{\frac{n}{2}+1,\dots,n\}$. Under this subcase, we claim that the location profile $x'=(x_1',\dots,x_n')=(\frac{1}{2n}-\epsilon,\frac{3}{2n}-2\epsilon,\dots,\frac{n-1}{2n}-(\frac{n}{2})\epsilon,1,\dots,1)$ is a pure $\epsilon$-Nash equilibrium. Here, the facility is placed at $\frac{1}{2}-(\frac{n}{2})\epsilon$. By using the same arguments as in the first subcase, we see that agents $1,\dots,\frac{n}{2}-1$ who have reported locations $x_1'=\frac{1}{2n}-\epsilon, x_2'=\frac{3}{2n}-2\epsilon,\dots,x_{\frac{n}{2}-1}'=\frac{n-3}{2n}-(\frac{n}{2}-1)\epsilon$ have no incentive to change their report. Agent $\frac{n}{2}$ who reports location $x_{\frac{n}{2}}'=\frac{n-1}{2n}-(\frac{n}{2})\epsilon$ can improve its utility by strictly less than $\epsilon$, by changing its report to $x_{\frac{n}{2}}''=x_{\frac{n}{2}}'+\epsilon_1$, where $\epsilon_1<\epsilon$. Similar to Subcase 1a, if $\epsilon_1\geq \epsilon$, then the facility is placed at $\frac{n-2}{2n}-(\frac{n}{2}-1)\epsilon$, reducing the agent's utility. The agents $\frac{n}{2}+1,\dots,n$ who have reported their location as $1$ under $x'$ can only move the facility to a location in $(\frac{1}{2}-(\frac{n}{2})\epsilon,\frac{1}{2}+\frac{1}{n}-(\frac{n}{2})\epsilon)$ by changing its report to a location in $(\frac{n-1}{2n}-(\frac{n}{2})\epsilon,\frac{n+1}{2n}-(\frac{n}{2})\epsilon)$.\footnote{We note that $\sum_i x_i'<\frac{n}{2}-1+\frac{n}{4}(\frac{1}{2n}+\frac{n-1}{2n})=\frac{5n}{8}-1$, which is less than $\frac{n}{2}$ if and only if $n=2$. However if $n=2$ it is trivial that $x'$ is a pure $\epsilon$-Nash equilibrium.} However as we have $x_{\frac{n}{2}+1},\dots,x_n>\frac{n+1}{2n}$, this causes their utility to decrease. We have shown that under $x'$, no single agent can cause its utility to increase by more than $\epsilon$ by changing its report, and hence $x'$ is a $\epsilon$-Nash equilibrium. By exhaustion of cases, we have shown that a pure $\epsilon$-Nash equilibrium always exists under $f^*_{2IFS}$ if $n$ is even.

\textbf{Case 2 ($n$ is odd):} By using identical reasoning to the case where $n$ is even, we can see that if $x_i\geq \frac{2i-1}{2n}$ for any $i\in \{1,\dots,\frac{n-1}{2}$, a pure $\epsilon$-Nash equilibrium exists. Specifically, if we let $j=\arg\min_{i\in [\frac{n-1}{2}]} \{x_i\geq \frac{2i-1}{2n}\}$, then the pure $\epsilon$-Nash equilibrium is $x'=(1,\dots,1)$ if $j=1$, and $x'=(\frac{1}{2n}-\epsilon,\dots,\frac{2(j-1)-1}{2n}-(j-1)\epsilon,1,\dots,1)$ if $j>1$. Furthermore, symmetric reasoning shows that if $x_i\leq \frac{2i-1}{2n}$ for any $i\in \{\frac{n+3}{2},\dots,n\}$, a pure $\epsilon$-Nash equilibrium exists. It remains to consider the case where $x_i< \frac{2i-1}{2n}$ for all $i\in \{1,\dots,\frac{n-1}{2}\}$ and $x_i> \frac{2i-1}{2n}$ for all $i\in \{\frac{n+3}{2},\dots,n\}$ (and $x_{\frac{n+1}{2}}$ can be anywhere).

\textbf{Subcase 2a:} Here we consider the subcase where $x_{\frac{n+1}{2}}\geq \frac{1}{2}$. We claim that $x'=(\frac{1}{2n}-\epsilon,\frac{3}{2n}-2\epsilon,\dots,\frac{n-2}{2n}-(\frac{n-1}{2})\epsilon,1,\frac{n+2}{2n}+(\frac{n+3}{2})\epsilon,\dots,1-\frac{1}{2n}+\epsilon)$ is a pure $\epsilon$-Nash equilibrium. Here the interval of feasible agent locations is $[\frac{n-1}{2n}-(\frac{n-1}{2})\epsilon,\frac{n+1}{2n}+(\frac{n+3}{2})\epsilon]$, and the facility is placed at the leftmost point of the interval, at $y'=\frac{n-1}{2n}-(\frac{n-1}{2})\epsilon$, as there is a majority of agents right of the interval. Any misreport by agent $1$ will cause the facility to be placed in the interval $[0,\frac{1}{n}-2\epsilon]$, which reduces their utility as $x_1<\frac{1}{2n}$. Agent $i\in \{2,\dots,\frac{n-3}{2}\}$ can only cause the facility to be placed in $[\frac{i-1}{n}-(i-1)\epsilon,\frac{i}{n}-(i+1)\epsilon]$, which reduces their utility. A symmetric argument can be applied to show that agents $\frac{n+5}{2},\dots,n$ cannot improve their utility by misreporting. Agent $\frac{n-1}{2}$ (who reports $x'_{\frac{n-1}{2}}=\frac{n-2}{2n}-(\frac{n-1}{2})\epsilon$) can improve its utility by strictly less than $\epsilon$ by changing its report to $x''_{\frac{n-1}{2}}=x'_{\frac{n-1}{2}}+\epsilon_1$, where $\epsilon_1<\epsilon$. Similar to Subcase 1a, if $\epsilon_1\geq \epsilon$, the agent's utility will be decreased. As the facility takes the leftmost feasible point under $x'$, agent $\frac{n+3}{2}$ can only change the facility location to the rightmost feasible point (at $\frac{n+3}{2n}+(\frac{n+5}{2})\epsilon$), by misreporting such that there is a majority of agents left of the facility location. This reduces the agent's utility. Finally, it is easy to see that agent $\frac{n+1}{2}$ cannot improve its utility by misreporting: the infeasible regions under $x'$ are a result of the other agent reports, and hence agent $\frac{n+1}{2}$ cannot cause the facility to be placed outside of the feasible interval. As $x_{\frac{n+1}{2}}\geq \frac{1}{2}$, the leftmost point of the interval is its most optimal facility placement over its possible reports. Therefore $x'$ is a pure $\epsilon$-Nash equilibrium as no agent can improve its utility by more than $\epsilon$ by misreporting, and hence a pure $\epsilon$-Nash equilibrium exists for this subcase.

\textbf{Subcase 2b:} In this subcase, $x_{\frac{n+1}{2}}< \frac{1}{2}$ (and $x_i< \frac{2i-1}{2n}$ for all $i\in \{1,\dots,\frac{n-1}{2}\}$ and $x_i> \frac{2i-1}{2n}$ for all $i\in \{\frac{n+3}{2},\dots,n\}$). By using a symmetric argument as in Subcase 2a, $x'=(\frac{1}{2n}-\epsilon,\frac{3}{2n}-2\epsilon,\dots,\frac{n-2}{2n}-(\frac{n-1}{2})\epsilon,0,\frac{n+2}{2n}+(\frac{n+3}{2})\epsilon,\dots,1-\frac{1}{2n}+\epsilon)$, where the facility is placed at $\frac{n+1}{2n}+(\frac{n+3}{2})\epsilon$, is a pure $\epsilon$-Nash equilibrium.

In this proof, we have provided a pure $\epsilon$-Nash equilibrium for every possible location profile, and hence by exhaustion of cases, a $\epsilon$-pure Nash equilibrium always exists for $f^*_{2IFS}$.
\end{proof}
\section{Proof of Theorem~\ref{thm: nasheq2UFS}}
\begin{customthm}{\ref{thm: nasheq2UFS}}
For any $\epsilon>0$, a pure $\epsilon$-Nash equilibrium always exists for $f^*_{2UFS}$.
\end{customthm}
\begin{proof}
In the proof of Theorem~\ref{thm: nasheq2IFS}, we divided all possible agent location profiles into several subcases, and provide a pure $\epsilon$-Nash equilibrium for each subcase. We claim for every subcase, the pure $\epsilon$-Nash equilibrium we describe in the proof of Theorem~\ref{thm: nasheq2IFS} is also a $\epsilon$-Nash equilibrium for $f^*_{2UFS}$. First, we remark that every given pure $\epsilon$-Nash equilibrium has the same facility placement under $f^*_{2UFS}$. This can be seen as every pure $\epsilon$-Nash equilibrium has the $n$ agents reporting $n$ distinct locations, with the exception of the equilibria where multiple agents report $0$ or $1$. Under these equilibria, the facility takes the rightmost (resp. leftmost) location of the feasible interval, so the change to a $2-$UFS constraint does not affect the facility placement. 

A simple case by case analysis shows that for each pure $\epsilon$-Nash equilibrium $x'$ described in the proof of Theorem~\ref{thm: nasheq2IFS}, no agent can improve its utility by more than $\epsilon$ by changing its report under $f^*_{2UFS}$. The same arguments hold verbatim for $f^*_{2UFS}$ and the constraint of $2-$UFS, even when accounting for agents being able to change their report to the same location as another agent's report (to widen the `infeasible' interval around the report). We give the following intuition: if agent $i$ makes such a report change from $x_i'$, this either causes the facility to move to some location near $x_i'$ which has consequently become feasible, or it `pushes' the facility towards $x_i$ (such as if the facility takes the leftmost feasible location under $x'$ and agent $i$ changes its report from $x_i'=1$ to the rightmost reported location left of the facility). Hence every possible agent location profile has a pure $\epsilon$-Nash equilibrium under $f^*_{2UFS}$.
\end{proof}
\section{Proof of Theorem~\ref{thm:PoSupper}}
\begin{customthm}{\ref{thm:PoSupper}}
For $f^*_{2IFS}$ and $f^*_{2UFS}$, taking the limit $\epsilon \rightarrow 0$, the $\epsilon$-price of stability is at most 2.
\end{customthm}
\begin{proof}
We prove this theorem by iterating through each subcase in the proof of Theorem~\ref{thm: nasheq2IFS}, which gives the facility placement under a pure $\epsilon$-Nash equilibrium for types of location profiles. For each subcase, we consider the welfare ratio between the utilitarian welfare corresponding to the optimal facility placement when agents report truthfully, and the utilitarian welfare corresponding to the given $\epsilon$-equilibrium facility placement. We give the agent locations which maximize this welfare ratio and compute the ratio, showing that it is upper bounded by 2. While the proof is given for $f^*_{2IFS}$, it also holds verbatim for $f^*_{2UFS}$, as the set of pure $\epsilon$-Nash equilibria for $f^*_{2IFS}$ are also equilibria for $f^*_{2UFS}$.

We first note that $f^*_{2IFS}$ cannot place the facility in $(0,\frac{1}{2n})$ as it would imply that there are $0$ agents left of the facility, but then placing the facility at $0$ would result in strictly higher utilitarian welfare. Also note that if $f^*_{2IFS}(x)=0$, then the welfare ratio is $1$ as $x'=(1,\dots,1)$ is a pure $\epsilon$-Nash equilibrium.

\textbf{Case 1 ($n$ is even)}: 

\textbf{Subcase 1a:} Here we suppose that $x_i\geq \frac{2i-1}{2n}$ for some $i\in\{1,\dots,\frac{n}{2}\}$. Let $j=\arg\min_{i\in [\frac{n}{2}]}\{x_i\geq \frac{2i-1}{2n}\}$. Also suppose that in the symmetric side of the location profile, there is no equivalent $k\in \{\frac{n}{2}+1,\dots,n\}$ such that $x_k \leq \frac{2k-1}{2n}$ and $k>n-j+1$, else we would be able to `flip' the profile and consider strictly smaller $j$. As explained in the proof of Theorem~\ref{thm: nasheq2IFS}, there exists a pure $\epsilon$-Nash equilibrium that places the facility at $\frac{j-1}{n}-(j-1)\epsilon$. Taking the limit $\epsilon \rightarrow 0$, our welfare ratio here is
$$\frac{\sum_{i=1}^n|x_i-f^*_{2IFS}(x)|}{\sum_{i=1}^n|x_i-\frac{j-1}{n}|}.$$ The numerator is maximized by setting $x_1=0$ and $f^*_{2IFS}(x)=\frac{1}{2n}$, with the remaining agents $x_2\dots x_n$ located right of this facility placement. Under our $\epsilon$-Nash equilibrium placement, the facility is located between agents $x_{j-1}$ and $x_j$, thus to maximize the welfare ratio we take the agent locations $x_2\dots x_{j-1}$ to be as rightmost as possible (under the constraint that $j=\arg\min_{i\in [\frac{n}{2}]}\{x_i\geq \frac{2i-1}{2n}\}$). Thus we have $x_2,\dots,x_{j-1}=\frac{3}{2n}-\delta,\dots,\frac{2j-3}{2n}-(j-2)\delta$ for some sufficiently small $\delta>0$. By our initial assumptions, we also take $x_n\dots x_{n-j+2}=\frac{2n-1}{2n}+\delta,\dots,\frac{2(n-j+2)-1}{2n}+(j-1)\delta$, which are the lowest values that do not violate our assumptions (by taking the lowest values, we are maximizing the welfare ratio w.r.t. these locations). Finally, we minimize the values of $x_j\dots x_{n-j+1}$ to $\frac{2j-1}{2n},\dots,\frac{2j-1}{2n}+(n-2j+2)\delta$. Thus under the location profile that maximizes the welfare ratio under our $\epsilon$-Nash equilibrium, the utilitarian welfare corresponding to $f^*_{2IFS}(x)=\frac{1}{2n}$ is
\begin{align*}
&\frac{1}{2n}+(\frac{1}{n}+\dots+\frac{j-2}{n})+(n-2j+2)(\frac{j-1}{n})+(\frac{n-1}{n}+\dots+\frac{n-j+1}{n})\\
&=\frac{1}{2n}+\frac{(j-1)(j-2)}{2n}+\frac{2(n-2j+2)(j-1)}{2n}+\frac{(j-1)(2n-j)}{2n}\\
&=\frac{-4j^2+4nj+6j-4n-1}{2n},
\end{align*}
and (taking $\epsilon\rightarrow 0$) the utilitarian welfare corresponding to $f^*_{2IFS}(x')=\frac{j-1}{n}+(j-1)\epsilon$ is
\begin{align*}
&\frac{j-1}{n}+(\frac{2j-5}{2n}+\dots+\frac{1}{2n})+(\frac{2n-4j+5}{2n}+\dots+\frac{2n-j}{2n})\\
&=\frac{4j-4}{4n}+\frac{(j-2)(2j-4)}{4n}+\frac{(j-1)(4n-5j+5)}{4n}\\
&=\frac{-3j^2+4jn+6j-4n-1}{4n}.
\end{align*}
Dividing these terms gives us the maximum welfare ratio of $$\frac{8nj+12j-8j^2-8n-1}{4nj+6j-3j^2-4n-1}=2-\frac{2j^2-1}{-3j^2+4jn+6j-4n-1}.$$ The derivative of $-3j^2+4jn+6j-4n-1$ with respect to $j$ is $4n+6-6j$, which is positive for all $j\in [1,\frac{n}{2}]$, thus it is monotonic increasing with respect to $j$ and has a constrained minimum at $j=1$. As $-3j^2+4jn+6j-4n-1$ is positive at $j=1$, we see that the fraction $\frac{2j^2-1}{-3j^2+4jn+6j-4n-1}$ is positive for all $n$ and $j\in \{1,\dots,\frac{n}{2}\}$. Therefore in this subcase, the maximum welfare ratio is upper bounded by $2$ for all even $n\geq 2$. and $j\in \{1,\dots,\frac{n}{2}\}$.

\textbf{Subcase 1b:} Here we suppose that $x_i>\frac{2i-1}{2n}$ for $i\in \{\frac{n}{2}+2,\dots,n\}$. We also suppose that $x_i<\frac{2i-1}{2n}$ for $i\in \{1,\dots,\frac{n}{2}\}$, as otherwise the maximum welfare ratio is already covered in Subcase 1a. For this set of equilibria, we have $f^*_{2IFS}(x')\in (x_{\frac{n}{2}},x_{\frac{n}{2}+2})$, so as in the previous subcase, we note that the following agent locations maximize the welfare ratio: $x_1=0$ (which results in $f^*_{2IFS}(x)=\frac{1}{2n}$, maximizing the numerator), $x_2,\dots,x_{\frac{n}{2}}=\frac{3}{2n}-\delta,\dots,\frac{n-1}{2n}-(\frac{n}{2}-1)\delta$ (to maximize the numerator and minimize the denominator), and $x_{\frac{n}{2}+2},\dots,x_n=\frac{n+3}{2n}+(\frac{n}{2}-1)\delta,\dots,\frac{2n-1}{2n}+\delta$ (as subtracting the numerator and denominator by the same quantity increases the ratio), for some sufficiently small $\delta>0$. The proof is divided into two further subcases depending on where $x_{\frac{n}{2}+1}$ is located.

\textbf{Subcase 1bi:} In this subcase we suppose that $x_{\frac{n}{2}+1}<\frac{1}{2}+\frac{1}{2n}$, and from the proof of Theorem~\ref{thm: nasheq2IFS}, we know that there is a pure $\epsilon$-Nash equilibrium under which $f^*_{2IFS}$ places the facility\footnote{While the proof of Theorem~\ref{thm: nasheq2IFS} gives the facility location in terms of $\epsilon$, we use $\delta$ in this proof for consistency within the proof. There is no difference as we take the limit $\delta \rightarrow 0$.} at $\frac{n+2}{2n}+(\frac{n}{2}-1)\delta$. To maximize the welfare ratio, we set $x_{\frac{n}{2}+1}$ to be as rightmost as possible, at $x_{\frac{n}{2}+1}=\frac{1}{2}+\frac{1}{2n}-\delta$. Setting $\delta \rightarrow 0$, the utilitarian welfare corresponding to $f^*_{2IFS}(x)=\frac{1}{2n}$ is
\begin{align*}
&\frac{1}{2n}+(\frac{2}{2n}+\dots+\frac{n-2}{2n})+\frac{1}{2}+(\frac{n+2}{2n}+\dots+\frac{2n-2}{2n})\\
&=\frac{1}{2n}+\frac{n-2}{8}+\frac{1}{2}+\frac{3(n-2)}{8}\\
&=\frac{2n^2-2n+2}{4n},
\end{align*}
and the utilitarian welfare corresponding to $f^*_{2IFS}(x')=\frac{n+2}{2n}+(\frac{n}{2}-1)\delta$ is
\begin{align*}
&\frac{n+2}{2n}+(\frac{n-1}{2n}+\dots+\frac{3}{2n})+\frac{1}{2n}+(\frac{1}{2n}+\dots+\frac{n-3}{2n})\\
&=\frac{n+2}{2n}+\frac{(n-2)(n+2)}{8n}+\frac{1}{2n}+\frac{(n-2)^2}{8n}\\
&=\frac{n^2+6}{4n}.
\end{align*}
Dividing these welfare values gives our maximum welfare ratio of $\frac{2n^2-n+2}{n^2+6}$, which has an upper bound of $2$ for all even $n\geq 2$.

\textbf{Subcase 1bii:} In this subcase we have $x_{\frac{n}{2}+1}\geq \frac{1}{2}+\frac{1}{2n}$ and an equilibrium facility placement at $f^*_{2IFS}(x)=\frac{1}{2}-(\frac{n}{2})\delta$. To maximize the welfare ratio we take $x_{\frac{n}{2}+1}= \frac{1}{2}+\frac{1}{2n}$, and remark that when taking $\delta \rightarrow 0$, this is the same location profile as in the proof of Lemma~\ref{lem: PoSlower} (with a welfare ratio of $2-\frac{6n}{2n^2+n+2}$. Thus we see that in this subcase, the maximum welfare ratio is upper bounded by $2$ for all even $n\geq 2$.

\textbf{Subcase 1c:} In this subcase, we have $x_i<\frac{2i-1}{2n}$ for all $i\in \{1,\dots,\frac{n}{2}\}$, $x_i>\frac{2i-1}{2n}$ for all $i\in \{\frac{n}{2}+1,\dots,n\}$, and an equilibrium facility placement of $\frac{1}{2}-(\frac{n}{2})\delta$. Under these constraints, the agent locations which maximize the welfare ratio are the same as in Subcase 1bii, except with $x_{\frac{n}{2}+1}= \frac{1}{2}+\frac{1}{2n}+\delta$. Taking $\delta \rightarrow 0$, we end up with the same welfare ratio, which is upper bounded by $2$ for all even $n\geq 2$.

\textbf{Subcase 2 ($n$ is odd):} Similar to Subcase 1a, we suppose that $x_i\geq \frac{2i-1}{2n}$ for some $i\in \{1,\dots,\frac{n-1}{2}\}$, and let $j=\arg\min_{i\in [\frac{n-1}{2}]}\{x_i\geq \frac{2i-1}{2n}\}$, which results in an equilibrium facility placement of $\frac{j-1}{n}$. We again suppose that in the symmetric side of the location profile, there is no equivalent $k\in \{\frac{n+3}{2},\dots,n\}$ such that $x_k\leq \frac{2k-1}{2n}$ and $k>n-j+1$. The same location profile in Subcase 1a can be applied to odd $n$, and also maximizes the welfare ratio for odd $n$. Therefore in this subcase, the maximum welfare ratio is upper bounded by $2$ for all even $n\geq 2$.

\textbf{Subcase 2a:} In this subcase, $x_i<\frac{2i-1}{2n}$ for all $i\in \{1,\dots,\frac{n-1}{2}\}$ and $x_i>\frac{2i-1}{2n}$ for all $i\in \{\frac{n+3}{2},\dots,n\}$. Also, $x_{\frac{n+1}{2}}\geq \frac{1}{2}$. As shown in the proof of Theorem~\ref{thm: nasheq2IFS}, there exists a pure $\epsilon$-Nash equilibrium where the facility is placed at $\frac{n-1}{2n}-(\frac{n-1}{2})\delta$, between agent locations $x_{\frac{n-1}{2}}$ and $x_{\frac{n+1}{2}}$. Thus to maximize the welfare ratio, we set $x_1=0$, $x_2,\dots,x_{\frac{n-1}{2}}=\frac{3}{2n}-\delta,\dots,\frac{n-2}{2n}-(\frac{n-3}{2})\delta$, $x_{\frac{n+1}{2}}=\frac{1}{2}$, and $x_{\frac{n+3}{2}},\dots,x_n=\frac{n+2}{2n}+(\frac{n-1}{2})\delta,\dots,\frac{2n-1}{2n}+\delta$, which results in $f^*_{2IFS}(x)=\frac{1}{2n}$. Taking the limit $\delta \rightarrow 0$, the utilitarian welfare corresponding to $f^*_{2IFS}(x)=\frac{1}{2n}$ is
\begin{align*}
&\frac{1}{2n}+(\frac{2}{2n}+\dots+\frac{n-3}{2n})+\frac{1}{2}-\frac{1}{2n}+(\frac{n+1}{2n}+\dots+\frac{2n-2}{2n})\\
&=\frac{1}{2}+\frac{(n-3)(n-1)}{8n}+\frac{(n-1)(3n-1)}{8n}\\
&=\frac{2n^2-2n+2}{4n},
\end{align*}
and the utilitarian welfare corresponding to $f^*_{2IFS}(x')=\frac{n-1}{2n}-(\frac{n-1}{2})\delta$ is
\begin{align*}
&\frac{n-1}{2n}+(\frac{n-4}{2n}+\dots+\frac{1}{2n})+\frac{1}{2n}+(\frac{3}{2n}+\dots+\frac{n}{2n})\\
&=\frac{4n}{8n}+\frac{(n-3)^2}{8n}+\frac{(n-1)(n+3)}{8n}\\
&=\frac{n^2+3}{4n}.
\end{align*}
Dividing these welfares gives us the maximum welfare ratio of $\frac{2n^2-2n+3}{n^2+3}$, which is upper bounded by $2$ for all odd $n\geq 3$.

\textbf{Subcase 2b:}In this subcase, we again have $x_i<\frac{2i-1}{2n}$ for all $i\in \{1,\dots,\frac{n-1}{2}\}$ and $x_i>\frac{2i-1}{2n}$ for all $i\in \{\frac{n+3}{2},\dots,n\}$. However we now have $x_{\frac{n+1}{2}}<\frac{1}{2}$, and an equilibrium facility placement of $\frac{n+1}{2n}+(\frac{n+3}{2n})\delta$. To maximize the welfare ratio we set $x_{\frac{n+1}{2}}=\frac{1}{2}-\delta$ and the remaining agent locations the same as in Subcase 2a. By taking the limit $\delta \rightarrow 0$, it is apparent that the utilitarian welfare corresponding to $f^*_{2IFS}(x)=\frac{1}{2n}$ is the same as in Subcase 2a, taking a value of $\frac{2n^2-2n+2}{4n}$. Furthermore, the utilitarian welfare corresponding to $f^*_{2IFS}(x')=\frac{n+1}{2n}+(\frac{n+3}{2n})\delta$ is also the same as in Subcase 2a. To see this, note that the only difference between this subcase and Subcase 2a is that the equilibrium facility placement has changed from $\frac{n-1}{2n}-(\frac{n-1}{2})\delta$ to $\frac{n+1}{2n}+(\frac{n+3}{2n})\delta$. As $\delta \rightarrow 0$, the distance between the facility and $x_{\frac{n+1}{2}}$ remains the same, and the utility lost by the agents at $x_{\frac{n+3}{2}},\dots,x_n$ is matched by the utility gained by the agents at $x_1,\dots,x_{\frac{n-1}{2}}$. Therefore in this subcase, the maximum welfare ratio is also $\frac{2n^2-2n+3}{n^2+3}$, which is upper bounded by $2$ for all odd $n\geq 3$.

By exhaustion of cases, we have shown for $f^*_{2IFS}$ that for every agent location profile, there exists for sufficiently small $\epsilon>0$, a pure $\epsilon$-Nash equilibrium corresponding to a facility location that gives at least half of the utilitarian welfare when agents report truthfully (i.e. the welfare ratio is upper bounded by $2$). In other words, for $f^*_{2IFS}$, as $\epsilon \rightarrow 0$, the $\epsilon$-price of stability is at most $2$.
\end{proof}
\section{Proof of Lemma~\ref{lem: 01opt}}
\begin{customlem}{\ref{lem: 01opt}}
Consider an arbitrary agent location profile $x$. For every $2-$IFS/UFS randomized mechanism that gives positive support to a facility placement between $0$ and $1$, there exists a $2-$IFS/UFS randomized mechanism that only gives positive support to a facility placement at $0$ or $1$ that leads to weakly higher expected utility for each agent.
\end{customlem}
\begin{proof}
Consider an arbitrary location profile $x=(x_1,\dots,x_n)$, and suppose for this profile that some ($2-$IFS/UFS) mechanism places the facility at location $c\in (0,1)$ with probability $p$. If instead the mechanism placed the facility at location $1$ with probability $cp$ and at location $0$ with probability $p-cp$, each agent's expected distance from the facility would increase. This can be seen as for any $x_i\leq c$,
\begin{align*}
cp(1-x_i)+(p-cp)x_i&=px_i-2cpx_i+cp\\
&=cp+px_i(1-c)-cpx_i\\
&\geq cp-cpx_i\\
&\geq p(c-x_i),
\end{align*}
and for any $x_j>c$,
\begin{align*}
cp(1-x_j)+(p-cp)x_j&=px_j+cp(1-x_j)-cpx_j\\
&\geq px_j-cpx_j\\
&\geq p(x_j-c).
\end{align*}
Therefore this modified mechanism also satisfies $2-$IFS/UFS and results in weakly higher expected utility for each agent than the original mechanism. By repeatedly applying this modification, any $2-$IFS/UFS mechanism that places positive probability on any location between $0$ and $1$ can be modified to a $2-$IFS/UFS mechanism that only places positive probability on locations $0$ and $1$.
\end{proof}
\section{Proof of Theorem~\ref{thm: mech2}}
\begin{customthm}{\ref{thm: mech2}}
Mechanism 2 satisfies $2-$UFS. 
\end{customthm}
\begin{proof}
Consider a coalition of $|S|$ agents at location $x_i$ and suppose there are $n_1$ agents in $[0,\frac{1}{2}]$ and $n_2$ agents in $(\frac{1}{2},1]$. The expected distance from the facility is
\begin{align*}
\mathbb{E}(d(y,x_i))&=\frac{2n_1n_2+n_2^2}{n_1^2+n_2^2+4n_1n_2}x_i+\frac{n_1^2+2n_1n_2}{n_1^2+n_2^2+4n_1n_2}(1-x_i)\\
&=\frac{n_1^2+2n_1n_2}{n_1^2+n_2^2+4n_1n_2}+x_i\left(\frac{n^2_2-n^2_1}{n_1^2+n_2^2+4n_1n_2}\right).
\end{align*}
Due to symmetry it suffices to only consider $x_i\in [0,\frac{1}{2}]$, and since $\mathbb{E}(d(y,x_i))$ is a linear function of $x$, we further restrict our attention to $x_i\in \{0,\frac{1}{2}\}$.

When $x_i=\frac{1}{2}$, $\mathbb{E}(d(y,x_i))=\frac{1}{2}$ and hence $2-$UFS is satisfied for any coalition of agents at $\frac{1}{2}$. When $x_i=0$,
\begin{align*}
\mathbb{E}(d(y,x_i))&=\frac{n_1^2+2n_1n_2}{n_1^2+n_2^2+4n_1n_2}\\
&=\frac{n_1(2n_1^2+6n_1n_2+4n_2^2)}{2(n_1^2+n_2^2+4n_1n_2)(n_1+n_2)}\\
&>\frac{n_1(n_1^2+4n_1n_2+n_2^2)}{2(n_1^2+n_2^2+4n_1n_2)(n_1+n_2)}\\
&=\frac{n_1}{2(n_1+n_2)}\\
&\ge \frac{|S|}{2n}.
\end{align*}
Therefore Mechanism 2 satisfies $2-$UFS.
\end{proof}

\section{Proof of Lemma~\ref{lem:2IFSRand}}
\begin{customlem}{\ref{lem:2IFSRand}}
\textbf{2-IFS Randomized mechanism} is optimal amongst all randomized mechanisms satisfying 2-IFS in expectation.
\end{customlem}
\begin{proof}
We first prove by cases that the mechanism satisfies 2-IFS in expectation. Recall that for the OFLP, the utilitarian welfare maximizing solution places the facility at 0 or 1.

\textbf{Case 1} ($\sum^n_{i=1}x_i=\frac{n}{2}$)

In this case, the facility locations of $0$ and $1$ are tied for maximizing utilitarian welfare, so placing the facility at either location with probability $\frac{1}{2}$ is optimal. The mechanism also satisfies 2-IFS in expectation as the expected distance of any agent from the facility is $\frac{1}{2}x_i+\frac{1}{2}(1-x_i)=\frac{1}{2}\geq \frac{1}{2n}$.

\textbf{Case 2} ($\sum_{i=1}^nx_i>\frac{n}{2}$)

In this case, the optimal facility location is $0$. It is trivial that placing the facility at $0$ with probability $1$ when $x_1\geq \frac{1}{2n}$ is optimal and satisfies 2-IFS, so we consider the subcase where $x_1<\frac{1}{2n}$.

We first show that the mechanism satisfies 2-IFS in this case. The expected distance between $x_1$ and the facility is 
\begin{align*}
&\frac{2n-1-2nx_1}{2n(1-2x_1)}x_1+\frac{1-2nx_1}{2n(1-2x_1)}(1-x_1)\\
&=\frac{2nx_1-x_1-2nx_1^2+1-x_1-2nx_1+2nx_1^2}{2n(1-2x_1)}\\
&=\frac{1}{2n}.
\end{align*}
2-IFS is therefore satisfied for agents $x_2,\dots,x_n$ as $\alpha\leq \frac{1}{2}$.

We now show for this case that the mechanism is optimal amongst all randomized mechanisms satisfying 2-IFS in expectation. By Lemma~\ref{lem: 01opt}, it suffices to only consider mechanisms that can only place the facility at $0$ or $1$.

Now consider the 2-IFS Randomized mechanism. If $\alpha$ were increased, the utilitarian welfare would decrease, and if $\alpha$ were decreased, $2-$IFS would be violated for $x_1$, hence $\alpha=\frac{1-2nx_1}{2n(1-2x_1)}$ is optimal and therefore the mechanism is optimal under the constraint of 2-IFS in expectation.

\textbf{Case 3} ($\sum_{i=1}^nx_i>\frac{n}{2}$)

This case is similar and symmetric to Case 2.
\end{proof}

\section{Proof of Theorem~\ref{thm:2IFSRand}}
\begin{customthm}{\ref{thm:2IFSRand}}
\textbf{2-IFS Randomized mechanism} has an approximation ratio of $\frac{12}{11}\approx 1.091$.
\end{customthm}
\begin{proof}
It suffices to consider the case where $\sum_{i=1}^nx_i>\frac{n}{2}$ and $x_1<\frac{1}{2n}$ as the case where $\sum_{i=1}^nx_i>\frac{n}{2}$ is symmetric, and the mechanism is optimal for the cases of $\sum^n_{i=1}x_i=\frac{n}{2}$, and $\sum_{i=1}^nx_i>\frac{n}{2}$ and $x_1\geq \frac{1}{2n}$.

The approximation ratio of the mechanism is
\allowdisplaybreaks
\begin{align*}
\max_{x\in X^n}\Bigg\{\frac{\phi^*(x)}{\phi_{2IFS} (x)}\Bigg\}&= \max_{x\in X^n}\frac{\sum_ix_i}{(1-\alpha)\sum_i x_i+\alpha \sum_i (1-x_i)}\\
&=\max_{x\in X^n}\frac{\sum_ix_i}{\frac{2n-1-2nx_1}{2n(1-2x_1)}\sum_ix_i+\frac{1-2nx_1}{2n(1-2x_1)}\sum_i(1-x_i)}\\
&=\max_{x\in X^n}\frac{\sum_ix_i}{\frac{1-2nx_1}{2(1-2x_1)}+\frac{n-1}{n(1-2x_1)}\sum_ix_i}\\
&=\max_{x\in X^n}\frac{1}{\frac{1-2nx_1}{2(1-2x_1)\sum_i x_i}+\frac{n-1}{n(1-2x_1)}}\\
&=\max_{x_1\in [0,\frac{1}{2n})}\frac{1}{\frac{1-2nx_1}{2(1-2x_1)(n-1+x_1)}+\frac{n-1}{n(1-2x_1)}}\\
&=\max_{x_1\in [0,\frac{1}{2n})}\frac{2n(1-2x_1)(n-1+x_1)}{n-2n^2x_1+2(n-1)(n-1+x_1)}.
\end{align*}
In the second last line, we substitute $x_2,\dots,x_n=1$ as the ratio is monotonic increasing with $\sum_ix_i$. Some optimization programming shows that when $n\geq 3$, the ratio is maximized when $x_1=0$. Substituting $x_1=0$ into the ratio gives

$$\frac{2n^2-2n}{2n^2-3n+2},$$
which has a derivative of $-\frac{2(n^2-4n+2)}{(2n^2-3n+2)^2}$. We therefore see our ratio has a maximum turning point at $x=2+\sqrt{2}$ and is monotonic decreasing after this point. For integer $n\geq 3$, the ratio is maximized when either $n=3$ or $n=4$, and the ratio is equal to $\frac{12}{11}\approx 1.091$ for both of these points.

We now consider the case where $n=2$ (and $x_1<\frac{1}{4}$). The ratio becomes
\begin{align*}
\max_{x_1\in [0,\frac{1}{4})}\frac{2-2x_1-4x_1^2}{2-3x_1}&=\frac{1}{9}(22-4\sqrt{10})\\
&\approx 1.039. 
\end{align*}
Therefore the mechanism's welfare ratio is maximized when $x_1=0$ and either $n=3$ or $n=4$, taking a value of $\frac{12}{11}\approx 1.091$.
\end{proof}

\section{Proof of Lemma~\ref{lem:2UFSRand}}
\begin{customlem}{\ref{lem:2IFSRand}}
\textbf{2-UFS Randomized mechanism} is optimal amongst all randomized mechanisms satisfying $2-$UFS in expectation.
\end{customlem}
\begin{proof}
Similar to the proof of Lemma~\ref{lem:2IFSRand}, we prove this statement by cases.

\textbf{Case 1 ($\sum_{i=1}^m |S_i|x_i=\frac{n}{2}$)}

When $\sum_{i=1}^m |S_i|x_i=\frac{n}{2}$, the facility locations of $0$ and $1$ are tied for maximizing utilitarian welfare, so placing the facility at either location with probability $\frac{1}{2}$ is optimal. The mechanism also satisfies $2-$UFS in expectation as the expected distance of any agent from the facility is $\frac{1}{2}x_i+\frac{1}{2}(1-x_i)=\frac{1}{2}$.

\textbf{Case 2 ($\sum_{i=1}^m |S_i|x_i>\frac{n}{2}$)}

Note that in this case the optimal facility location is $0$. We first show that the mechanism satisfies $2-$UFS in this case. For a group of agents $S_i$ at $x_i<\frac{1}{2}$, the expected distance from the facility is
\begin{align*}
\alpha(1-x_i)+x_i(1-\alpha)&\geq \alpha_i(1-x_i)+x_i(1-\alpha_i)\\
&=\frac{|S_i|-2nx_i}{2n(1-2x_i)}(1-x_i)+\frac{2n-2nx_i-|S_i|}{2n(1-2x_i)}x_i\\
&=\frac{|S_i|(1-2x_i)}{2n(1-2x_i)}=\frac{|S_i|}{2n}.
\end{align*}
By setting $|S_i|=n$, we also see that $\alpha\leq \frac{1}{2}$, hence $2-$UFS is satisfied for any group of agents at $x_j\geq \frac{1}{2}$.

We now show for this case that the mechanism is optimal amongst all randomized mechanisms satisfying 2-UFS in expectation. By Lemma~\ref{lem: 01opt}, it suffices to only consider mechanisms that can only place the facility at $0$ or $1$. Now under the $2-$UFS Randomized mechanism, increasing $\alpha$ would decrease the utilitarian welfare, and decreasing $\alpha$ would violate $2-$UFS for some group of agents. Hence the mechanism is optimal under the constraint of $2-$UFS in expectation.

\textbf{Case 3 ($\sum_{i=1}^m |S_i|x_i<\frac{n}{2}$)}

This case is similar and symmetric to Case 2.
\end{proof}

\section{Proof of Theorem~\ref{thm:2UFSRand}}
\begin{customthm}{\ref{thm:2UFSRand}}
\textbf{2-UFS Randomized mechanism} has an approximation ratio of $\frac{2}{7}(1+2\sqrt{2})\approx 1.09384$.
\end{customthm}
\begin{proof}
Without loss of generality we suppose that $\sum^n_{i=1}>\frac{n}{2}$. Let $j$ be the index of the group of agents corresponding to $\alpha$ (i.e. $\alpha_j=\max\{\alpha_1,\dots,\alpha_k\}$).

The approximation ratio of the mechanism is
\begin{align*}
\max_{x\in X^n}\Bigg\{\frac{\phi^*(x)}{\phi_{2UFS} (x)}\Bigg\}&= \max_{x\in X^n}\frac{\sum_ix_i}{(1-\alpha)\sum_i x_i+\alpha \sum_i (1-x_i)}\\
&=\max_{x\in X^n}\frac{\sum_ix_i}{\frac{2n-|S_j|-2nx_j}{2n(1-2x_j)}\sum_i x_i+\frac{|S_j|-2nx_j}{2n(1-2x_j)} \sum_i (1-x_i)}\\
&=\max_{x\in X^n}\frac{\sum_ix_i}{\frac{|S_j|-2nx_j}{2(1-2x_j)}+\frac{n-|S_j|}{n(1-2x_j)}\sum_ix_i}\\
&=\max_{x\in X^n}\frac{1}{\frac{|S_j|-2nx_j}{2(1-2x_j)\sum_ix_i}+\frac{n-|S_j|}{n(1-2x_j)}}\\
&=\max_{\substack{x_j\in [0,\frac{1}{2})\\|S_j|\in \{1,\dots,n-1\}}} \frac{1}{\frac{|S_j|-2nx_j}{2(1-2x_j)(n-|S_j|+|S_j|x_j)}+\frac{n-|S_j|}{n(1-2x_j)}}\\
&=\max_{\substack{x_j\in [0,\frac{1}{2})\\|S_j|\in \{1,\dots,n-1\}}}\scalebox{1.1}{$\frac{2n(1-2x_j)(n-|S_j|+|S_j|x_j)}{n|S_j|-2n^2x_j+2(n-|S_j|)(n-|S_j|+|S_j|x_j)}$}\\
&=\max_{\substack{x_j\in [0,\frac{1}{2})\\r\in (0,1)}}\frac{2(1-2x_j)(1-r+rx_j)}{r-2x_j+2(1-r)(1-r+rx_j)}.
\end{align*}
where $r=\frac{|S_j|}{n}$. Some optimization programming shows that this ratio is maximized at $r=1-\frac{1}{\sqrt{2}}$ and $x_j=0$, taking a value of $\frac{2}{7}(1+2\sqrt{2})$.
\end{proof}

 \section{Lemma~\ref{lemma:intersection} and Proof of Lemma~\ref{lemma:intersection}}
Here we introduce an auxiliary lemma which will be used in the proof of Theorem~\ref{exist2PF}.
\begin{lemma}\label{lemma:intersection}
				The intersection of $(I-B_i)$'s  is not empty.
			\end{lemma}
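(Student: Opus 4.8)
The plan is to show that the pairwise-disjoint open balls $B_1,\dots,B_{m+1}$ cannot cover the whole interval $I=[0,1]$, so that $\bigcap_{i}(I-B_i)=[0,1]\setminus\bigcup_{i}B_i$ is nonempty. I would first record the setup inherited from the proof of Theorem~\ref{exist2PF}: each ball is the open interval $B_i=(c_i-r_i,\,c_i+r_i)$ with radius $r_i=\frac{|S_i|}{2n}$ and center $c_i\in[0,1]$, the centers are ordered $c_1\le\cdots\le c_{m+1}$, the balls are pairwise disjoint, and the total length is $\sum_{i=1}^{m+1}2r_i=\sum_{i=1}^{m+1}\frac{|S_i|}{n}=1$.

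Next I would exhibit an explicit uncovered point rather than argue by measure, since the total length equals $|I|=1$ and a measure bound alone does not force a strict inequality. The natural candidate is $p:=c_1+r_1$, the right endpoint of the leftmost ball. Pairwise disjointness together with the ordering of the centers gives, for consecutive balls, $c_i+r_i\le c_{i+1}-r_{i+1}$; since $r_i>0$ this forces $c_i-r_i<c_{i+1}-r_{i+1}$, so the left endpoints $c_j-r_j$ are strictly increasing in $j$, and in particular $p=c_1+r_1\le c_2-r_2\le c_j-r_j$ for every $j\ge 2$.

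With this in hand I would verify the three facts that make $p$ work: (i) $p\in[0,1]$, because $p=c_1+r_1>0$ and $p\le c_2-r_2<c_2\le 1$ (here I use $m+1\ge 2$, which holds in the inductive step); (ii) $p\notin B_1$ because $B_1$ is open; and (iii) $p\notin B_j$ for each $j\ge 2$, because $p\le c_j-r_j$ places $p$ at or to the left of the left endpoint of the open interval $B_j$. Hence $p\in\bigcap_{i}(I-B_i)$, and the intersection is nonempty.

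The step I expect to need the most care is (iii): confirming that the \emph{single} boundary point $p$ escapes every ball simultaneously, which is precisely where disjointness-plus-ordering (not merely the length bound) is used. I would also dispose of the degenerate case $m+1=1$ separately---there the single open ball has length $2r_1\le 1$, so at least one endpoint of $[0,1]$ is uncovered---although this case does not arise in the inductive application. An alternative, slicker route is purely topological: $[0,1]$ is connected and cannot be written as a union of two or more disjoint nonempty relatively open sets, so disjoint open balls cannot cover it; I nonetheless prefer the explicit endpoint argument, because it simultaneously supplies the concrete feasible location used afterward in the proof of Theorem~\ref{exist2PF}.
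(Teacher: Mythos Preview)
Your proof is correct and takes a genuinely different route from the paper's. You import the pairwise-disjointness hypothesis from Case~1 of Theorem~\ref{exist2PF} (the only place the lemma is invoked) and then exhibit the single explicit point $p=c_1+r_1$, using the ordering of endpoints forced by disjointness; the verification that $p\le c_j-r_j$ for all $j\ge 2$ is clean and correct. The paper, by contrast, proves the lemma \emph{without} assuming the balls are disjoint: it argues by cases on whether some ball protrudes past an endpoint of $I$, and in the nontrivial case uses a measure contradiction (since at least one ball sticks out of $[0,1]$, the total length required to cover $[0,1]$ would strictly exceed $\sum_i |S_i|/n=1$). Your approach is shorter, more transparent, and supplies a concrete feasible location, at the price of a narrower hypothesis that nonetheless suffices for the application; the paper's argument is more general but requires a multi-level case split. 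Your topological aside---that a connected interval cannot be written as a disjoint union of two or more nonempty relatively open subsets---is also a valid one-line alternative under the disjointness assumption.
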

			\begin{proof}
				We consider two cases:

\noindent\textbf{Case 1:} Every open interval $B_i,  i\in[m]$ lies completely in interval $I=[0,1]$.  Hence, the boundary points of interval $I$ are not in each $B_i$'s  and therefore $\{0,1\}\in \cap_{i=1}^m(I-B_i)$.

\noindent\textbf{Case 2:} There exists an open interval which does not completely lie in the interval $I$. Assuming the points $x_1, x_2, ..., x_m$ are ordered from left to right, let $k$ be the smallest index such that $B_k$ does not completely lie in $I$. So either $0\in B_k$ or $1\in B_k$. Both end points $0$ and $1$ can  not  be in $B_k$, since in this case $|B_k|=\frac{|S_k|}{n}>1$. Let us assume  $0\in B_k$ and $1\notin B_k$, i.e, $1\in (I-B_k)$. Now if for every $i\in[m]$, we have $1\in (I-B_i)$, the lemma statement holds.  Now suppose there exists an open interval $B_j$ such that  $1\notin (I-B_j)$, i.e, $1\in B_j$.  Without loss of generality let $j$ be the largest index in which the aforementioned statement holds. We consider two subcases:

\textbf{Case 2a:} $B_k\cap B_j$ is not an empty set. In this case $[0,1]\subseteq B_k\cup B_j$ and $\frac{|S_k|}{n}+\frac{|S_j|}{n}>1$, which is impossible.

\textbf{Case 2b:} $B_k\cap B_j$ is an empty set. We consider the set  $A:=I-(B_k\cup B_j)$ and consider two cases:
						\begin{itemize}
							\item $A\subseteq \cup_{i=k+1}^{j-1} B_i$: in this case $[0,1]\subseteq \cup_{i=1}^{m} B_i$ and
							$\sum_{i=1}^m \frac{|S_i|}{n}>1$, which is impossible.
							\item  $A\nsubseteq \cup_{i=k+1}^{j-1} B_i$. Therefore there exists $y\in A$ such that $y\notin  \cup_{i=k+1}^{j-1} B_i$.  Also by the definition of $A$, $y\notin (B_k\cup B_j)$.  For every $1\leq i \leq k-1$, $B_i\subset B_k$ holds as $i$ is the smallest index such that its corresponding interval contains the point $0$. Similarly, we have for every $j+1 \leq i \leq m$, $B_i\subset B_j$. Hence $y\notin B_i$ for every $i\in[m]$, i.e, $y\in (I-B_i)$ for every $i\in[m]$ and $y\in \cap_{i=1}^m (I-B_i)$.
						\end{itemize}
			\end{proof}
\section{Proof of Theorem~\ref{exist2PF}}\label{app:exist2PF}
\begin{customthm}{\ref{exist2PF}}
A   2-PF solution always exists.
\end{customthm}
\begin{proof}
Suppose we have $m$ unique agent locations, i.e. $m$ groups of agents.  We prove the theorem  by induction on the number of groups. When all the agents have the same location, i.e, $m=1$, we can allocate the facility $y$ at the furthest boundary point. This trivially satisfies 2-PF.   Now, we assume for any $k$ groups of agents where $k\leq m$ that there exists a  2-PF solution, and we extend that for  $k=m+1$.

					Suppose we have $m+1$ groups of agents placed at points $c_1, c_2,\dots,c_{m+1}$, which are ordered such that $c_1\leq c_2\leq \dots \leq c_{m+1}$. Let $S_i$ denote the group of agents at $c_i$. We set an open interval $B_i$ with length $\frac{|S_i|}{n}$, around  each center $c_i$. To prove the theorem, we consider several cases.

\noindent \textbf{Case 1} (There is no overlap between any two intervals, i.e. $B_i\cap B_{j}=\emptyset \forall i,j\in\{1,2,\dots,m+1\}$). In Lemma~\ref{lemma:intersection}, we have shown that the intersection of $(I-B_i)$'s  is not empty, so there exists a point  $y$ outside of every interval $B_i$ where the facility can be placed. Here we define $r$ as the distance between centers $c_i$ and $c_j$. If  $r=0$, then 2-PF is satisfied since we have $d(y, c_i)\geq \frac{|S_i|}{2n}$. Otherwise, since all the intervals are disjoint, $r$ is larger than the sum of the `radii' of the intervals $B_i$ and $B_j$, so   $\frac{|S_i|}{2n}+\frac{|S_j|}{2n}-r<0$. Hence, $$d(y,c_k)\geq \frac{|S_i|}{2n}+\frac{|S_j|}{2n}-r \quad \text{ for }k=i,j,$$ and thus $y$ satisfies the 2-PF inequality.
						
\noindent \textbf{Case 2} (There exists at least two overlapping intervals, i.e., $\exists~~ i, j\in \{1,2,\dots,m+1\}$ s.t. $B_i\cap B_j\neq\emptyset$).
						
							\textbf{Case 2a} (There exists one interval that is contained within another interval, i.e., $\exists~~ i, j\in \{1,2,\dots,m+1\}~ \text{s.t.} B_i \subseteq B_j$).  
							Without loss of generality, we assume $B_{2}\subseteq B_{1}$. 
							Now we place all the agents at $c_2$ and $c_{1}$ together at $c_1$. We set a new interval $B^{'}$ centered at $c_1$ with length $\frac{|S_1|}{n}+\frac{|S_{2}|}{n}$, which is the summation of the lengths of $B_1$ and $B_{2}$. 
							Now, we have $m$ new groups of agents located at the centers $c_1, c_3,\dots,c_{m+1}$. By induction, a  2-PF solution $y$ exists. We claim that $y$ is also a  2-PF solution for $m+1$ groups of agents located $c_1, c_2, \dots, c_{m+1}$. Since $y$ is a  2-PF solution for the $m$ groups of agents located at $c_1, c_3,..., c_{m+1}$,  $y$ lies outside of every interval, satisfying the 2-PF inequality for $r=0$. 
							
							Since $y$ is outside the interval  $B'$,  $y$ is outside of intervals $B_1$ and $B_2$, therefore we have $d(y,c_1)\geq \frac{|S_1|}{2n}$ and  $d(y,c_2)\geq \frac{|S_2|}{2n}$. So $y$ satisfies 2-PF inequalities for $r=0$. We now set $r$ to be the distance between two centers $c_1$ and $c_2$ (i.e., $r=c_2-c_1$). Since $y$ is outside the interval $B'$ we have $d(y,c_1)\geq \frac{|S_1|}{2n}+\frac{|S_{2}|}{2n}\geq \frac{|S_1|}{2n}+\frac{|S_{2}|}{2n}-r $, so $c_1$ satisfies the PF inequality. To show that the agents at $c_2$ also satisfy the PF inequality, we consider $2$ subcases:
			
							\begin{itemize}
								\item \textbf{Case 2ai} (Interval $B_2$ does not contain center $c_1$ (see Figure \ref{fig9})). We denote $a:=(c_2-c_1)-\frac{|S_2|}{2n}$ as the distance between $c_1$ and the left boundary of $B_2$, $b:=\frac{|S_2|}{n}$ as the length of $B_2$ and $c:=c_1+\frac{|S_1|}{2n}-c_2-\frac{|S_2|}{2n}$ as the distance between the right boundaries of $B_1$ and $B_2$. In this case, we have 
								\begin{align}\label{eq1}
									\frac{|S_1|}{2n}+\frac{|S_2|}{2n}-r=a+b+c+\frac{b}{2}-a-\frac{b}{2}=b+c.
								\end{align}
								
								Since $y$ is outside of interval $B'$, we have $d(y,c_2)\geq b+c$ and by (\ref{eq1}), $d(y,c_2)\geq \frac{|S_1|}{2n}+\frac{|S_2|}{2n}-r$ and thus the 2-PF inequality is satisfied.
								\begin{figure}[htp]
									\centering
									\begin{tikzpicture}
			\draw[] (-4,0) -- (4,0); 
			\draw [thick] (-4,0.3)  -- (-4,-0.3);  
			\draw [thick] (4,0.3)  -- (4,-0.3); 
		
			\node at (-4, -0.5) {$0$}; 
			\node at (4, -0.5) {$1$}; 
			\node at (-0.5, -0.25) {$c_1$}; 
			\node at (-0.5,0) [circle,fill,inner sep=1.5pt]{};
			
			\node at (1, -0.25) {$c_2$}; 
			\node at (1,0) [circle,fill,inner sep=1.5pt]{};
			
			\node at (1, 0.5) {$B_2$}; 
		\draw[] (0.5,0.25) -- (1.5,0.25);
		\draw [thick] (0.5,0.1)  -- (0.5,0.4);  
			\draw [thick] (1.5,0.1)  -- (1.5,0.4);

			\node at (-0.5, 1) {$B_1$}; 
			\draw[] (-3,0.75) -- (2,0.75);
		\draw [thick] (-3,0.6)  -- (-3,0.9); 
			\draw [thick] (2,0.6)  -- (2,0.9); 
			
			\node at (-0.5, 1.5) {$B'$}; 
			\draw[] (-3.5,1.25) -- (2.5,1.25);
			\draw [thick] (-3.5,1.1)  -- (-3.5,1.4); 
			\draw [thick] (2.5,1.1)  -- (2.5,1.4);
			
			\draw [densely dotted, thick] (-0.5,0) -- (-0.5,-1);
			\draw [densely dotted, thick] (0.5,0.25) -- (0.5,-1);
			\draw[>=triangle 45, <->] (-0.5,-1) -- (0.5,-1);
			\node at (0, -0.75) {$a$};
			
			\draw [densely dotted, thick] (1.5,0.25) -- (1.5,-1);
			\draw[>=triangle 45, <->] (0.5,-1) -- (1.5,-1);
			\node at (1, -0.75) {$b$};
			
			\draw [densely dotted, thick] (2,0.75) -- (2,-1);
			\draw[>=triangle 45, <->] (1.5,-1) -- (2,-1);
			\node at (1.75, -0.75) {$c$};
			
			\draw [densely dotted, thick] (2.5,1.25) -- (2.5,-1);
			\draw[>=triangle 45, <->] (2,-1) -- (2.5,-1);
			\node at (2.25, -0.65) {$\frac{b}{2}$};
		\end{tikzpicture}
									\caption[Example illustrating Case 2ai in the proof of Theorem~\ref{exist2PF}.]{$B_1$ and $B'$ are open intervals centered at $c_1$ and $B_2$ is an open interval centered at $c_2$. The variables $a$, $b$, and $c$ denote the distances between the boundaries of the intervals and the centers.}
									\label{fig9}
								\end{figure}
								\item \textbf{Case 2aii} (Interval $B_2$ contains center $c_1$ (see Figure \ref{fig10})). In this case, the range $r$ is smaller than the length of $B_2$. We denote $a:=c_2-c_1$ as the distance between the two centers, $b:=\frac{|S_2|}{n}$ as the length of $B_2$ and $c:=c_1+\frac{|S_1|}{2n}-c_2-\frac{|S_2|}{2n}$ as the distance between the right boundaries of $B_1$ and $B_2$. We have
								\begin{align}\label{eq2}
									\frac{|S_1|}{2n}+\frac{|S_2|}{2n}-r=a+\frac{b}{2}+c+\frac{b}{2}-a=b+c.
								\end{align}
								Since $y$ is outside of interval $B'$, we have $d(y,c_2)\geq b+c$ and by (\ref{eq2}), $d(y,c_2)\geq \frac{|S_1|}{2n}+\frac{|S_2|}{2n}-r$ and thus the 2-PF inequality is satisfied.
								\begin{figure}[htp]
									\centering
									\begin{tikzpicture}
			\draw[] (-4,0) -- (4,0); 
			\draw [thick] (-4,0.3)  -- (-4,-0.3);  
			\draw [thick] (4,0.3)  -- (4,-0.3); 
		
			\node at (-4, -0.5) {$0$}; 
			\node at (4, -0.5) {$1$}; 
			\node at (0, -0.25) {$c_1$}; 
			\node at (0,0) [circle,fill,inner sep=1.5pt]{};
			
			\node at (0.5, -0.25) {$c_2$}; 
			\node at (0.5,0) [circle,fill,inner sep=1.5pt]{};
			
			\node at (0.5, 0.5) {$B_2$}; 
		\draw[] (-0.25,0.25) -- (1.25,0.25);
		\draw [thick] (-0.25,0.1)  -- (-0.25,0.4);  
			\draw [thick] (1.25,0.1)  -- (1.25,0.4);

			\node at (0, 1) {$B_1$}; 
			\draw[] (-2.5,0.75) -- (2.5,0.75);
		\draw [thick] (-2.5,0.6)  -- (-2.5,0.9); 
			\draw [thick] (2.5,0.6)  -- (2.5,0.9); 
			
			\node at (0, 1.5) {$B'$}; 
			\draw[] (-3.25,1.25) -- (3.25,1.25);
			\draw [thick] (-3.25,1.1)  -- (-3.25,1.4); 
			\draw [thick] (3.25,1.1)  -- (3.25,1.4);
			
			\draw [densely dotted, thick] (0,0) -- (0,-1);
			\draw [densely dotted, thick] (0.5,0) -- (0.5,-1);
			\draw[>=triangle 45, <->] (0,-1) -- (0.5,-1);
			\node at (0.25, -0.75) {$a$};
			
			\draw [densely dotted, thick] (1.25,0.25) -- (1.25,-1);
			\draw[>=triangle 45, <->] (0.5,-1) -- (1.25,-1);
			\node at (0.875, -0.65) {$\frac{b}{2}$};
			
			\draw [densely dotted, thick] (2.5,0.75) -- (2.5,-1);
			\draw[>=triangle 45, <->] (1.25,-1) -- (2.5,-1);
			\node at (1.875, -0.75) {$c$};
			
			\draw [densely dotted, thick] (3.25,1.25) -- (3.25,-1);
			\draw[>=triangle 45, <->] (2.5,-1) -- (3.25,-1);
			\node at (2.875, -0.65) {$\frac{b}{2}$};
		\end{tikzpicture}
									\caption[Example illustrating Case 2aii in the proof of Theorem~\ref{exist2PF}.]{$B_1$ and $B'$ are open intervals centered at $c_1$ and $B_2$ is an open interval centered at $c_2$. The variables $a$, $b$, and $c$ denote the distances between the boundaries of the intervals and the centers.}
									\label{fig10}
								\end{figure}
							\end{itemize}

\textbf{Case 2b} (There exist two overlapping intervals, i.e. $\exists~~ i, j\in \{1,2,\dots,m+1\}$ s.t. $B_i\cap B_j\neq \emptyset$, but they are not contained within each other.) Without loss of generality, we assume $B_{1}\cap B_{2}\neq \emptyset$. Consider the line segment that connects the left border of $B_1$ to the right border of  interval $B_{2}$ and denote the midpoint of this line as $c'_1:=\frac{1}{2}(c_1-\frac{|S_1|}{2n}+c_2+\frac{|S_2|}{2n})$. We move all the agents at $c_1$ and $c_{2}$ to point $c'_1$ and  set a new interval $B'$ centered at $c'_1$ with length $\frac{|S_1|}{n}+\frac{|S_{2}|}{n}$, which is the summation of the lengths of $B_1$ and $B_{2}$. Similar to the previous subcase, by our inductive assumption there exists a  2-PF solution $y$ for the new  $m$ groups of agents  placed at  $c'_1, c_3,..., c_{m}, c_{m+1}$. Now we claim that $y$ is a  2-PF solution for $c_1, c_2, ..., c_{m+1}$ as well. Since $y$ is outside of the interval  $B'$,  $y$ is also outside of the intervals $B_1$ and $B_2$, therefore $d(y,c_1)\geq \frac{|S_1|}{2n}$ and  $d(y,c_2)\geq \frac{|S_2|}{2n}$. So $y$ satisfies the 2-PF inequalities for $r=0$. We now set $r$ to be the distance between two centers $c_1$ and $c_2$ (i.e., $r=c_2-c_1$) and consider 3 cases which depend on whether the intersections of the intervals also contain a center. In each case we show that $\frac{|S_1|}{2n}+\frac{|S_2|}{2n}-r$ is equal to the length of intersection of $B_1$ and $B_2$ which denote as $|intersection|$ (and hence the 2-PF inequality is satisfied).

								\begin{itemize}
									\item \textbf{Case 2bi} (The intersection between   $B_1$ and $B_2$ does not contain centers $c_1$ and $c_2$ (see Figure \ref{fig6})). We denote $a:=c_2-\frac{|S_2|}{2n}-c_1$ as the distance between $c_1$ and the left boundary of $B_2$, $b:=c_1+\frac{|S_1|}{2n}-(c_2-\frac{|S_2|}{2n})$ as the length of the intersection, and $c:=c_2-\frac{|S_1|}{2n}-c_1$ as the distance between $c_2$ and the right boundary of $B_1$. Here the length of the intersection is $\frac{|S_1|}{2n}+\frac{|S_2|}{2n}-r=a+b+b+c-a-b-c=b$.
									 Since $y$ is placed outside of interval $B'$, it is outside of intervals $B_1$ and $B_2$. We have $d(y,c_i)\geq \frac{|S_i|}{2n}\geq b=\frac{|S_1|}{2}+\frac{|S_2|}{2}-r$  and  thus 2-PF is satisfied.
					\begin{figure}[htp]
										\centering
										\begin{tikzpicture}
			\draw[] (-4,0) -- (4,0); 
			\draw [thick] (-4,0.3)  -- (-4,-0.3);  
			\draw [thick] (4,0.3)  -- (4,-0.3); 
		
			\node at (-4, -0.5) {$0$}; 
			\node at (4, -0.5) {$1$}; 
			\node at (-1, -0.25) {$c_1$}; 
			\node at (-1,0) [circle,fill,inner sep=1.5pt]{};
			
			\node at (1, -0.25) {$c_2$}; 
			\node at (1,0) [circle,fill,inner sep=1.5pt]{};
			
			\node at (1, 0.5) {$B_2$}; 
		\draw[] (0,0.25) -- (2,0.25);
		\draw [thick] (0,0.1)  -- (0,0.4);  
			\draw [thick] (2,0.1)  -- (2,0.4);

			\node at (-1, 1) {$B_1$}; 
			\draw[] (-2.5,0.75) -- (0.5,0.75);
		\draw [thick] (-2.5,0.6)  -- (-2.5,0.9); 
			\draw [thick] (0.5,0.6)  -- (0.5,0.9); 
			
			\node at (-0.25, 1.5) {$B'$}; 
			\draw[] (-2.75,1.25) -- (2.25,1.25);
			\draw [thick] (-2.75,1.1)  -- (-2.75,1.4); 
			\draw [thick] (2.25,1.1)  -- (2.25,1.4);
			
			\draw [densely dotted, thick] (-1,0) -- (-1,-1);
			\draw [densely dotted, thick] (0,0.2) -- (0,-1);
			\draw[>=triangle 45, <->] (-1,-1) -- (0,-1);
			\node at (-0.5, -0.75) {$a$};
			
			\draw [densely dotted, thick] (0.5,0.75) -- (0.5,-1);
			\draw[>=triangle 45, <->] (0,-1) -- (0.5,-1);
			\node at (0.25, -0.75) {$b$};
			
			\draw [densely dotted, thick] (1,0) -- (1,-1);
			\draw[>=triangle 45, <->] (0.5,-1) -- (1,-1);
			\node at (0.75, -0.75) {$c$};
			
			\draw [densely dotted, thick] (2.25,1.25) -- (2.25,-1);
			\draw [densely dotted, thick] (2,0.25) -- (2,-1);
			\draw[>=triangle 45, <->] (2,-1) -- (2.25,-1);
			\node at (2.125, -0.65) {$\frac{b}{2}$};
		\end{tikzpicture}
										\caption[Example illustrating Case 2bi in the proof of Theorem~\ref{exist2PF}.]{$B_1$, $B_2$ and $B'$ are open intervals centered at $c_1$, $c_2$ and $c'$ respectively. The terms $a$, $b$, and $c$ denote the distances between the boundaries of the intervals and the centers.}
										\label{fig6}
									\end{figure}
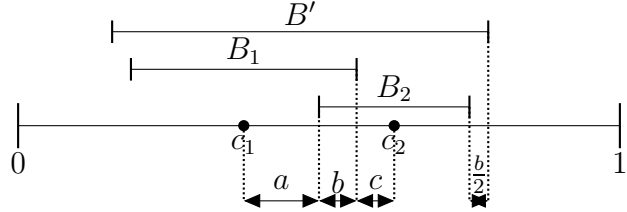

									\item \textbf{Case 2bii} (The intersection between $B_1$ and $B_2$ contains one of the centers $c_1$ and $c_2$ (see Figure \ref{fig7})). Without loss of generality suppose the contained center is $c_2$. We denote $a:=c_2-\frac{|S_2|}{2n}-c_1$ as the distance between $c_1$ and the left boundary of $B_2$, $b:=\frac{|S_2|}{2n}$ as the distance from $c_2$ to an endpoint of $B_2$, and $c:=c_2-\frac{|S_1|}{2n}-c_1$ as the distance between $c_2$ and the right boundary of $B_1$. Here the length of the intersection is $\frac{|S_1|}{2n}+\frac{|S_2|}{2n}-r=a+b+c+b-a-b=b+c$.
									 
									Since $y$ is outside of $B_1$, we have
									\begin{align*}
									d(y,c_1)&\geq \frac{|S_1|}{2n}= a+b+c\geq b+c\\
									&=\frac{|S_1|}{2n}+\frac{|S_2|}{2n}-r.
									\end{align*}
									Now we show $d(y,c_2)$ satisfies the 2-PF inequality. Recall $B'$ is an interval centered at the midpoint of the left boundary of $B_1$ and right boundary of $B_2$, with radius $\frac{|S_1|}{2n}+\frac{|S_2|}{2n}$. Also, $y$ lies outside of $B'$, so the right boundary of $B_2$ has a distance of length $\frac{b+c}{2}$ from the right boundary of $B'$. We therefore have
									$$d(y,c_2)\geq \frac{|S_2|}{2n}+\frac{b+c}{2}=\frac{b}{2}+\frac{b+c}{2}$$ 
									Also, since  $b\geq c$, we have
									$$b+\frac{b+c}{2}\geq b+c=\frac{|S_1|}{2n}+\frac{|S_2|}{2n}-r.$$

										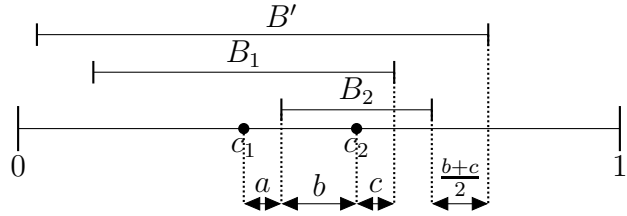
\begin{figure}[htp]
										\centering
										\begin{tikzpicture}
			\draw[] (-4,0) -- (4,0); 
			\draw [thick] (-4,0.3)  -- (-4,-0.3);  
			\draw [thick] (4,0.3)  -- (4,-0.3); 
		
			\node at (-4, -0.5) {$0$}; 
			\node at (4, -0.5) {$1$}; 

				\node at (0.5, -0.25) {$c_2$}; 
			\node at (0.5,0) [circle,fill,inner sep=1.5pt]{};
			\node at (0.5, 0.5) {$B_2$}; 
		\draw[] (-0.5,0.25) -- (1.5,0.25);
		\draw [thick] (-0.5,0.1)  -- (-0.5,0.4);  
			\draw [thick] (1.5,0.1)  -- (1.5,0.4);

		\node at (-1, -0.25) {$c_1$}; 
			\node at (-1,0) [circle,fill,inner sep=1.5pt]{};
			\node at (-1, 1) {$B_1$}; 
			\draw[] (-3,0.75) -- (1,0.75);
		\draw [thick] (-3,0.6)  -- (-3,0.9); 
			\draw [thick] (1,0.6)  -- (1,0.9); 
			
			\node at (-0.5, 1.5) {$B'$}; 
			\draw[] (-3.75,1.25) -- (2.25,1.25);
			\draw [thick] (-3.75,1.1)  -- (-3.75,1.4); 
			\draw [thick] (2.25,1.1)  -- (2.25,1.4);
			
			\draw [densely dotted, thick] (-1,0) -- (-1,-1);
			\draw [densely dotted, thick] (-0.5,0.2) -- (-0.5,-1);
			\draw[>=triangle 45, <->] (-1,-1) -- (-0.5,-1);
			\node at (-0.75, -0.75) {$a$};
			
			\draw [densely dotted, thick] (0.5,0) -- (0.5,-1);
			\draw[>=triangle 45, <->] (-0.5,-1) -- (0.5,-1);
			\node at (0, -0.75) {$b$};
			
			\draw [densely dotted, thick] (1,0.75) -- (1,-1);
			\draw[>=triangle 45, <->] (0.5,-1) -- (1,-1);
			\node at (0.75, -0.75) {$c$};
			
			\draw [densely dotted, thick] (2.25,1.25) -- (2.25,-1);
			\draw [densely dotted, thick] (1.5,0.25) -- (1.5,-1);
			\draw[>=triangle 45, <->] (1.5,-1) -- (2.25,-1);
			\node at (1.875, -0.65) {$\frac{b+c}{2}$};
		\end{tikzpicture}
										\caption[Example illustrating Case 2bii in the proof of Theorem~\ref{exist2PF}.]{$B_1$, $B_2$ and $B'$ are open intervals centered at $c_1$, $c_2$ and $c'$ respectively. The terms $a$, $b$, and $c$ denote the distances between the boundaries of the intervals and the centers.}
										\label{fig7}
									\end{figure}
									
									\item \textbf{Case 2biii} (The intersection between $B_1$ and $B_2$ contains both centers  $c_1$ and $c_2$ (see Figure \ref{fig8})). We denote $a:=c_2-\frac{|S_2|}{2n}-c_1$ as the distance between $c_1$ and the left boundary of $B_2$, $b:=c_2-c_1$ as the distance between the two centers, and $c:=c_2-\frac{|S_1|}{2n}-c_1$ as the distance between $c_2$ and the right boundary of $B_1$. The length of the intersection is
									$\frac{|S_1|}{2n}+\frac{|S_2|}{2n}-r=b+c+b+a-b=a+b+c$.
									In this case, $2(b+c)$ is the length of interval $B_1$ and $2(a+b)$ is the length of interval $B_2$. Since the intervals $B_1$ and $B_2$ are not contained within each other, we have $b+c> a$ and $a+b> c$ (see Figure \ref{fig8}). Like before, $B'$ is an interval with a center at the midpoint of the left boundary of $B_1$ and right boundary of $B_2$ and length $\frac{|S_1|}{n}+\frac{|S_2|}{n}$, and $y$ lies outside of this interval, so the boundaries of each interval $B_1$ and $B_2$ have a distance of length $\frac{a+b+c}{2}$ with the boundary of $B'$. We therefore have 
									
									\begin{align*}
									d(y,c_1)&\geq b+c+\frac{a+b+c}{2}\\
									&\geq b+c+a=\frac{|S_1|}{2n}+\frac{|S_2|}{2n}-r.
									\end{align*}
									
									Similarly, we can show that $y$ satisfies the 2-PF inequality for center $c_2$. 
									
\begin{align*}									
									d(y,c_2)&\geq a+b+\frac{a+b+c}{2}\\
									&\geq a+b+c=\frac{|S_1|}{2n}+\frac{|S_2|}{2n}-r.
									\end{align*}
							
Hence the facility placement of $y$ satisfies the 2-PF inequalities.

									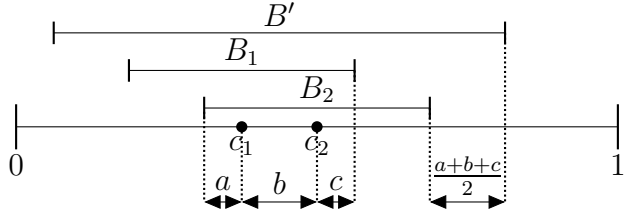
\begin{figure}[htp]
										\centering
										\begin{tikzpicture}
			\draw[] (-4,0) -- (4,0); 
			\draw [thick] (-4,0.3)  -- (-4,-0.3);  
			\draw [thick] (4,0.3)  -- (4,-0.3); 
		
			\node at (-4, -0.5) {$0$}; 
			\node at (4, -0.5) {$1$}; 

				\node at (0, -0.25) {$c_2$}; 
			\node at (0,0) [circle,fill,inner sep=1.5pt]{};
			\node at (0, 0.5) {$B_2$}; 
		\draw[] (-1.5,0.25) -- (1.5,0.25);
		\draw [thick] (-1.5,0.1)  -- (-1.5,0.4);  
			\draw [thick] (1.5,0.1)  -- (1.5,0.4);

		\node at (-1, -0.25) {$c_1$}; 
			\node at (-1,0) [circle,fill,inner sep=1.5pt]{};
			\node at (-1, 1) {$B_1$}; 
			\draw[] (-2.5,0.75) -- (0.5,0.75);
		\draw [thick] (-2.5,0.6)  -- (-2.5,0.9); 
			\draw [thick] (0.5,0.6)  -- (0.5,0.9); 
			
			\node at (-0.5, 1.5) {$B'$}; 
			\draw[] (-3.5,1.25) -- (2.5,1.25);
			\draw [thick] (-3.5,1.1)  -- (-3.5,1.4); 
			\draw [thick] (2.5,1.1)  -- (2.5,1.4);
			
			\draw [densely dotted, thick] (-1,0) -- (-1,-1);
			\draw [densely dotted, thick] (-1.5,0.2) -- (-1.5,-1);
			\draw[>=triangle 45, <->] (-1,-1) -- (-1.5,-1);
			\node at (-1.25, -0.75) {$a$};
			
			\draw [densely dotted, thick] (0,0) -- (0,-1);
			\draw[>=triangle 45, <->] (-1,-1) -- (0,-1);
			\node at (-0.5, -0.75) {$b$};
			
			\draw [densely dotted, thick] (0.5,0.75) -- (0.5,-1);
			\draw[>=triangle 45, <->] (0.5,-1) -- (0,-1);
			\node at (0.25, -0.75) {$c$};
			
			\draw [densely dotted, thick] (2.5,1.25) -- (2.5,-1);
			\draw [densely dotted, thick] (1.5,0.25) -- (1.5,-1);
			\draw[>=triangle 45, <->] (1.5,-1) -- (2.5,-1);
			\node at (2, -0.65) {$\frac{a+b+c}{2}$};
		\end{tikzpicture}
										\caption[Example illustrating Case 2biii in the proof of Theorem~\ref{exist2PF}.]{$B_1$, $B_2$ and $B'$ are open intervals centered at $c_1$, $c_2$ and $c'$ respectively. The terms $a$, $b$, and $c$ denote the distances between the boundaries of the intervals and the centers.}
										\label{fig8}
									\end{figure}
								\end{itemize}
					By exhaustion of cases, we have proven the inductive statement, and thus a facility placement that satisfies 2-PF always exists.
\end{proof}

\section{Proof of Theorem~\ref{thm: hybrid}}
\begin{customthm}{\ref{thm: hybrid}}
Under the hybrid model, a H-UFS solution always exists.
\end{customthm}
\begin{proof}
Let $n_C$ denote the number of classic agents and $n_O$ denote the number of obnoxious agents (such that $n_C+n_O=n$). Consider an arbitrary agent location profile with $m$ unique classic agent locations $x_1,\dots,x_m$, and suppose they are ordered such that $x_1\leq \dots x_m$. We first focus on the region of feasible facility locations pertaining to the classic agents' distance constraints. For $i\in [m]$, let $S_{C_i}$ denote the group of classic agents at location $x_i$, and construct a closed interval with center $x_i$ and radius $1-\frac{|S_{C_i}|}{n}$: $B_i=\{z|d(z,x_i)\leq 1-\frac{|S_{C_i}|}{n}\}$. By definition, the intersection of the closed intervals $\cap_{i\in [m]}B_i$ denotes the (continuous) region of feasible facility locations pertaining to the classic agents' distance constraints. From \citep{ALLW21}, we know this region is non-empty.

We show that the length of the feasible region $\cap_{i\in [m]}B_i$ is at least $\frac{n-n_C}{n}$ by iteratively transforming the agent location profile to one where all classic agents are at $0$ or $1$, and showing that each transformation weakly decreases the feasible region length. For each ball $B_i$, there is a (possibly empty) left interval of infeasible points $L_i$ and a (possibly empty) right interval of infeasible points $R_i$ such that $B_i=[0,1]-L_i-R_i$. We denote the left infeasible interval of points as $L_i=[0,x_i-(1-\frac{|S_{C_i}|}{n}))$ if $x_i-(1-\frac{|S_{C_i}|}{n})>0$ and as $L_i=\emptyset$ otherwise. Similarly, we denote the right infeasible interval of points as $R_i=(x_i+(1-\frac{|S_{C_i}|}{n}),1]$. The union of left infeasible intervals is therefore $\cup_{i\in [m]}L_i=[0,\max_{i\in [m]} x_i-(1-\frac{|S_{C_i}|}{n}))$ if there exists a nonempty $L_i$, and is empty otherwise. The union of right infeasible intervals is $\cup_{i\in [m]}R_i=(\min_{i\in [m]}x_i+(1-\frac{|S_{C_i}|}{n}),1]$ if there exists a nonempty $R_i$, and is empty otherwise. Therefore the length of the feasible region is $$\min\left\{\min_{i\in [m]}\left\{x_i+(1-\frac{|S_{C_i}|}{n})\right\},1\right\} - \max\left\{0,\max_{i\in [m]} \left\{x_i-(1-\frac{|S_{C_i}|}{n})\right\}\right\}.$$

By symmetry, we suppose without loss of generality that $$\min_{i\in [m]}\left\{x_i+(1-\frac{|S_{C_i}|}{n})\right\}\leq 1.$$ We consider the following transformation. Let $j$ correspond to the agent location with the largest right infeasible interval, i.e. $$j:=\arg\min_{i\in [m]}\left\{x_i+(1-\frac{|S_{C_i}|}{n})\right\},$$ and we have $x_j<1$. If $x_j>0$, move all agents at $x_j$ to $0$. We show that this transformation weakly decreases the feasible region length: in $\min_{i\in [m]}\left\{x_i+(1-\frac{|S_{C_i}|}{n})\right\}$, $x_i$ decreases to $0$ and $|S_{C_i}|$ weakly increases (it strictly increases if there are already agents at $0$). Furthermore, the $\max_{i\in [m]} \left\{x_i-(1-\frac{|S_{C_i}|}{n})\right\}$ term is unaffected unless the shifted group of agents originally corresponded to the maximum value, in which case the $x_i$ term decreases by at most the length of the shift, and the $(1-\frac{|S_{C_i}|}{n})$ term weakly decreases as $|S_{C_i}|$ weakly increases. Therefore this transformation weakly decreases the feasible region length. Now the agent location with the largest right infeasible interval is $0$, so our feasible region length is $$(1-\frac{|S_{C_{j'}}|}{n}) - \max\left\{0,\max_{i\in [m]} \left\{x_i-(1-\frac{|S_{C_i}|}{n})\right\}\right\},$$
where $j'$ corresponds to the location $x_{j'}=0$. If the right boundary of the largest left infeasible interval $\max_{i\in [m]} \left\{x_i-(1-\frac{|S_{C_i}|}{n})\right\}$ corresponds to the agents at $x_{j'}=0$, then it is at most $0$, and the feasible region length is $(1-\frac{|S_{C_{j'}}|}{n})$ which is at least $\frac{n-n_C}{n}$. We now suppose that this is not the case.

We have $$\max_{i\in [m]} \left\{x_i-(1-\frac{|S_{C_i}|}{n})\right\} \leq \max_{i\in [m]} \left\{1-(1-\frac{|S_{C_i}|}{n})\right\},$$ so the feasible region length is at least $$(1-\frac{|S_{C_{j'}}|}{n}) - \max\left\{0,\max_{i\in [m]} \left\{\frac{|S_{C_i}|}{n}\right\}\right\}.$$ Since $\arg\max_{i\in [m]} \left\{\frac{|S_{C_i}|}{n}\right\}\neq j'$, the feasible region length is at least $$1-\frac{|S_{C_{j'}}|+|S_{C_k}|}{n}\geq \frac{n-n_C}{n}$$ where $k\neq j'$.

We now know the length of the (continuous) feasible region corresponding to the classic agents is at least $\frac{n-n_C}{n}=\frac{n_O}{n}$. We now consider the obnoxious agents. Suppose we construct an open interval of radius $\frac{|S_{O_i}|}{2n}$ around each group of obnoxious agents at the same location. Any location within one of these open intervals is infeasible. The sum of interval lengths is $\frac{n_O}{n}$, so using a similar argument to that in the proof of Proposition~\ref{optUFS}, we see that a feasible solution with respect to both the classic agents' and obnoxious agents' distance inequalities always exists.
\end{proof}
\end{document}